\newtheorem{theorem}{Theorem}
\newtheorem{lemma}{Lemma}
\newtheorem{proposition}{Proposition}
\newtheorem{remark}{Remark}
\newtheorem{assumption}{Assumption}
\DeclareMathOperator{\EX}{\mathbb{E}}
\DeclareMathOperator{\PR}{\mathbb{P}}
\DeclareMathOperator{\Ind}{\mathbbm{1}}
\title{While Stability Lasts: A Stochastic Model of Non-Custodial Stablecoins
\footnote{This paper is based on work supported by NSF CAREER award \#1653354 and the Bloomberg Fellowship. We thank  Dominik Harz, Georgios Konstantopoulos, the anonymous referees for valuable feedback that helped improve the paper.
}
}
\author{Ariah Klages-Mundt\thanks{Cornell University, Center for Applied Mathematics, Ithaca, NY, 14853, USA, email: {\tt aak228@cornell.edu}.} \ \ \ \ \ \ \ \
Andreea Minca\thanks{Cornell University, School of Operations Research and Information Engineering, Ithaca, NY, 14850, USA, email: {\tt acm299@cornell.edu}.}  \ \ \ \ \ \ \ \
}
\date{June 8, 2022}
\begin{document}

\maketitle

\begin{abstract}

The `Black Thursday' crisis in cryptocurrency markets demonstrated deleveraging risks in over-collateralized non-custodial stablecoins.
We develop a stochastic model that helps explain deleveraging crises in these over-collateralized systems. In our model, the stablecoin supply is decided by speculators who optimize the profitability of a leveraged position while incorporating the forward-looking cost of collateral liquidations, which involves the endogenous price of the stablecoin. We formally characterize regimes that are interpreted as stable and unstable for the stablecoin. We prove bounds on quadratic variation and the probability of large deviations in the stable domain and we demonstrate distinctly greater price variance in the unstable domain.
We identify a deflationary deleveraging spiral by means of a submartingale.  
These deleveraging spirals, which resemble short squeezes, lead to faster collateral drawdown (and potential shortfalls) and are accompanied by higher price variance, as experienced on Black Thursday.
 We conclude by discussing non-custodial ways in which the issues raised in this paper can be mitigated.

\end{abstract}

\section{Introduction}

On March 12, 2020, called `Black Thursday' during the COVID-19 market panic, cryptocurrency prices dropped $\sim 50\%$ in the day.\footnote{This occurred while writing up the first draft of this paper.} This was accompanied by cascading liquidations on cryptocurrency leverage platforms, including both centralized platforms like exchanges and new decentralized finance (DeFi) platforms that facilitate on-chain over-collateralized lending. Among many events from this day, the story of Maker's stablecoin Dai stands out, which entered a deflationary deleveraging spiral (akin to a short squeeze on Dai). This triggered high volatility of the `stable' asset and a breakdown of the collateral liquidation process. Due to market illiquidity exacerbated by network congestion, some collateral liquidations were performed at near-zero prices. As a result, the system developed a collateral shortfall, which prompted an emergency response and had to be made up by selling new equity-like tokens to recapitalize \cite{maker_spiral}.

During this time, there was a huge demand for Dai. It became a much riskier and more volatile asset, yet traded at a high premium and fetched lending rates in the mid double digits. Leveraged speculators, who must repurchase Dai in order to deleverage their positions, were exhausting Dai liquidity, driving up the price of Dai and subsequently increasing the cost of future deleveraging (we discuss some further causes that led to market illiquidity in developing the model in the next section). These speculators began to realize that, in these conditions, they face concrete risk that a debt reduction of \$1 could cost a significant premium. Eventually, a new exogenously stable asset--the USD-backed custodial stablecoin USDC--had to be brought in as a new collateral type to stabilize the system \cite{maker_usdc}.

\subsection{Stablecoins}
A stablecoin is a cryptocurrency with added economic structure that aims to stabilize price/purchasing power. For a recent overview of stablecoins, see \cite{klagesmundt2020stablecoins,bullmann19} and the references therein. Stablecoins are meant to bootstrap price stability into cryptocurrencies as a stop-gap measure for adoption. They also serve as mechanics to avoid fiat to
crypto conversions, which are rather costly. This is in fact a key motivation
for their use, hence the system can remain ‘fully decentralized’. 

Stablecoins are either \emph{custodial} and rely on custodians to hold reserve assets off-chain (e.g., \$1 per coin) or \emph{non-custodial} and set up a risk transfer market through  smart contracts, which are programs that execute on the blockchain computer. Custodial stablecoins include Tether, USDC, and the proposed Diem/Libra and can often be viewed analogously to narrow banks or money market funds in terms of underlying structure. Alternatively, non-custodial stablecoins aim to retain the property of reduced counterparty/censorship risk. Figure \ref{fig:daivscustodial} illustrates the market share of the main stablecoins. The largest three are custodial stablecoins (USDT, USDC, BUSD) whereas only one non-custodial stablecoin, Dai, is among the top four stablecoins in terms of market share.
\begin{figure}
	\centering
	\includegraphics[width=0.5\textwidth]{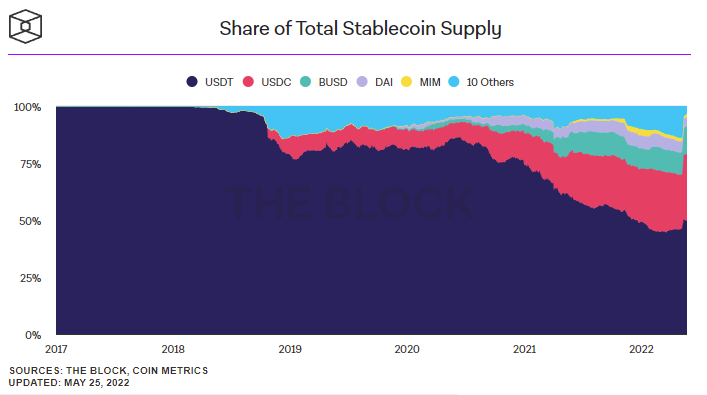}
	\caption{Stablecoin supply.}\label{fig:daivscustodial}
\end{figure}

Non-custodial stablecoins have a wide design space, which is captured in the taxonomy of \cite{klagesmundt2020stablecoins}. A key dimension in this design space is the source of value backing the stablecoin. This ranges from exogenous asset backing, where assets have value unrelated to the system, to endogenous asset backing, where assets are like `system equity' and have value that is circular with the system itself. This latter class, which is often ill-defined as `algorithmic', often blurs the line with being effectively unbacked, as the value of endogenous assets can spiral to zero if confidence is broken.
This latter type includes the Terra UST stablecoin that collapsed in May 2022 \cite{bloomberg2022Terra}.
These stablecoins that are fully or partly endogenously backed can largely be understood using generalizations of currency peg models, such as \cite{morris98}, for which the risks of currency runs and speculative attacks are well studied. These existing tools help to understand these systems and how they (usually) fail, considering that the `economies' around these stablecoins are quite fragile.

In contrast, non-custodial stablecoins that are backed by exogenous assets have greater similarities to non-custodial forms of the current monetary system of commercial bank money, as discussed in \cite{klagesmundt2020stablecoins}. In this paper, we focus on new risks that arise in these types of stablecoins, which require further study.
Stablecoins of this type
transfer risk from stablecoin holders to speculators, who hold leveraged collateralized positions in cryptocurrencies.\footnote{`Leverage' means that speculators holds $>1\times$ initial assets but face new liabilities.} 
The speculator  represents any actor (usually automated) who has an incentive to issue the coin.\footnote{They are part  a form of `keepers' in the MakerDAO protocol.} Such actor issues the stablecoin continuously by locking in collateral. The incentive to issue (or redeem) coin is captured by the speculators'   return expectations including potential liquidation costs and the endogenous stablecoin price.

The collateralization structure is different for non-custodial stablecoins than for the custodial ones. 
 It is similar to a tranche structure, in which stablecoins act like senior debt while speculators are akin the buyers of the junior tranche of a CDO.   In contrast to the classical case,  the `CDO' issue is dynamic and by anyone in the system.  We refer the reader to the Dai  white paper \cite{dai_white}. The white paper describes how {\it anyone could generate Dai using that system}
by leveraging Ethereum (ETH) as collateral through  smart contracts known as Collateralized Debt Positions (CDPs).

A dynamic and automatic deleveraging process balances positions if collateral value deviates too much, as determined by a price feed.  Two major risks in non-custodial stablecoins emerge around market structure collapse and price feed and governance manipulation. In this paper, we focus completely on the market structure risk, assuming that price feeds, governance, and the underlying blockchain perform as expected.\footnote{Note, however, that blockchain congestion can serve to decrease elasticity in the market structure, which we discuss in the model construction.}

In addition to the COVID-19 panic, the effects of these risks are also witnessed in bitUSD, Steem Dollars, and NuBits, which suffered serious depegging events in 2018 \cite{km18}, and Terra and Synthetix, which suffered price feed manipulation attacks in 2019 (\cite{snx1}, \cite{snx2}, \cite{terra}).  Similar manipulations were also observed on the bZx lending protocol in 2020 (\cite{bzx1}, \cite{bzx2}). Many similar examples of mechanism failures and exploitations occurred through the rest of 2020 (see \cite{klagesmundt2020stablecoins, werner2021sok}). Stablecoins currently serve a central role in an increasingly complex decentralized finance environment, involving composability with other DeFi platforms. In addition, many other blockchain assets, such as synthetic and cross-chain assets, rely on the basic mechanism behind stablecoins, which we explore further in the discussion section.

\subsection{This paper}
In this paper, we construct a stochastic model of over-collateralized non-custodial stablecoins, with an endogenous price (Section~\ref{sec:model}). The system is based on a speculator who solves an optimization problem accounting for potential returns from leverage as well as potential liquidation costs. The speculator decides the supply of stablecoins secured by its collateral position while considering demand for the stablecoin.
Our interest in non-custodial stablecoins lies in understanding deleveraging spirals when the price and stablecoin issue is endogenous and the collateral management is decentralized.  In this case, a deleveraging spiral results from the intertwining of a short squeeze in the stablecoin price and a liquidation spiral of the collateral.
This is in contrast to potential liquidation crises in custodial coins such as Tether and USDC or `algorithmic' stablecoins such as Terra UST (which coincidentally also had a partial custodial reserve). Custodial stablecoins maintain stability through arbitrageurs who mint and redeem for assets with the custodian. Unbacked or partially backed stablecoins like Terra UST instead are subject to death spiral risks from runs and speculative attacks due to insolvency.
In both of these cases, classical models for money market funds and currency pegs apply well. \footnote{The recent collapse of the peg in TerraUSD, see e.g., \url{https://www.wsj.com/articles/cryptocurrency-terrausd-falls-below-fixed-value-triggering-selloff-11652122461} can be modeled similarly to the run on money market funds in the financial crisis, \cite{kacperczyk2013safe}, or currency peg attacks, \cite{morris98}. In particular, restoring the peg relies on open market operations by an entity running the reserve fund, such as Luna Foundation in the case of the TerraUSD stablecoins.
}
We focus on the non-custodial variant involving exogenous over-collateralization, whose risks are yet to be analyzed rigorously.

We derive fundamental results about non-custodial stablecoins in our model, including economic limits to the speculator's behavior, in Section~\ref{sec:foundations}. In Section~\ref{sec:domains} we develop the primary results of the paper: we analytically characterize regions in which the stablecoin can be intepreted as stable (Theorems~\ref{result:zprime_doob_ineq} and \ref{result:zprime_qv_pr}) and unstable (Theorems~\ref{result:var_approx} and \ref{result:unstable_var}), and a region in which a deleveraging spiral occurs that can cause liquidity problems in a crisis (Theorem~\ref{result:unstable_submg}). These deleveraging spirals, which resemble short squeezes, are counterintuitive as they lead to stablecoin price appreciation during times of shock, whereas we might otherwise expect prices to depreciate given the riskier state of the system. Further, this appreciation is detrimental: it leads to faster collateral drawdown, and potentially shortfalls, as more collateral is required to fulfill liquidations and is accompanied by higher price variance.

The context for our analytical results is a model with a single speculator facing imperfectly elastic demand for the stablecoin; however, many of the methods can extend to generalized settings. In Section~\ref{sec:perfect_stability}, we consider idealized settings that lead to `perfect' stability properties.

We discuss in Section~\ref{sec:discussion} a seeming contradiction that arises: while the goal is to make decentralized non-custodial stablecoins, these can only be fully stabilized from deleveraging effects by adding uncorrelated assets, which are currently centralized/custodial. This is a consequence of our instability results in Section~\ref{sec:domains} and, as introduced in Section~\ref{sec:perfect_stability}, the absence of a stable region in idealized settings when underlying asset markets deviate from a submartingale setting. We suggest an alternative: a buffer to dampen deleveraging effects without directly incorporating custodial assets. This buffer works by separating those who are willing to have stablecoins swapped to custodial assets in a crisis (in return for an ongoing yield from option buyers) from those who require full decentralization.

Non-custodial stablecoins such as Dai, Rai, and Liquity have since moved in directions such as this to overcome the issues we illustrate in this paper.

\subsection{Relation to Prior Work}
While there is a rich literature on related financial instruments, there is limited research directly applicable to stablecoins.
 \cite{cao18} are the first to point out the analogy of stablecoins to Collateralized Loan Obligations, and contribute to the securitization literature by proposing designs in the decentralized context. They use option pricing theory and  PDE methods for valuation of their new design features.
 Our work is complimentary: we analyze the stability over time of these new securities.

A simple stablecoin model is developed in \cite{km19} and introduces the concept of deleveraging spirals, which later materialized on Black Thursday. This paper supersedes that model and its results. Whereas the model in \cite{km19} doesn't directly account for the actual repurchase price in deleveraging--instead delegating to a risk constraint in the optimization--we set up a stochastic process model in this paper that includes forward-looking liquidation prices in the speculator's optimization. Our analytical results supersede \cite{km19} in the following ways:
\begin{itemize}
\item We formally characterize a deleveraging spiral as a submartingale, whereas their paper lacks a formal treatment.
\item We give stability results in terms of probabilities of large deviations and quadratic variation of the price process.
\item An unstable region is conjectured in their paper, backed by simulation. We formally prove distinct price variances in stable and unstable regions.
\end{itemize}

\cite{evans2019ratings} analyzes credit risk stemming from collateral type in Maker's stablecoin Dai. \cite{terra2, celo} model stability in Terra and Celo stablecoins under Brownian motion scenarios in the absence of endogenous market feedback effects that motivate this paper. 
\cite{klagesmundt2020stablecoins, huo2022decentralized} discuss models of governance and oracle attack surfaces  for non-custodial stablecoins.  More generally in the context of decentralized finance, \cite{werner2021sok} treat the governance extractable value.

\cite{detrio15} discusses stablecoin concepts based on monetary policy and hedging strategies and introduces methods for enhancing liquidity using combinatorial auctions and automated market makers. \cite{lipton18} studied custodial stablecoins and considers the use of hedging techniques to build an asset-backed cryptocurrency. \cite{gudgeon20} explores the robustness of decentralized lending protocols to shocks and liquidations. \cite{chitra20} explores competition between decentralized lending yields and staking yields in proof-of-stake blockchains. However, these do not model a stablecoin mechanism with endogenous price behavior.

\cite{harz19} designs a reputation system for crypto-economic protocols to reduce collateral requirements. This does not readily apply to understanding stablecoin collaterals, however, as it requires identification of `good' behavior and, additionally, stablecoin speculators face leveraged exchange rate bets and will have reason to provide greater than minimal collateral. This additionally motivates our model to understand how liquidation effects affect speculator decisions.

Stablecoins share similarities with currency peg models, e.g., \cite{morris98, guimaraes03}. In these models, the government plays a mechanical market making role to seek stability and is not a player in the game.  In contrast, in non-custodial stablecoins, decentralized speculators take the market making role. They issue/withdraw stablecoins to optimize profits and are not committed to maintaining a peg. In a stablecoin, the best we can hope is that the protocol is well-designed and that the peg is maintained with high probability through incentives. A fully strategic model would be a complicated (and likely intractable) dynamic game.

There are also similarities with collateral and debt security markets and repurchase agreements. These have also experienced unprecedented stress in the COVID-19 market panic, during which even 30-year US government bonds--normally highly liquid--have been difficult to trade \cite{ft_bonds}. Such debt securities differ from stablecoins in that dollars are borrowed against the collateral as opposed to a new instrument, like a stablecoin, with an endogenous price. These debt security markets do, however, demonstrate that liquidity in the underlying markets can dry up in crises even in highly liquid markets. Stablecoins face this liquidity risk in the underlying market as well as an endogenous price effect on the stable asset.

The problem resembles classical market microstructure models (e.g., \cite{ohara95}); it is a multi-period system with agents subject to leverage constraints that take recurring actions according to their objectives. In contrast, the stablecoin setting has no exogenously stable asset that is efficiently and instantly available. Instead, agents make decisions that endogenously affect the price of the `stable' asset and affect future incentives.

\section{Model}\label{sec:model}

Our model is very closely related to Maker's stablecoin Dai \cite{dai_white} as well as newer stablecoins by UMA, Reflexer, and Liquity.  Crucially, these stablecoins are backed by over-collateralization in assets that have value exogenous to the stablecoin system as opposed to assets whose value is circularly derived from the stablecoin itself.
There are two primary feedback effects to consider in these stablecoins: (1) feedback of deleveraging on an endogenous stablecoin price, and (2) feedback of deleveraging on collateral price. We focus on the former. The latter can be described using existing deleveraging models (e.g., this is considered in the stablecoin context in \cite{gudgeon20}). We later discuss how our model can be adapted to incorporate these endogenous effects on collateral in Section~\ref{sec:discussion}.

The model contains a stablecoin market and two assets: a risky asset (ETH)\footnote{We designate the risky collateral asset as ETH for simplicity. In principle, it could be another cryptoasset or even outside of a cryptocurrency setting.} with exogenous price $X_t$ and an ETH-collateralized stablecoin STBL with endogenous price $Z_t$. The stablecoin market connects stablecoin holders, who seek stability, and speculators, who make leveraged bets backing STBL. The STBL protocol requires the STBL supply to be over-collateralized in ETH by collateral factor~$\beta$.

In order to focus on the effects of speculator decisions in this paper, we simplify the stablecoin holder demand as exogenous with constant unit price-elasticity. This is equivalent to a fixed STBL demand $\mathcal D$ in dollar terms, though not quantity. 
Note that there is no direct redemption process for stablecoin holders aside from a global settlement/shutdown of the system at par value, which can be triggered by a governance process (see \cite{dai_white}).

From a practical perspective, STBL demand is not elastic, at least short-term, even if it were in principle elastic longer-term. A significant portion of stablecoin supplies are locked in other applications, like lending protocols and lotteries. These applications promise (in some sense) value safety in over-collateralization, but don't guarantee liquidity to withdraw. Additionally, Ethereum transactions cannot be executed in parallel; during volatile times, transactions can be delayed due to congestion, causing timely trades (especially involving transfer to/from centralized exchanges) to fail. This occurs even if, in principle, there is liquidity in these markets. On the other hand, longer-term demand elasticity will naturally depend on the presence of good uncorrelated alternatives.\footnote{From another perspective, a strategic stablecoin holder would take into account expectations about speculator issuance and ability to maintain the price target and expectations about a global settlement. This is outside of our model as formulated.}

The speculator has ETH locked in the system and decides the STBL supply, which represents a liability against its locked collateral. At the start of step $t$, there are $\mathcal L_{t-1}$ STBL coins in supply. The speculator holds $N_{t-1}$ ETH and chooses to change the STBL supply by $\Delta_t = \mathcal L_t - \mathcal L_{t-1}$. If $\Delta_t > 0$, the speculator sells new STBL on the market for ETH at the market clearing price $Z_t$. This increases the ETH position $N_t$. If $\Delta_t < 0$, the speculator buys STBL on the market, reducing $N_t$.  We denote by $\bar N_t$ the speculator's locked collateral. Informed by limitations of actual implementations, we formalize the process $(\bar N_t)$ based on $(N_t)$.\footnote{In principle, the speculator's decision could be extended to deciding $\bar N_t$ in addition to $\Delta_t$. Note however that this would make most sense if the speculator's position is further extended to include multiple assets.} The speculator decides $\mathcal L_t$ by optimizing expected profitability in the next period based on expectations about ETH returns and the cost of collateral liquidation if the collateral factor is breached.

In this way, the speculator myopically optimize for the next period. A simplification of our model is a one-off game, which hosts a single period of decision-making before the system is settled in the final period.
In this case, the myopic setup is parallel to major single period games in finance (e.g., \cite{morris98, guimaraes03, diamond1983bank, dybvig1991capital, parlatore2016fragility}).
Our results make significant contributions over the existing state of research on stablecoins, describing different system behavior depending on initial conditions in one-off games.
The more general multi-period form of our model then describes a dynamic process composed of a series of one-off games with changing initial conditions.
Our results also apply more generally to this multi-period setting, where they are stronger than simply a series of the one-off version of the results.
Both of these contribute to stablecoin modeling as there are not better candidates for multi-period models at this point, although we later discuss ideas toward adapting the model into a multi-period control problem.

Given supply and demand, the STBL market clears by setting demand equal to supply in dollar terms. This yields the clearing price $Z_t = \frac{\mathcal D}{\mathcal L_t}$.\footnote{We can consider constant elasticity STBL demand functions that depend on $Z_t$.  Letting $q$ be the quantity  of STBL demanded at \$1 price and assuming a constant price elasticity $-\gamma<0$, the dollar-denominated demand function is
$\mathcal D(Z_t) = Z_t Q(Z_t) = Z_t q/(1-\gamma (1-Z_t)).$ for $\gamma=1$ we obtain the case of constant dollar denominated demand. In clearing the market, the generalized price process is a linear transformation
$Z_t = \frac{1}{\gamma}\left( \frac{q}{\mathcal L_t} -1\right) + 1$.
} This clearing equation is related to the quantity theory of money and is similar to the clearing in automated market makers \cite{angeris20} but processed in batch.

\subsection{Formal setup}
We formalize the model as follows. We define the following \emph{parameters}:
\begin{itemize}
\item $\mathcal D$ = STBL demand in dollar value (equivalent to constant unit price-elasticity)
\item $\beta$ = collateral factor for ETH
\item $\alpha \geq 1$ = liquidation cost multiple (reflecting the fee paid to liquidators)
\end{itemize}
The system is composed of the following \emph{processes}:
\begin{itemize}
\item $(X_t)_{t\geq 0}$ = exogenous ETH price process in dollars.
\item $\mathcal L_t$ = stablecoin supply at time $t$ that obeys
	$$\mathcal L_t = \zeta + L_{t-1} + \Delta_t,$$
	where $L_{t-1}> 0$ is the speculator's STBL liabilities from the previous period, $\Delta_t$ is the speculator's change in liabilities at time $t$ (such that $L_t = L_{t-1} +\Delta_t$), and $\zeta$ is a real number that modifies circulating supply
\item $N_t$ = speculator's ETH position at time $t$, including collateral
\item $\bar N_t$ = speculator's locked ETH collateral at time $t$ (and start of time $t+1$)
\item $(Y_t)_{t\geq 0}$ = speculator's value process
\item $Z_t = \frac{\mathcal D}{\mathcal L_t}$ defines the STBL price process.
\end{itemize}
We take $(\mathcal{F}_t)_{t\geq 0}$ to be the natural filtration where $\mathcal{F}_t = \sigma(X_0,\ldots,X_t, \mathcal{L}_0,\ldots, \mathcal{L}_t)$. The system is driven by the process $(X_t)$ subject to the speculator's decisions $\Delta_t$ (equivalently $L_t$ given $L_{t-1}$).

The parameter $\zeta$ modifies circulating STBL supply. This could come from an outside amount of STBL not created by the speculator (a positive adjustment), or some STBL could essentially be locked (a negative adjustment). As formulated, our model applies to a system that can be described with monopolistic agents, or where agents behave similarly (have similar beliefs). With $\zeta > 0$, the model becomes similar to having heterogeneous agents. Whereas, in general to do this, we would have to consider both heterogeneous beliefs about the future as well as different $\zeta$s, which together would be intractable, $\zeta$ provides a way to aggregate these various effects in a simpler model. In particular, we suggest a positive $\zeta$ may make numerical results more applicable to real settings.

To simplify the exposition of analytical results going forward, we simplify to the case that $\beta=\frac{3}{2}$ (the collateral factor used in Maker's Dai stablecoin) and $\zeta=0$. \emph{Note that under these conditions, and in the remainder of the paper, we use $L_t$ and $\mathcal L_t$ interchangeably}.


\subsection{Collateral constraint}
The collateral constraint requires the collateral locked in the system to be $\geq$ a factor of $\beta$ times by liabilities. It applies in both a pre-decision and post-decision sense. The \emph{pre-decision} version determines when a liquidation occurs: a liquidation is triggered at the start of time $t$ if the following condition is breached
$$\bar N_{t-1} X_t \geq \beta L_{t-1}.$$
The \emph{post-decision} version constrains the speculator's decision-making, limiting $L_t$ such that
$$\bar N_t X_t \geq \beta L_t.$$
Note that the nominal stablecoin price (\$1) is used in these constraints instead of the real price because these are encoded by the protocol's smart contracts as one of the means toward incentivizing the \$1 target.\footnote{Conceptually, outside of this model, this has the effect of upper bounding the stablecoin price at $\beta$ as an arbitrage opportunity would be created otherwise.}
The collateral factor could be dynamic, in the sense that the governance of the protocol could vote to change its value. 
Proposals to change the collateral factor are in practice infrequent, see \url{https://makerdao.world/en/learn/vaults/liquidation/}, so we consider here a constant factor. We leave it for future research to model the governance's decision.
\subsection{Speculator decides $\Delta_t$ taking into account real liability value}
We assume the speculator is risk-neutral and optimizes its next-period expected value, taking into account expectations around liquidations. In particular, this means that the speculator takes into account the real cost of deleveraging its liabilities in the event it needs to reduce its position in the next time step and doesn't simply measure the nominal value of liabilities.
Its value at time $t$ is its nominal equity at the start of period (pre-decision), adjusted by a liquidation effect that describes how the real value deviates from nominal in the event that the speculator needs to deleverage. That is
$$Y_t = N_{t-1} X_t - L_{t-1} - \text{liquidation effect}.$$
A liquidation effect is outlined in a following subsection.

Note that $N_t$ is a function of the decision variable $\Delta_t$, and recall $L_t = L_{t-1} + \Delta_t$. The speculator decides $\Delta_t$ (equivalently $L_t$ given $L_{t-1}$) to optimize next-period expected value subject to the post-decision collateral constraint in the current period:
$$\begin{aligned}
\max_{\Delta_t} \hspace{1cm} & \EX [Y_{t+1}|\mathcal{F}_t] \\
\text{s.t.} \hspace{1cm} & \bar N_t X_t \geq \beta L_t.
\end{aligned}$$

Thus the speculator accounts for the expected deviation of real from nominal liability value. If the expected liquidation effect is small ---for instance if the probability that the speculator needs to deleverage next period is small--- then the speculator treats $L_t$ near face value in the optimization for a mix of short- and long-term reasons.
As long as speculators can survive  liquidation, they can expect to dispose of liabilities near face value longer-term when markets are liquid. The protocol smart contracts also add a precedent for treating liabilities at face value: it is encoded in this way in the collateral constraint and in the event of global settlement of the system, which is intended to be be triggered should the system diverge too significantly from the intended structure (and which would occur in the final period of the one-off version).

\subsection{Speculator's collateral at stake}
We consider that the speculator decides on a level of participation as a component of their entire portfolio. This takes place in a separate optimization problem outside the scope of this model (although we discuss how it could be extended later). The speculator's level of participation amounts to the initial collateral at the start of our model--for simplicity, we say this also includes any amount they have decided beforehand may be accessible to top up collateral later.
The speculator's behavior in our model amounts to maximizing the expected value of this component of their portfolio.
On the other hand, if this were the speculator's entire portfolio, we note that the story may be different--e.g., they may want to maximize expected log values as in the Kelly criterion and would probably choose to participate differently, as is common in problems of leverage if the whole portfolio is at stake.

We take the speculator's collateral at stake at the start of time $t+1$ to be $\bar N_t = N_{t-1}$ minus any collateral liquidation that happens at time $t$. This is consistent with the speculator's collateral being blocked: it cannot be used to repurchase STBL in the same step. This means that the speculator (1) has an outside amount (or is able to borrow) to repurchase STBL if $\Delta_t<0$ and then later repays this from unlocking collateral and (2) can't post proceeds of new STBL issuance ($\Delta_t>0$) as collateral within the same step.

While there are settings in which we could alternatively use $N_t$ as the collateral at stake at the start of $t+1$ (e.g., if flash loans are used), the choice of $N_{t-1}$ additionally leads to a simpler exposition of results as it decouples the collateral from the decision variable. 

\subsection{Collateral liquidation mechanics}
In time $t+1$, the pre-decision collateral constraint is $\bar N_t X_{t+1} \geq \beta L_t$. If this is breached, then the speculator's collateral is partially liquidated, if possible, to repurchase an amount $\ell_{t+1}>0$ of STBL.
In real protocols, liquidation amounts are automated by an algorithm and will inherently be first order estimates of the amount needed to rebalance the debt position as the algorithm will not be able to know the actual market structure and price impact.
For instance, liquidations in Maker and Compound release a certain amount of debt to be repaid,  and unlock a corresponding amount of collateral that an arbitrager can use to rebalance the debt position (both decided algorithmically in  \cite{comp_white} and  \cite{dai_white}, and the latter decided through auction in Maker's newer version \cite{dai_white_new}).
Consistent with these protocols, we set the amount of debt that needs to be repaid in a liquidation to be $\ell_{t+1}$ of STBL such that post liquidation we have $\bar N_t X_{t+1} - \ell_{t+1} = \beta(L_t-\ell_{t+1})$. With $\beta=\frac{3}{2}$, this amount is
$$\ell_{t+1} = \frac{\beta L_t - \bar N_t X_{t+1}}{\beta-1} = 3L_t - 2 \bar N_t X_{t+1}.$$
We interpret this as the protocol's encoded estimate, using nominal stablecoin price, of how much collateral it should liquidate in an `auction' to deleverage, similar to Maker. Our model simplifies the auction to settle on the endogenous stablecoin market.
Other liquidation algorithms could also be considered and would lead to similar qualitative effects. 

In a time step with a liquidation, the liquidation forces an upper bound $\Delta_{t+1} \leq -\ell_{t+1}$ as this amount would, in the real protocol, be unlocked for arbitrageurs. But the speculator could choose to repurchase more STBL to further reduce leverage. The repurchase of $\ell_{t+1}$ through the liquidation mechanism is subject to a liquidation cost multiple $\alpha\geq 1$--i.e., the effective repurchase price is $\alpha\times$ the STBL market price. The purpose of this fee is that, in real stablecoin systems,  liquidations are performed by arbitrageurs who capture this fee.

Notice that the STBL market price will itself be affected by liquidations. Depending on market impact, which the algorithms can only observe sequentially, the liquidation may be insufficient to fully rebalance the debt position back to the collateral constraint. If this occurs, then the issue will be taken into account with further liquidations in subsequent time steps. The parameter $\beta$ in real systems is intended to provide safety in such events so that the system does not become under-collateralized.

Two thresholds are relevant at time $t$ for calculating expectations of a liquidation effect at time $t+1$. These are non-time-dependent functions of the random variable $L_t$:
$$b(L_t) := \frac{\beta L_t}{ \bar N_t}$$
$$c(L_t) := \frac{1}{2\bar N_t} \Big(\sqrt{\alpha^2 \mathcal{D}^2 + 4\alpha\mathcal{D}L_t + L_t^2} - \alpha\mathcal{D} + L_t\Big).$$
The threshold $b(L_t)$ gives the highest $t+1$ ETH price that breaches the collateral constraint while the threshold $c(L_t)$ gives the $t+1$ ETH price that consumes the entirety of the speculator's locked collateral in a liquidation repurchase due to the effect on STBL repurchase price.\footnote{The probability of a large deviation like this is not zero. For instance, it could represent the possibility of a contentious hard fork that splits ETH value.} Below this level, the speculator cannot meet the collateral demand even by liquidating everything. The formulation of $b(L_t)$ follows directly from the collateral constraint; the formulation of $c(L_t)$ follows from equating the repurchase cost of liquidation $\ell_{t+1}$ to $\bar N_t X_{t+1}$ and solving for $X_{t+1}$.

If $c(L_t) \leq X_{t+1} \leq b(L_t)$, then the liquidation effect is $\ell_{t+1} - \ell_{t+1} \frac{\mathcal{D}}{\mathcal{L}_t - \ell_{t+1}}\alpha$. This represents a repurchase of $\ell_{t+1}$ STBL (reducing collateral by the repurchase price $\frac{\mathcal D}{\mathcal L_t - \ell_{t+1}}$ with liquidation fee factor $\alpha$) and subsequent reduction of the speculator's liabilities by the $\ell_{t+1}$. The variables $\mathcal L_{t+1}$ and $N_t$ are affected similarly.\footnote{Note that $N_t$ is affected because this is the locked collateral at time $t+1$. Alternatively, working with $N_{t+1}$ as locked collateral, we would update $N_{t+1}$.} If $X_{t+1}< c(L_t)$, then the speculator's collateral position is zeroed out in the liquidation. We define the corresponding events
$$A_t = \{ X_{t+1} \geq b(L_t) \}$$
$$B_t = \{ c(L_t) \leq X_{t+1} < b(L_t) \}.$$

\subsection{System of random variables}

Putting all the pieces together, we have the following system of random variables driven by the random process $(X_t)$:

$$\begin{aligned}
X_t & \\
Y_{t+1} &= \frac{\Delta_t \mathcal D X_{t+1}}{\mathcal L_t X_t} + (\bar N_t X_{t+1} - L_t)\Ind_{A_t \cup B_t} + \Ind_{B_t}(3L_t - 2\bar N_t X_{t+1} )\Big(1- \frac{\alpha\mathcal D }{2 \bar N_t X_{t+1} - 2L_t}\Big) \\
\Delta_t^* &= \begin{cases}
\min\Big( \arg \max_{\Delta_t} \EX [Y_{t+1}| \mathcal{F}_t], \frac{\bar N_{t-1}X_{t}}{\beta}-L_{t-1}\Big) & \text{ if } X_t \geq \frac{\beta L_{t-1}}{\bar N_{t-1}} \\
\min \Big( \arg \max_{\Delta_t} \EX [Y_{t+1}| \mathcal{F}_t], - (3\mathcal{L}_{t-1} - 2 \bar N_{t-1} X_t)\Big) & \text{ if } X_t < \frac{\beta L_{t-1}}{\bar N_{t-1}}
\end{cases} \\
    \mathcal{L}_t &= \mathcal{L}_{t-1} + \Delta_t^* \\
N_t &= \begin{cases}
N_{t-1} + \Delta_t^* \frac{Z_t}{X_t} & \text{ if } X_t \geq \frac{\beta L_{t-1}}{\bar N_{t-1}} \\
N_{t-1} + \frac{Z_t}{X_t}(\Delta_t + (1-\alpha)(3\mathcal{L}_{t-1}-2 \bar N_{t-1}X_t)) & \text{ if } X_t < \frac{\beta L_{t-1}}{\bar N_{t-1}}
\end{cases}\\
\bar N_t &= \begin{cases}
N_{t-1} & \text{ if } X_t \geq \frac{\beta L_{t-1}}{\bar N_{t-1}} \\
N_{t-1} - \alpha (3\mathcal L_{t-1} - 2 \bar N_{t-1} X_t) & \text{ if } X_t < \frac{\beta L_{t-1}}{\bar N_{t-1}}
\end{cases} \\
    Z_t &= \frac{\mathcal{D}}{\mathcal{L}_t}.
\end{aligned}$$

In the above, the first case for $\Delta_t^*$ comes from maximizing expected value subject to the post-decision collateral constraint while the second cases for $\Delta_t^*$, $N_t$, and $\bar N_t$ apply the liquidation effects that occur during time $t$.

\section{Foundational Results}\label{sec:foundations}

In this section, we derive foundational results about the model that we will use to prove the primary results of the paper in the next section.

\subsection{Assumptions}
\label{sec:assumption}
We begin by defining the assumptions we will use in the rest of the paper.

\begin{assumption}
$(X_t)$ is a submartingale with respect to $(\mathcal F_t)$ and is independent from $(\mathcal L_t)$ and $(N_t)$.
\end{assumption}

A submartingale is a stochastic process in which the expected future value, conditioned on all prior values, is greater than or equal to the current value. The submartingale assumption can be relaxed somewhat while preserving some results. It is useful, though not necessarily critical, in our proof of problem concavity. However, the results are most meaningful in a setting like a submartingale, which always provides a fundamental reason that a speculator might desire leverage. In such a setting, it is \emph{conceivable} that the stablecoin could maintain a dollar peg, whereas in long periods of negative expected returns, the stablecoin concept falls apart as no speculators will want to participate. As noted in the introduction, such a deviation from the submartingale setting appears to have occurred in March 2020. 

\begin{assumption}
Each $X_{t+1}$ has a conditional probability distribution given $\mathcal F_t$, which admits a density function $f_t$ that is continuous almost surely.
\end{assumption}

Equivalently, we consider the process in terms of returns $R_t$, where $X_{t+1} = X_t R_{t+1}$. Conditioned on $\mathcal F_t$, then $R_{t+1}$ admits density function $g_t$. In the i.i.d. setting for $(R_t)$, the time dependence can be dropped. For most results, we do not need to assume i.i.d.

\begin{assumption}
There is some upper bound $r \geq \sup_n \EX[R_n | \mathcal F_{n-1}]$.
\end{assumption}

The next assumption is needed to interchange derivative and integration operators. It also translates to an upper bound on $\mathcal L_t$ and a lower bound on $N_{t-1}$.
\begin{assumption}
There is some upper bound $u \geq c(L_t)$ for all $L_t$.
\end{assumption}

The next assumption ensures that the STBL price is bounded away from infinity.
\begin{assumption}
$\mathcal L_t \geq v > 0$ for some $v$.
\end{assumption}

The next assumption simplifies repurchase considerations. It is reasonable given a reasonable bound $r$ on expected returns.
\begin{assumption}
The liquidation premium factor $\alpha$ is sufficiently high that the repurchase price in a liquidation is  $>1$ almost surely.
\end{assumption}

The next assumption translates to a reasonable condition on $X$ distributions considering $b(L_t)$ is  linearly increasing whereas $c(L_t)$ decreases with $L_t$.
\begin{assumption}
$\PR(B_t | \mathcal F_t) = \PR\Big(c(L_t) \leq X_{t+1} \leq b(L_t) | \mathcal F_t\Big)$ is increasing in $L_t$.
\end{assumption}

Define $\psi(\mathcal L_t) := \EX[Y_{t+1} | \mathcal{F}_t]$. Note that $\psi$ could have a subscript $t$, or equivalently other time $t$ inputs ($\bar N_t, X_t, g_t$), but we relax notation as we only use it in the context of time $t$. The next assumption ensures that $\psi$ is concave in $\mathcal L_t$, a result that we prove in Proposition ~\ref{result:exy_concave}. When this is not met, the model starts in a strange region in which the speculator's objective can be non-concave and real and nominal liability values can be disassociated. This is an artifact of the simplified structure of demand in the model, which we would expect to adapt in such a setting. Thus we expect the model to not apply well outside of this assumption. Live stablecoin systems that remain operational readily satisfy this assumption.
\begin{assumption}
$\frac{\alpha \mathcal D N c_t}{2(Nc_t - \mathcal L_t)^2} \leq 2$ (note $\mathcal L_t \geq \frac{27}{46} \alpha \mathcal D$ (or $  \alpha Z_t \leq \frac{46}{27} $  is sufficient).
\label{assumption:real-sim-nominal}
\end{assumption}
 Live stablecoin systems  readily satisfy this assumption.\footnote{Recall that  $\alpha \geq 1$ is the liquidation cost multiple (reflecting the fee paid to liquidators). Assuming  $\alpha = 1.05$, the sufficient condition in  Assumption \ref{assumption:real-sim-nominal} is implied by $Z_t < 1.62$, which is verified in practice for all live stablecoins.}

Additionally, the next assumption ensures that $\psi$ is \emph{strictly} concave in $\mathcal L_t$, which we also prove in Proposition~\ref{result:exy_concave}. Notice that this means that \emph{either} the submartingale inequality is strict at time $t$ or there is non-zero probability that a liquidation is triggered in the next step. Given that the latter is certainly reasonable, this assumption is not much stronger than the basic submartingale assumption.
\begin{assumption}
Either $\EX[R_{t+1} |\mathcal F_t] > 0$ or $\PR(B_t | \mathcal F_t) = \PR\Big(c(L_t) \leq X_{t+1} \leq b(L_t) | \mathcal F_t \Big) > 0$.
\end{assumption}

While strict concavity of $\psi$ is not necessary for all results, it does simplify the analysis considerably. More generally, concavity of $\psi$ could reasonably be expected in many settings, and so the assumptions can probably be relaxed. Informally, reasonable distributions for $X_t$ will have concentration about the center. In this case, moving $\Delta_t$ in the positive direction, expected liabilities increase faster than revenue from new STBL issuance. Moving $\Delta_t$ in the negative direction, the cost to buyback grows faster than the decrease in expected liabilities.

\subsection{Concavity and scale invariance}

Our first result is to prove that $\psi(\mathcal L_t)$ is concave in $\mathcal L_t$.
\begin{proposition}\label{result:exy_concave}
Given Assumptions 1-8, $\psi(\mathcal L_t) := \EX[Y_{t+1} | \mathcal{F}_t]$ is concave in $\mathcal L_t$.

Further, given additional Assumption 9, $\psi(\mathcal L_t)$ is \emph{strictly} concave in $\mathcal L_t$.
\end{proposition}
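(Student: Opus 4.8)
The plan is to write $\psi(\mathcal L_t)=\EX[Y_{t+1}\mid\mathcal F_t]$ explicitly as a scalar function of the decision variable $\mathcal L_t$ (with $\bar N_t,X_t,\mathcal L_{t-1}$ and the conditional density $f_t$ all fixed and $\mathcal F_t$-measurable), and then to show $\psi''(\mathcal L_t)\le 0$. Since $A_t\cup B_t=\{X_{t+1}\ge c(\mathcal L_t)\}$ and $B_t=\{c(\mathcal L_t)\le X_{t+1}<b(\mathcal L_t)\}$, taking the conditional expectation of the three summands of $Y_{t+1}$ yields
\[
\psi(\mathcal L_t)=\frac{\mathcal D\,m}{X_t}\Big(1-\frac{\mathcal L_{t-1}}{\mathcal L_t}\Big)+\int_{c(\mathcal L_t)}^{\infty}(\bar N_t x-\mathcal L_t)\,f_t(x)\,dx+\int_{c(\mathcal L_t)}^{b(\mathcal L_t)}(3\mathcal L_t-2\bar N_t x)\Big(1-\frac{\alpha\mathcal D}{2\bar N_t x-2\mathcal L_t}\Big)f_t(x)\,dx,
\]
where $m:=\EX[X_{t+1}\mid\mathcal F_t]$. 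Each term is now an ordinary (if improper) function of $\mathcal L_t$, and the whole argument reduces to differentiating these three pieces twice and signing the result.

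The first, issuance-revenue, term is immediate: its second derivative equals $-\,2\mathcal D m\,\mathcal L_{t-1}/(X_t\mathcal L_t^3)$, which is $\le 0$ because $m=\EX[X_{t+1}\mid\mathcal F_t]>0$ for a positive submartingale, and is strictly negative whenever $m>0$. For the two integrals I would differentiate under the integral sign, which is exactly where Assumptions 2 and 4 enter: a.s.\ continuity of $f_t$ makes the Leibniz boundary terms well defined, while the uniform bound $u\ge c(\mathcal L_t)$ (together with the return bound $r$ of Assumption 3, which keeps $m$ and the tail integrals finite) justifies interchanging $\tfrac{d}{d\mathcal L_t}$ with the improper integral. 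A convenient simplification is that the integrand of the third term vanishes at the upper limit, since $3\mathcal L_t-2\bar N_t\,b(\mathcal L_t)=0$, so \emph{no} boundary contribution appears at $b(\mathcal L_t)$; only the moving lower limit $c(\mathcal L_t)$ produces boundary terms, which I would handle using the defining relation for $c$ (that the liquidation there just exhausts the locked collateral).

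The main obstacle is the partial-liquidation integral. Writing $w:=2\bar N_t x-2\mathcal L_t$ for the post-liquidation liabilities, the repurchase-cost factor $\alpha\mathcal D/w$ blows up as $w\downarrow 0$, i.e.\ as $x\downarrow c(\mathcal L_t)$; the interplay of this singular factor with the moving limit $c(\mathcal L_t)$ contributes a \emph{positive} term to $\psi''$ that competes with the negative revenue and holding contributions. Assumption 8, $\tfrac{\alpha\mathcal D\,N c_t}{2(N c_t-\mathcal L_t)^2}\le 2$ (which is precisely a bound on this $\alpha\mathcal D/w^2$-type curvature evaluated at $x=c(\mathcal L_t)$, where $w$ is smallest), is exactly what caps the positive term so that the net integral contribution stays $\le 0$. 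Here Assumption 6 (repurchase price a.s.\ $>1$, equivalently $w>0$ on $B_t$) fixes the sign of the liquidation-effect integrand, and Assumption 7 (monotonicity of $\PR(B_t\mid\mathcal F_t)$ in $\mathcal L_t$) together with Assumption 5 signs the remaining $f_t$-weighted limit terms. Collecting the pieces gives $\psi''(\mathcal L_t)\le 0$, i.e.\ concavity.

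For strict concavity I would argue that under Assumption 9 at least one negative contribution is genuinely nonzero. If $\EX[R_{t+1}\mid\mathcal F_t]>0$, the revenue/holding drift makes the $\mathcal L_t$-curvature strictly negative; if instead $\PR(B_t\mid\mathcal F_t)>0$, the partial-liquidation region carries positive mass and its strictly signed integral contribution keeps $\psi''$ away from zero. I expect the delicate bookkeeping to be the Leibniz differentiation against the moving limit $c(\mathcal L_t)$ and the verification that the singular $c$-boundary terms are exactly dominated by the quantity bounded in Assumption 8; once that is settled, signing the remaining terms is routine given Assumptions 5--8.
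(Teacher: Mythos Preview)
Your plan is the paper's plan: write $\psi$ explicitly, differentiate twice via the Leibniz rule, and sign each summand of $\psi''$. The cancellation you anticipate at $c(\mathcal L_t)$ in the \emph{first} derivative is exactly right (the liquidation integrand at $x=c$ equals $\mathcal L_t-\bar N_t c$, the negative of the equity integrand there). But there is a concrete gap in how you treat the \emph{second} derivative.

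You note that the liquidation integrand $(3\mathcal L_t-2\bar N_t x)\big(1-\tfrac{\alpha\mathcal D}{2\bar N_t x-2\mathcal L_t}\big)$ vanishes at $x=b(\mathcal L_t)$, so no $b$-boundary term appears in $\psi'$. True; but after one differentiation in $\mathcal L_t$ the resulting integrand is $3-\tfrac{\alpha\mathcal D\,\bar N_t x}{2(\bar N_t x-\mathcal L_t)^2}$, which does \emph{not} vanish at $x=b$. A $b$-boundary term therefore reappears in $\psi''$, namely
\[
g_t\!\Big(\frac{b}{X_t}\Big)\,\frac{1}{X_t}\,\frac{\partial b}{\partial\mathcal L_t}\Big(3-\frac{3\alpha\mathcal D}{\mathcal L_t}\Big),
\]
and you never account for it. This is precisely where Assumption~6 is used: ``repurchase price in a liquidation is a.s.\ $>1$'' forces $\alpha\mathcal D>\mathcal L_t$, hence $3-\tfrac{3\alpha\mathcal D}{\mathcal L_t}<0$ and the term is $\le 0$. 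Your reading of Assumption~6 as ``$w>0$ on $B_t$'' is not what it says and is not what is needed here.

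A secondary point: you invoke Assumptions~5 and~7 to ``sign the remaining $f_t$-weighted limit terms,'' but the paper's argument needs neither. After the first-derivative cancellation at $c$, $\psi''$ has exactly four pieces---the revenue curvature, the $b$-boundary term above, the $c$-boundary term (handled by Assumption~8 just as you describe), and the integral $-\int_{c/X_t}^{b/X_t}\tfrac{\alpha\mathcal D\,\bar N_t X_t z}{(\bar N_t X_t z-\mathcal L_t)^3}g_t(z)\,dz$, whose integrand is nonnegative on the range. No monotonicity of $\PR(B_t\mid\mathcal F_t)$ enters.
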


\begin{center} \hyperlink{pf:exy_concave}{\texttt{[Link to Proof]}} \end{center}

In deriving some results, it will be useful to make assumptions about the scale of the system. The next result shows that results about $Z_t$ should translate to differently scaled systems, validating that such results will describe the STBL price process more generally. In the following, we define $h$ to output $\mathcal L_t$ as a function of the system state.

\begin{proposition}\label{result:rescalings}
Consider a system setup $(L_{t-1},\mathcal D,N_{t-1})$ with ETH price process $(X_t)$.  For $\gamma>0$,
$$h(\gamma L_{t-1}, \gamma \mathcal D, \gamma N_{t-1}, X_t) = \gamma h(L_{t-1}, \mathcal D, N_{t-1}, X_t)$$
$$h(L_{t-1}, \mathcal D, \frac{1}{\gamma} N_{t-1}, \gamma X_t) = h(L_{t-1}, \mathcal D, N_{t-1}, X_t).$$
As a result, the STBL price process $(Z_t)$ is equivalent across these system rescalings.
\end{proposition}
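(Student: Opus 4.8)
Both displayed identities are homogeneity statements about the one-step map $h$, which is defined implicitly through the optimization: $\Delta_t^\ast=\min\big(\arg\max_{\Delta_t}\psi(\mathcal L_t),\,\text{cap}\big)$ with $\psi(\mathcal L_t)=\EX[Y_{t+1}\mid\mathcal F_t]$ and $\mathcal L_t=\mathcal L_{t-1}+\Delta_t$. By Prop.~\ref{result:exy_concave} the argmax is a single well-defined point, so $h$ is well-defined. The plan is to posit, for each rescaling, a consistent ansatz for how every variable in the ``system of random variables'' transforms, and then verify by direct substitution that the ansatz is self-consistent and that the objective transforms by a fixed power of $\gamma$. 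Since an argmax is unchanged under (i) multiplication of the objective by a positive constant and (ii) any reparametrization that leaves the objective function itself unchanged, the decision, and hence $\mathcal L_t$, will transform as claimed.

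For the first identity I would take $(\gamma L_{t-1},\gamma\mathcal D,\gamma N_{t-1})$ with $X$ unchanged and posit $L_t\mapsto\gamma L_t$, $\Delta_t\mapsto\gamma\Delta_t$, $N_t\mapsto\gamma N_t$, $\bar N_t\mapsto\gamma\bar N_t$. First I would check the thresholds: $b(L_t)=\frac{3L_t}{2\bar N_t}$ is invariant, and $c(L_t)$ is invariant as well because every term under its square root scales by $\gamma^2$ while the prefactor $\frac{1}{2\bar N_t}$ scales by $\gamma^{-1}$; hence the events $A_t,B_t$ are unchanged. Next, each of the three terms of $Y_{t+1}$ is homogeneous of degree $1$ (the mark-to-market and liquidation-collateral terms by inspection; the fee ratio $\frac{\alpha\mathcal D}{2\bar N_t X_{t+1}-2L_t}$ is degree $0$), so $Y_{t+1}\mapsto\gamma Y_{t+1}$ pointwise and, since $X$ is unchanged, $\psi\mapsto\gamma\psi$. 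Multiplying the objective by $\gamma>0$ does not move its argmax; and since the constraint $\bar N_t X_t\geq\beta L_t$, the trigger $X_t<\beta L_{t-1}/\bar N_{t-1}$, and both liquidation caps are respectively invariant or degree-$1$ homogeneous, it follows that $\Delta_t^\ast\mapsto\gamma\Delta_t^\ast$ and $\mathcal L_t\mapsto\gamma\mathcal L_t$.

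For the second identity I would take $(N_{t-1}/\gamma,\gamma X_t)$ with $L_{t-1},\mathcal D$ fixed and posit $N_t\mapsto N_t/\gamma$, $\bar N_t\mapsto\bar N_t/\gamma$, $X_{t+1}\mapsto\gamma X_{t+1}$ (the whole price path rescales), and $\mathcal L_t,\Delta_t$ unchanged. The structural point I would emphasize is that $\bar N_t$ is built from $N_{t-1}$ rather than the decision $\mathcal L_t$, so its rescaling is a property of the state and the objective is genuinely unchanged as a function of the decision, not merely rescaled. Substituting gives $b(L_t)\mapsto\gamma b(L_t)$ and $c(L_t)\mapsto\gamma c(L_t)$, which exactly cancels the rescaling of $X_{t+1}$, leaving $A_t,B_t$ invariant; and each term of $Y_{t+1}$ is invariant since every product $\bar N_t X_{t+1}$ and every ratio $X_{t+1}/X_t$ is unchanged. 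The one point requiring care is the expectation: in the rescaled system $X_{t+1}'$ carries the law of $\gamma X_{t+1}$, so I would write $\EX[Y_{t+1}'(\mathcal L_t,X_{t+1}')]=\EX[Y_{t+1}'(\mathcal L_t,\gamma X_{t+1})]$ and invoke the pointwise identity $Y_{t+1}'(\mathcal L_t,\gamma X_{t+1})=Y_{t+1}(\mathcal L_t,X_{t+1})$ to conclude $\psi$ is unchanged; the constraint and caps are likewise invariant, so $\mathcal L_t$ is unchanged.

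Finally, to pass from the one-step map to the process I would induct on $t$: assuming the time-$(t-1)$ state has been rescaled per the ansatz, the verifications above show the time-$t$ state, including the updated $N_t$ and $\bar N_t$ in both branches, is rescaled the same way, so the ansatz propagates. Evaluating $Z_t=\mathcal D/\mathcal L_t$ then yields $Z_t\mapsto\gamma\mathcal D/(\gamma\mathcal L_t)=Z_t$ under the first rescaling and $Z_t\mapsto\mathcal D/\mathcal L_t=Z_t$ under the second, so $(Z_t)$ is invariant across both. I expect the main obstacle to be bookkeeping rather than conceptual: correctly carrying the rescaled price law through the expectation in the second transformation, and checking that the liquidation branch of the $N_t,\bar N_t$ updates, with its $\alpha(3\mathcal L_{t-1}-2\bar N_{t-1}X_t)$ term, respects the ansatz, since these are the only steps where homogeneity is not immediate from inspection.
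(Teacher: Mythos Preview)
Your proposal is correct and follows exactly the approach the paper takes: the paper's proof consists of the single sentence ``Easily verifiable by substitution, noting that factors of $\gamma$ cancel in the integral limits,'' and your argument is precisely that substitution spelled out in full, including the check that $b(L_t)$, $c(L_t)$, the events $A_t,B_t$, each term of $Y_{t+1}$, the constraint, and the caps all transform consistently under each rescaling. Your added care about carrying the rescaled law of $X_{t+1}$ through the expectation and about the liquidation branch of the $N_t,\bar N_t$ updates is sound and simply makes explicit what the paper leaves to the reader.
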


 \begin{center} \hyperlink{pf:rescalings}{\texttt{[Link to Proof]}} \end{center}

Under these condtions, we can interchange derivative and integration operators in $\frac{\partial \psi}{\partial \mathcal L_t}$ according to Leibniz integral rules (a variation of dominated convergence theorems). The speculator's choice of $\mathcal L_t$ will fulfill the first order condition of $\frac{\partial \psi}{\partial \mathcal L_t} =0$. From concavity, we can then conclude that the speculator chooses to increase the STBL supply when $\frac{\partial \psi}{\partial \mathcal L_t} (\mathcal L_{t-1})>0$ and to decrease the STBL supply when $\frac{\partial \psi}{\partial \mathcal L_t} (\mathcal L_{t-1})<0$.

Note that we can derive sufficient conditions for these events using Lemma~\ref{prop:elem_ineq} from the Appendix. Such conditions can be useful as concrete interpretations of the events and can be checked against incoming data. That said, these general sufficient conditions are far from necessary if we are given additional information about the return distributions.

\subsection{Economic limits to speculator behavior}

We now present some fundamental results that bound the speculator's decision-making. These results will be useful in developing the primary results of the paper in the next section. The next result introduces a lower bound to the speculator's STBL supply decision that arises from the fundamental price impact of repurchasing STBL.

\begin{proposition}\label{result:lt_lb}
Suppose the pre-decision collateral constraint is met at time $t$. There is a computable lower bound to $\Delta_t$.
\end{proposition}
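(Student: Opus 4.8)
The plan is to show that the cost of repurchasing STBL grows without bound as the speculator tries to drive $\mathcal L_t$ toward zero, so that the finite resources available for a buyback impose an explicit lower limit on how negative $\Delta_t$ can be. Since the hypothesis states the pre-decision collateral constraint holds at time $t$, i.e. $X_t \geq \frac{\beta L_{t-1}}{\bar N_{t-1}}$, no liquidation is triggered and we are in the first branch of the system of random variables, where $\bar N_t = N_{t-1}$ and $N_t = N_{t-1} + \Delta_t \frac{Z_t}{X_t}$. A lower bound on $\Delta_t$ is nontrivial only for buybacks, so I would restrict attention to $\Delta_t < 0$ and write $m := -\Delta_t > 0$ for the quantity of STBL repurchased.

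Next I would make the price impact explicit. The batch clears at $Z_t = \frac{\mathcal D}{\mathcal L_t} = \frac{\mathcal D}{\mathcal L_{t-1} - m}$, so the dollar cost of repurchasing $m$ units is $C(m) := m\,Z_t = \frac{m \mathcal D}{\mathcal L_{t-1} - m}$. On $[0,\mathcal L_{t-1})$ this function is continuous, strictly increasing, and diverges to $+\infty$ as $m \uparrow \mathcal L_{t-1}$, which is precisely the fundamental price impact of repurchasing: each additional unit bought back costs more than the last. The speculator can finance the buyback only up to the value of the position it can draw on; taking the natural requirement that the post-trade ETH position remain nonnegative, $N_t \geq 0$, and converting $C(m)$ to ETH by dividing by $X_t$, I obtain the feasibility inequality $\frac{m \mathcal D}{(\mathcal L_{t-1} - m) X_t} \leq N_{t-1}$.

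Solving this inequality completes the argument. Because $C$ is strictly increasing, the constraint is equivalent to $m \leq m^\ast$, where $m^\ast$ is the unique solution of $C(m^\ast) = N_{t-1} X_t$; clearing denominators yields $m^\ast = \frac{N_{t-1} X_t \mathcal L_{t-1}}{\mathcal D + N_{t-1} X_t}$ and hence the computable lower bound $\Delta_t \geq -\frac{N_{t-1} X_t \mathcal L_{t-1}}{\mathcal D + N_{t-1} X_t}$. Every quantity on the right-hand side is $\mathcal F_t$-measurable (indeed, observed at time $t$), so the bound is computable as claimed. I would also note that this bound keeps $\mathcal L_t = \mathcal L_{t-1} - m \geq \mathcal L_{t-1}\frac{\mathcal D}{\mathcal D + N_{t-1} X_t} > 0$ automatically, and that it may be combined with Assumption 5 into $\Delta_t \geq \max\bigl(v - \mathcal L_{t-1},\, -m^\ast\bigr)$, still a computable quantity.

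The main obstacle is pinning down the correct budget constraint given the modeling choice that the locked collateral $\bar N_t = N_{t-1}$ is blocked and that the speculator repurchases using an outside (or borrowed) amount repaid from unlocking collateral. One must justify that the effective resource available for the buyback is the dollar value $N_{t-1} X_t$ of the collateral, equivalently that $N_t \geq 0$ is the binding physical constraint, rather than some smaller or larger quantity; once this is fixed, the remaining algebra is routine and relies only on the strict monotonicity of $C$. A secondary point worth checking is that the identity $N_t = N_{t-1} + \Delta_t \frac{Z_t}{X_t}$ is the relevant update in this branch, so that $N_t \geq 0$ translates exactly into the feasibility inequality above with no hidden dependence on the liquidation terms.
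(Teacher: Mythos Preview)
Your argument is correct and yields a valid computable lower bound, but the paper derives a different (and tighter) one. You impose the constraint that the post-trade ETH position remain nonnegative, $N_t \geq 0$, which leads to the linear equation $C(m)=N_{t-1}X_t$ and the closed form $\Delta_t \geq -\dfrac{N_{t-1}X_t\,\mathcal L_{t-1}}{\mathcal D + N_{t-1}X_t}$. The paper instead sets the speculator's post-decision \emph{equity} to zero, i.e.\ solves
\[
\frac{\Delta_t \mathcal D}{L_{t-1}+\Delta_t} + N_{t-1} X_t - L_{t-1} - \Delta_t = 0,
\]
which is a quadratic in $\Delta_t$ and produces a bound involving a square root. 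Since equity $= N_t X_t - L_t$ and $L_t>0$, the paper's solvency constraint is strictly more restrictive than your $N_t\geq 0$ constraint, so its lower bound is strictly higher (less negative) than yours. Either bound suffices to establish the proposition as stated; the trade-off is that your version is algebraically simpler and makes the monotonicity of the repurchase cost transparent, while the paper's version matches its verbal description of the bound as a ``balance sheet constraint'' and gives a sharper number. Your closing caveat about justifying which budget constraint is the binding one is exactly the right concern, and indeed the paper itself notes that the specific bound ``is not especially useful on its own'' and that much better bounds are available with more structure.
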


We can interpret the lower bound in terms of a balance sheet constraint describing when the speculator's ETH position is exhausted in a repurchase. We give the specific bound in the proof but note that it is not especially useful on its own. Given information about the returns distribution and the level of current collateral and considering $\frac{\partial \psi}{\partial \mathcal L_t}$, much better bounds are possible. Note that if $\zeta >0$ is high enough, the lower bound may be the speculator's entire debt position, which would be expected in a liquid environment with heterogeneous agents.

\begin{center} \hyperlink{pf:lt_lb}{\texttt{[Link to Proof]}} \end{center}

The next result provides a useful upper bound to the speculator decision $\mathcal L_t$. The result is derived from incentives to issue STBL. Intuitively, it says that if supply is below this bound, then a speculator may see a profitable opportunity to expand supply. It's simply not profitable to issue more STBL than this bound. This doesn't mean that the speculator decides to achieve the bound, however, as it underestimates the liquidation costs that the speculator might face.\footnote{The model as formulated does not incorporate an interest rate paid by the speculator on issued STBL (the `stability fee' in Dai). Additionally, it does not incorporate a possible yield if the speculator creates STBL to lend on a lending platform as opposed to selling on the market. Under either of these extensions, Proposition~\ref{result:lt_ub} would change by an appropriate factor.} Notice that the bound is strongest when we have $\kappa \sim 1$.

\begin{proposition}\label{result:lt_ub}
Suppose either of the following hold for given $\kappa$:
\begin{itemize}
\item $\int_{\frac{c(L_t)}{X_t}}^{\frac{b(L_t)}{X_t}} \left(3-\frac{\alpha \mathcal D \bar N_t X_t z}{2(\bar N_tX_tz - \mathcal L_t)^2}\right) g_t(z) dz \leq 0$ and $\PR(A_t \cup B_t | \mathcal F_t) \geq \kappa^{-1} > 0$
\item $1 \geq \PR(A_t | \mathcal F_t) - 2\PR(B_t | \mathcal F_t) \geq \kappa^{-1} > 0$.
\end{itemize}
Then
$\mathcal L_t \leq \sqrt{\kappa \mathcal{L}_{t-1}\mathcal{D} \EX[X_{t+1}|\mathcal{F}_t]/X_t}$.
\end{proposition}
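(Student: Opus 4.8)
The plan is to combine the concavity of $\psi$ from Prop.~\ref{result:exy_concave} with the first-order condition, so as to turn the statement into a \emph{lower bound on the marginal cost of liquidation}. Since $\psi$ is concave, the speculator's unconstrained optimizer satisfies $\frac{\partial \psi}{\partial \mathcal L_t}=0$, and the post-decision collateral constraint (the $\min$ in the definition of $\Delta_t^*$) can only decrease the chosen supply, so it suffices to bound the unconstrained optimizer. Writing $\Delta_t/\mathcal L_t = 1-\mathcal L_{t-1}/\mathcal L_t$ and using that $\bar N_t,X_t,\mathcal L_{t-1}$ are $\mathcal F_t$-measurable and $X_{t+1}$ is independent of $\mathcal L_t$ (Assumption 1), the seigniorage term contributes
$$\frac{\partial}{\partial \mathcal L_t}\left[\frac{\mathcal D\,\EX[X_{t+1}|\mathcal F_t]}{X_t}\Big(1-\frac{\mathcal L_{t-1}}{\mathcal L_t}\Big)\right]=\frac{\mathcal D\,\EX[X_{t+1}|\mathcal F_t]\,\mathcal L_{t-1}}{X_t\,\mathcal L_t^2},$$
which is exactly the quantity whose positivity yields the claim: if the remaining (liquidation) part of $\frac{\partial \psi}{\partial \mathcal L_t}$ is at most $-\kappa^{-1}$, then at the optimum $\frac{\mathcal D\,\EX[X_{t+1}|\mathcal F_t]\,\mathcal L_{t-1}}{X_t\,\mathcal L_t^2}\geq \kappa^{-1}$, and solving for $\mathcal L_t$ gives $\mathcal L_t\leq \sqrt{\kappa\,\mathcal L_{t-1}\mathcal D\,\EX[X_{t+1}|\mathcal F_t]/X_t}$.

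Next I would differentiate the two liquidation terms. Rewriting them as integrals in the return variable $z$ against $g_t$ over the ranges $A_t=\{z\geq b(L_t)/X_t\}$ and $B_t=\{c(L_t)/X_t\leq z< b(L_t)/X_t\}$, and interchanging derivative and integral by the Leibniz rule (justified in the improper setting by Assumption 4), the $\Ind_{A_t\cup B_t}(\bar N_t X_{t+1}-L_t)$ term contributes $-\PR(A_t\cup B_t\mid\mathcal F_t)$ from the $\partial_{\mathcal L_t}(-L_t)$ piece, while the interior part of the derivative of the $\Ind_{B_t}$ term is $\int_{c(L_t)/X_t}^{b(L_t)/X_t}\big(3-\frac{\alpha\mathcal D\,\bar N_t X_t z}{2(\bar N_t X_t z-\mathcal L_t)^2}\big)g_t(z)\,dz$, since this integrand is precisely $\partial_{\mathcal L_t}$ of the per-scenario liquidation effect and the moving endpoint $b(L_t)$ contributes nothing ($3L_t-2\bar N_t b(L_t)=0$, so the integrand vanishes there). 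Thus the liquidation part equals $-\big(\PR(A_t\cup B_t\mid\mathcal F_t)-\int_{c/X_t}^{b/X_t}(3-\cdots)g_t\,dz\big)$, and it remains to show the \emph{marginal liquidation cost} $\PR(A_t\cup B_t\mid\mathcal F_t)-\int_{c/X_t}^{b/X_t}(3-\cdots)g_t\,dz\geq\kappa^{-1}$. In the first case this is immediate: the integral is $\leq 0$ by hypothesis and $\PR(A_t\cup B_t\mid\mathcal F_t)\geq\kappa^{-1}$. In the second case I would bound the integrand crudely by dropping the nonnegative subtracted term, $3-\frac{\alpha\mathcal D\bar N_tX_tz}{2(\bar N_tX_tz-\mathcal L_t)^2}\leq 3$, giving integral $\leq 3\PR(B_t\mid\mathcal F_t)$, whence $\PR(A_t\cup B_t)-3\PR(B_t)=\PR(A_t)-2\PR(B_t)\geq\kappa^{-1}$ by hypothesis.

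I expect the main obstacle to be the rigorous justification of the Leibniz differentiation at the \emph{lower} endpoint $c(L_t)$. The per-scenario value is discontinuous there—just above $c(L_t)$ the partial liquidation leaves a residual liability $-(L_t-\ell_{t+1})<0$, whereas below $c(L_t)$ the position defaults to the limited-liability floor—so moving the endpoint $c(L_t)$ with $\mathcal L_t$ in principle produces an extra boundary term (proportional to $c'(L_t)$ times the jump) that the clean decomposition above suppresses. I would need to argue, via the model's accounting convention at $c(L_t)$ or by showing this term is of the correct sign or dominated, that it does not spoil the $\kappa^{-1}$ lower bound. The remaining care-points are routine: confirming the $b(L_t)$ endpoint contributes nothing (done above), checking the optimizer is interior/finite so the first-order condition holds with equality (supported by the constraint-only-reduces-supply remark and strict concavity from Assumption 9), and verifying that the case hypotheses are read at the optimizing $\mathcal L_t$, so that the derived inequality closes on the same quantity being bounded.
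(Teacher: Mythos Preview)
Your approach matches the paper's proof essentially line for line: compute $\partial\psi/\partial\mathcal L_t$, bound the liquidation part under each hypothesis to obtain $\partial\psi/\partial\mathcal L_t \leq \frac{\mathcal D\,\mathcal L_{t-1}\,\EX[R_{t+1}\mid\mathcal F_t]}{\mathcal L_t^2}-\kappa^{-1}$, and set the right side to zero (the second case is handled exactly as you do, by dropping the nonnegative $\frac{\alpha\mathcal D \bar N_t X_t z}{2(\bar N_t X_t z-\mathcal L_t)^2}$ piece). Your concern about the lower endpoint $c(L_t)$ is not an obstacle: the Leibniz boundary contributions from the $A_t\cup B_t$ and $B_t$ integrals at $z=c/X_t$ are $-(\bar N_t c-\mathcal L_t)g_t(c/X_t)\,c'/X_t$ and $-(\mathcal L_t-\bar N_t c)g_t(c/X_t)\,c'/X_t$ respectively and cancel exactly, as the paper notes in the proof of Prop.~\ref{result:exy_concave}.
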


\begin{center} \hyperlink{pf:lt_ub}{\texttt{[Link to Proof]}}. \end{center}

The first condition comes from the derivative of the expected liquidation effect with respect to $\mathcal L_t$ taking $\beta=\frac{3}{2}$. The integrand can be interpreted as the effective leverage change in a given liquidation. This quantity is $<0$ evaluated at $b(L_t)$ (small liquidations effectively reduce leverage) whereas it is $>0$ evaluated at $c(L_t)$ (in very large liquidations, leverage reduction may not be effective due to effect on repurchase price). The integral condition then says that, in expectation, liquidations effectively reduce leverage. This is a  reasonable assumption given a starting state of sufficient over-collateralization, since reasonable distributions of $X_{t+1}$ will place most mass in the integral around $b(L_t)$ as opposed to $c(L_t)$, which is a tail event.

The second (alternative) condition says that the probability of having a liquidation is sufficiently smaller than not having a liquidation.

This result holds if \emph{either} of the two conditions hold, both of which could be checked in data-driven modeling. We will formalize an assumption like the first condition in the next section. Similar results going forward could be derived instead using a variation on the second condition.

\section{Stable and Unstable Domains}\label{sec:domains}

The primary results of the paper characterize regions in which the stablecoin price process can be interpreted as `stable' and `unstable'. In this section, we derive these results for the given model of a single speculator facing imperfectly elastic demand for STBL. In the next section, we consider generalizations of the model and how these results will differ given different design and market structures.

\subsection{Domain barriers/Stopped processes}

We first establish results in terms of barriers. While the stablecoin process is within certain barriers, we prove that it behaves in ways that are interpretable as `stable' and `unstable'. These barriers are generally stopping times, and we proceed by considering the stopped processes.

Assume that in the initial condition we have $\EX\left[ \frac{1}{\mathcal{L}_1} | \mathcal{F}_{0} \right] \leq \frac{1}{\mathcal{L}_{0}}$. We define the following stopping times:

\begin{itemize}
\item $\tau$ is the hitting time of $\EX\left[ \frac{1}{\mathcal{L}_{t+1}} | \mathcal{F}_{t}\right] > \frac{1}{\mathcal{L}_{t}}$

\item $T_m$ is the hitting time of $Z_t > m$, for $m \geq Z_0$

\item $S_1$ is the hitting time of $\EX[ \mathcal L_{t+1} | \mathcal F_t] < \mathcal L_t$

\item $S_2$ is the hitting time of $\EX[\mathcal L_{t+1} |\mathcal F_t] \geq \mathcal L_t$ such that $S_2>S_1$.
\end{itemize}

As we will see, while the stablecoin mechanism is working as intended, we generally expect the STBL supply to increase (equivalently in this setting, the STBL price to decrease, though in slow and bounded way). With this context in mind, $\tau$ represents the first time we \emph{expect} the STBL price to increase. Notice that this is an expectation of reciprocal of supply, a convex function, and so through Jensen's inequality, this is weaker than expecting the speculator to deleverage/reduce supply. 
 In particular, we have $\tau \leq S_1$.

Note that the expectations of the process are not necessarily the same as the  movements of the process: $\tau$ does not necessarily correspond to the first time the process actually increases in price. We track this with $T_m$, the time the STBL price breaches a given level above $Z_0$, which may be before or after $\tau$.

The stopping times $S_1$ and $S_2$ track when expectations about STBL supply change. These can be equivalently stated (and calculated in a data-driven model) based on expectations about the derivative of $\EX[Y_{t+2} | \mathcal F_t]$ with respect to $\mathcal L_{t+1}$ evaluated at $\mathcal L_t$, similarly to the discussion from the previous section on concavity.



Before proceeding, we formalize stopped versions of assumptions in Proposition~\ref{result:lt_ub}. The interpretation of these assumptions is the same as discussed in the previous section. Note that the results going forward could also apply more generally subject to additional stopping times embedding these assumptions. For notational simplicity, we just present the results subject to the stopping times already defined with the assumptions given.

\begin{assumption}
For $t\leq \tau$, $\PR(A_t \cup B_t |\mathcal F_t) = \PR(X_{t+1} \geq c(L_t)|\mathcal F_t) \geq \kappa^{-1} > 0$.
\end{assumption}

\begin{assumption}
For $t \leq \tau$, $\int_{\frac{c(L_t)}{X_t}}^{\frac{b(L_t)}{X_t}} \left(3-\frac{\alpha \mathcal D \bar N_t X_t z}{2(\bar N_t X_t z - \mathcal L_t)^2}\right) g_t(z) dz \leq 0$.
\end{assumption}

Notice that $\kappa$ will be $>1$ but $\sim 1$ as $X<c(L_t)$ is a low probability event.

Recall that the STBL price $Z_t$ is a function of collateral value, expectations about ETH returns, and expectations of liquidation costs (related to tail risks). These factors enter the speculator's supply decision, which then enters  $Z_t$. Going forward, we will explore how changes in these affect the STBL price process.

\subsection{`Stable' domain}

Subject to the barriers $\tau$ and $T_m$, the stablecoin process can be interpreted as stable in the following ways. In this domain, we derive bounds on large price movements and quadratic variation. We show below that for realistic values of parameters, the bounds are sufficiently powerful in practice.

Our first result bounds $Z_t$ under the condition $T_{Z_0} > \tau$. Conditioned on this, the price is contained within small variation--e.g., consider $Z_0=1$ and consider $\frac{1}{\kappa r} \sim 1$.
Recall that $r$  represents the upper bound on returns, $r = \sup_t \frac{\EX[X_{t+1} ]}{X_t}$, whereas $\kappa^{-1}$ is a lower bound for the probability that the collateral is not exhausted in a liquidation event,  $\PR(X_{t+1} \geq c(L_t)|\mathcal F_t)  \geq \kappa^{-1}$.

\begin{proposition}\label{result:stable_range}
If $T_{Z_0} > \tau$, then
$$Z_0 \geq Z_{t\wedge \tau} \geq \sqrt{\frac{\mathcal D}{\kappa \mathcal L_{t\wedge \tau -1} r}} \geq \frac{\mathcal{D}}{(\kappa \mathcal D r)^{\frac{2^t-1}{2^t}} \mathcal L_0^{\frac{1}{2^t}}}.$$
Furthermore for any $t$, $\mathcal L_{t \wedge \tau} \leq \kappa\mathcal D r$ and $Z_{t \wedge \tau} \geq \frac{1}{\kappa r}$.
\end{proposition}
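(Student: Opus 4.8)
The plan is to convert every term in the chain into a statement about the stopped supply $\mathcal L_{t\wedge\tau}$ through the identity $Z_t=\mathcal D/\mathcal L_t$, and then to drive the argument entirely by the one-step bound of Prop.~\ref{result:lt_ub}. Throughout I write $K:=\kappa\mathcal D r$.

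First, the leftmost inequality $Z_0\geq Z_{t\wedge\tau}$ is immediate from the conditioning $T_{Z_0}>\tau$: since $t\wedge\tau\leq\tau<T_{Z_0}$, the price has not breached its starting level by time $t\wedge\tau$, so $Z_{t\wedge\tau}\leq Z_0$, i.e.\ $\mathcal L_{t\wedge\tau}\geq\mathcal L_0$; I will reuse this lower bound on supply. Next, for every $s\leq\tau$ the two stopped-region assumptions (that $\PR(A_s\cup B_s\mid\mathcal F_s)\geq\kappa^{-1}$ and that the leverage-change integral is nonpositive) are precisely the first hypothesis of Prop.~\ref{result:lt_ub}, which therefore gives $\mathcal L_s\leq\sqrt{\kappa\mathcal L_{s-1}\mathcal D\,\EX[X_{s+1}\mid\mathcal F_s]/X_s}$; bounding $\EX[X_{s+1}\mid\mathcal F_s]/X_s=\EX[R_{s+1}\mid\mathcal F_s]\leq r$ by Assumption~3 yields the clean recursion $\mathcal L_s\leq\sqrt{K\,\mathcal L_{s-1}}$ a.s. Rewriting this at $s=t\wedge\tau$ through $Z=\mathcal D/\mathcal L$ is exactly the middle inequality $Z_{t\wedge\tau}\geq\sqrt{\mathcal D/(\kappa r\,\mathcal L_{t\wedge\tau-1})}$.

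The step I expect to be the crux is pinning the initial scale, namely $\mathcal L_0\leq K$; this is what forces the whole trajectory below the fixed point $K$ of the contraction $x\mapsto\sqrt{Kx}$. The initial hypothesis $\EX[1/\mathcal L_1\mid\mathcal F_0]\leq 1/\mathcal L_0$ says $\tau\geq1$, so the recursion is available at $s=1$, and the conditioning gives $\mathcal L_1\geq\mathcal L_0$. Chaining $\mathcal L_0\leq\mathcal L_1\leq\sqrt{K\mathcal L_0}$ forces $\mathcal L_0^2\leq K\mathcal L_0$, hence $\mathcal L_0\leq K$. A one-line induction then closes the bound: if $\mathcal L_{s-1}\leq K$ then $\mathcal L_s\leq\sqrt{K\cdot K}=K$, so $\mathcal L_{t\wedge\tau}\leq K=\kappa\mathcal D r$ for every $t$, whence $Z_{t\wedge\tau}=\mathcal D/\mathcal L_{t\wedge\tau}\geq\mathcal D/K=1/(\kappa r)$, giving the two final displayed claims.

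It remains to unroll the recursion for the rightmost inequality. Iterating $\mathcal L_s\leq\sqrt{K\mathcal L_{s-1}}$ from $\mathcal L_0$ gives, by induction, $\mathcal L_s\leq K^{1-2^{-s}}\mathcal L_0^{2^{-s}}$ for $s\leq\tau$; rewriting through $Z=\mathcal D/\mathcal L$ is the stated rightmost expression with exponent $2^t$ for $t\leq\tau$. For $t>\tau$ one notes that $g(s):=K^{1-2^{-s}}\mathcal L_0^{2^{-s}}=K(\mathcal L_0/K)^{2^{-s}}$ is nondecreasing in $s$ exactly because $\mathcal L_0\leq K$, so the bound with exponent $2^t$ is weaker than the one with exponent $2^\tau$ and still dominates the stopped value $\mathcal L_{t\wedge\tau}=\mathcal L_\tau$. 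The only genuinely delicate point is the one flagged above---combining the \emph{lower} bound on supply coming from $T_{Z_0}>\tau$ with the \emph{upper} bound from Prop.~\ref{result:lt_ub} to pin $\mathcal L_0\leq K$; everything else is routine iteration of the contraction and bookkeeping of the stopped index (with $t=0$ trivial).
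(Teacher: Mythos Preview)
Your proof is correct and follows essentially the same route as the paper: invoke Prop.~\ref{result:lt_ub} (under Assumptions~10--11) to get the one-step recursion $\mathcal L_s\leq\sqrt{K\mathcal L_{s-1}}$, iterate to the closed form $K^{1-2^{-s}}\mathcal L_0^{2^{-s}}$, and take the limit $K=\kappa\mathcal D r$. Your argument is in fact more careful than the paper's, which simply asserts that the iterated bound is increasing in $t$; you supply the missing justification by squeezing $\mathcal L_0\leq\mathcal L_1\leq\sqrt{K\mathcal L_0}$ from the conditioning $T_{Z_0}>\tau$ to conclude $\mathcal L_0\leq K$, which is exactly what makes the sequence monotone and the limit a uniform bound.
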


\begin{center} \hyperlink{pf:stable_range}{\texttt{[Link to Proof]}} \end{center}

The condition $T_{Z_0} > \tau$ introduces dependence on future events. As such, we can't conclude with the information at time $t$ that the $t+1$ price is bounded in this way.


However, we can bound our expectations on the $t+1$ price given the information at time $t$ ($\mathcal F_t$). This approach relies on the fact that the versions of the process behave  as submartingales in the stopped setting.

\begin{proposition}\label{result:stable_submg}
$(\mathcal{L}_{t\wedge \tau})$ is a submartingale bounded above and $(Z_{t\wedge \tau})$ is a supermartingale bounded below. Thus they converge almost surely.
\end{proposition}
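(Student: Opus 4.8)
The plan is to verify the (super/sub)martingale properties directly from the definition of the stopping time $\tau$ and then invoke the relevant convergence theorems, using the uniform bounds already established in Prop.~\ref{result:stable_range}. Recall that $\tau$ is the hitting time of $\EX[\frac{1}{\mathcal L_{t+1}} | \mathcal F_t] > \frac{1}{\mathcal L_t}$, so on the event $\{\tau > t\}$ the reverse inequality $\EX[\frac{1}{\mathcal L_{t+1}} | \mathcal F_t] \leq \frac{1}{\mathcal L_t}$ holds. Since $Z_t = \frac{\mathcal D}{\mathcal L_t}$, this immediately gives $\EX[Z_{t+1} | \mathcal F_t] = \mathcal D\,\EX[\frac{1}{\mathcal L_{t+1}} | \mathcal F_t] \leq \frac{\mathcal D}{\mathcal L_t} = Z_t$ on $\{\tau > t\}$. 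First I would establish the supermartingale property for the stopped process by the standard decomposition: because $\{\tau > t\} = \{\tau \leq t\}^c \in \mathcal F_t$, I split $\EX[Z_{(t+1)\wedge \tau} | \mathcal F_t]$ into the frozen part on $\{\tau \leq t\}$ (which equals $Z_{t\wedge\tau}$ and is $\mathcal F_t$-measurable) and the active part on $\{\tau > t\}$ (which equals $\EX[Z_{t+1}|\mathcal F_t]\Ind_{\tau>t} \leq Z_t \Ind_{\tau > t}$), concluding $\EX[Z_{(t+1)\wedge\tau} | \mathcal F_t] \leq Z_{t\wedge\tau}$.

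For the submartingale property of $(\mathcal L_{t\wedge\tau})$, I would apply Jensen's inequality to the convex map $x \mapsto 1/x$ on $(0,\infty)$: on $\{\tau > t\}$, $\frac{1}{\EX[\mathcal L_{t+1}|\mathcal F_t]} \leq \EX[\frac{1}{\mathcal L_{t+1}}|\mathcal F_t] \leq \frac{1}{\mathcal L_t}$, and taking reciprocals (valid since all quantities are positive) yields $\EX[\mathcal L_{t+1}|\mathcal F_t] \geq \mathcal L_t$. The identical stopped-process decomposition then shows $\EX[\mathcal L_{(t+1)\wedge\tau} | \mathcal F_t] \geq \mathcal L_{t\wedge\tau}$. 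This is precisely the observation, noted in the text, that the $\tau$-condition is weaker than $\EX[\mathcal L_{t+1}|\mathcal F_t] < \mathcal L_t$, so that $\tau \leq S_1$.

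The boundedness claims are already in hand: Prop.~\ref{result:stable_range} gives $\mathcal L_{t\wedge\tau} \leq \kappa \mathcal D r$ and $Z_{t\wedge\tau} \geq \frac{1}{\kappa r}$ for every $t$, while the assumption $\mathcal L_t \geq v > 0$ bounds $\mathcal L_{t\wedge\tau}$ below (hence $Z_{t\wedge\tau} = \mathcal D/\mathcal L_{t\wedge\tau}$ above). In particular both stopped processes are bounded, hence integrable and the supermartingale/submartingale definitions are well-posed. For convergence I would invoke Doob's convergence theorem: $(\mathcal L_{t\wedge\tau})$ is a submartingale with $\sup_t \EX[(\mathcal L_{t\wedge\tau})^+] \leq \kappa\mathcal D r < \infty$, so it converges a.s. to a finite limit $\mathcal L_\infty \in [v, \kappa\mathcal D r]$; and $(Z_{t\wedge\tau})$ is a nonnegative supermartingale, so it converges a.s. as well (equivalently, $Z_{t\wedge\tau} = \mathcal D/\mathcal L_{t\wedge\tau} \to \mathcal D/\mathcal L_\infty$ by continuity of $x \mapsto \mathcal D/x$ away from $0$).

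I do not expect a serious obstacle here, as the result is essentially a bookkeeping exercise built on the $\tau$-definition and the bounds of Prop.~\ref{result:stable_range}. The only points requiring care are (i) correctly justifying that the defining inequality of $\tau$ holds on $\{\tau > t\}$ and handling the $\mathcal F_t$-measurability in the stopped-process split, and (ii) the direction-reversals in the Jensen step, where one must track that reciprocation flips the inequality on the positive reals. Integrability, which could otherwise be a concern for the convergence theorem, is handled for free by the two-sided bounds.
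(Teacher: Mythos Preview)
Your proposal is correct and follows essentially the same approach as the paper: the supermartingale property of $(Z_{t\wedge\tau})$ is read off directly from the definition of $\tau$, the submartingale property of $(\mathcal L_{t\wedge\tau})$ comes from applying Jensen's inequality to $x\mapsto 1/x$, the bounds are taken from Prop.~\ref{result:stable_range}, and convergence is by Doob's theorem. Your version is in fact more careful than the paper's, since you spell out the stopped-process decomposition on $\{\tau>t\}$ versus $\{\tau\le t\}$ and address integrability explicitly.
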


\begin{center} \hyperlink{pf:stable_submg}{\texttt{[Link to Proof]}} \end{center}

An immediate bound on expected price comes from the fact that stopped version of $Z_t$ is a supermartingale. This is the first result of the next proposition. Additionally, with a stronger assumption on $(X_t)$ that conditional expectation of returns is non-decreasing within the domain barriers, we can bound the expected price further.

\begin{proposition}\label{result:stable_exp_bound}
The process $(Z_{t\wedge\tau\wedge T_{Z_0}})$ is bounded in expectation by
$$Z_0 \geq \EX[ Z_{t\wedge \tau\wedge T_{Z_0}}] \geq \frac{1}{\kappa r}.$$
Further, assuming that for $t<\tau$, $(\EX[R_{t+1} | \mathcal F_t])$ is non-decreasing, then for $t\leq \tau$,
$$Z_{t-1} \geq \EX[Z_{t\wedge \tau} | \mathcal{F}_{t-1}] \geq \sqrt{ \frac{\mathcal{D}}{\kappa\mathcal{L}_{t-1}\EX[R_t |\mathcal{F}_{t-1}]}}.$$
\end{proposition}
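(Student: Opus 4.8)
The plan is to prove the two displayed chains separately, using the martingale structure of Prop~\ref{result:stable_submg} for the upper bounds and the pathwise supply bound of Prop~\ref{result:lt_ub}, pushed through $Z_t=\mathcal D/\mathcal L_t$, for the lower bounds. Throughout I would work on the event $\{t\le\tau\}$, where $Z_{t\wedge\tau}=Z_t$ and where the stopped hypotheses (Assumptions~10--11) place us inside the regime of Prop~\ref{result:lt_ub}.

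For the first chain, I would first observe that $(Z_{t\wedge\tau})$ is a supermartingale bounded below by Prop~\ref{result:stable_submg}, and that stopping it at the additional stopping time $T_{Z_0}$ again yields a supermartingale. Optional stopping then gives $\EX[Z_{t\wedge\tau\wedge T_{Z_0}}]\le\EX[Z_0]=Z_0$ (using that $Z_0$ is $\mathcal F_0$-measurable), which is the upper inequality. For the lower inequality I would invoke the a.s.\ bound $Z_{s\wedge\tau}\ge\tfrac{1}{\kappa r}$ from Prop~\ref{result:stable_range}; since this holds simultaneously for all (countably many) times a.s., it holds at the random time $t\wedge T_{Z_0}$ as well, so $Z_{t\wedge\tau\wedge T_{Z_0}}\ge\tfrac{1}{\kappa r}$ a.s.\ and the bound survives taking expectations.

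For the second chain, the upper inequality $Z_{t-1}\ge\EX[Z_{t\wedge\tau}\mid\mathcal F_{t-1}]$ is again just the supermartingale property of $(Z_{t\wedge\tau})$, since on $\{t\le\tau\}$ we have $(t-1)\wedge\tau=t-1$. The lower inequality is the substantive part. I would start from the pathwise bound $\mathcal L_t\le\sqrt{\kappa\,\mathcal L_{t-1}\mathcal D\,\EX[X_{t+1}\mid\mathcal F_t]/X_t}$ of Prop~\ref{result:lt_ub}, rewrite $\EX[X_{t+1}\mid\mathcal F_t]/X_t=\EX[R_{t+1}\mid\mathcal F_t]$, and invert to $Z_t=\mathcal D/\mathcal L_t\ge\sqrt{\mathcal D/(\kappa\mathcal L_{t-1}\EX[R_{t+1}\mid\mathcal F_t])}$. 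Taking $\EX[\cdot\mid\mathcal F_{t-1}]$, pulling out the $\mathcal F_{t-1}$-measurable factor $\sqrt{\mathcal D/(\kappa\mathcal L_{t-1})}$, and applying Jensen's inequality (convexity of $w\mapsto w^{-1/2}$) together with the tower property $\EX[\EX[R_{t+1}\mid\mathcal F_t]\mid\mathcal F_{t-1}]=\EX[R_{t+1}\mid\mathcal F_{t-1}]$ yields a bound of exactly the stated shape, but with the forward-looking conditional return $\EX[R_{t+1}\mid\mathcal F_{t-1}]$ in place of the one-step return $\EX[R_t\mid\mathcal F_{t-1}]$ appearing in the statement.

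The main obstacle I anticipate is precisely this index shift. The speculator's first-order condition makes $\mathcal L_t$ depend on the \emph{next-period} expected return $\EX[R_{t+1}\mid\mathcal F_t]$, whereas the target is phrased with the one-step return $\EX[R_t\mid\mathcal F_{t-1}]$, and this is exactly where the monotonicity hypothesis on $(\EX[R_{s+1}\mid\mathcal F_s])$ is meant to enter: it compares the two conditional returns after conditioning down to $\mathcal F_{t-1}$. I would carry out this comparison with care, tracking the direction of every inequality and keeping in mind that $w\mapsto\sqrt{\mathcal D/(\kappa\mathcal L_{t-1}w)}$ is \emph{decreasing}, since this is where sign errors are easiest to make; if the monotonicity as used does not line up in the required direction, the natural fallback is to state the sharp bound directly in terms of $\EX[R_{t+1}\mid\mathcal F_{t-1}]$ and treat the one-step form as the special case where the conditional expected return is (locally) constant across the step.
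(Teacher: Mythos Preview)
Your proposal is correct and follows the paper's proof essentially step for step: supermartingale/optional stopping for the upper bounds, the a.s.\ bound from Prop.~\ref{result:stable_range} for the first lower bound, and Prop.~\ref{result:lt_ub} $\to$ invert $\to$ Jensen on $w\mapsto w^{-1/2}$ $\to$ tower property $\to$ monotonicity of conditional returns for the second lower bound. Your caution about the direction of the final monotonicity step is well placed---the paper's own argument passes over exactly this point quickly, and the bound in terms of $\EX[R_{t+1}\mid\mathcal F_{t-1}]$ is indeed the sharp one, with the replacement by $\EX[R_t\mid\mathcal F_{t-1}]$ requiring the inequality to go the right way through a decreasing map.
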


\begin{center} \hyperlink{pf:stable_exp_bound}{\texttt{[Link to Proof]}} \end{center}

Going forward, we will work with a variation on the price process
$$Z_t^\prime := | m - Z_t| \hspace{1cm} \text{ for given } m\geq Z_0.$$
Using $m=1$, this has concrete interpretation as the absolute price deviation from the stablecoin peg. The stopped version of this process has the useful property of being a non-negative submartingale. In addition, $(Z_t^\prime)$ shares similar large deviation and quadratic variation properties with $(Z_t)$, which we explore in the remainder of this subsection. 

\begin{lemma}\label{result:zprime_submg}
The stopped process $(Z^\prime_{t\wedge \tau \wedge T_m})$ is a non-negative submartingale.
\end{lemma}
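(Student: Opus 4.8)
The plan is to combine the supermartingale property of the stopped price process established in Prop.~\ref{result:stable_submg} with the convexity of the map $z \mapsto |m-z|$, while using the barrier $T_m$ to confine the process to the region where this map is decreasing. Non-negativity is immediate since $Z_t' = |m - Z_t| \geq 0$, and integrability is also immediate: Assumption 5 gives $\mathcal L_t \geq v$, hence $0 \leq Z_t = \mathcal D/\mathcal L_t \leq \mathcal D/v$ and $Z_t' \leq m + \mathcal D/v$ is bounded.

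Writing $\sigma := \tau \wedge T_m$, I would verify the submartingale inequality $\EX[Z'_{(t+1)\wedge\sigma} \mid \mathcal F_t] \geq Z'_{t\wedge\sigma}$ by splitting on the $\mathcal F_t$-measurable event $\{t \geq \sigma\}$ versus $\{t < \sigma\}$. On $\{t \geq \sigma\}$ the stopped process is frozen and the inequality holds with equality. On $\{t < \sigma\}$ we have both $t < \tau$ and $t < T_m$, so that $(t+1)\wedge\sigma = t+1$ and, crucially, $Z_s \leq m$ for all $s \leq t$ (since $T_m$ has not yet fired), giving $Z_t' = m - Z_t$. It then remains to establish the pointwise bound $\EX[|m - Z_{t+1}| \mid \mathcal F_t] \geq m - Z_t$ on this event.

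This last bound is where the two ingredients come together. Jensen's inequality applied to the convex function $|m - \cdot|$ gives $\EX[|m - Z_{t+1}| \mid \mathcal F_t] \geq |m - \EX[Z_{t+1}\mid\mathcal F_t]|$. The supermartingale property from Prop.~\ref{result:stable_submg} (equivalently, the defining condition of $\tau$ rewritten via $Z_t = \mathcal D/\mathcal L_t$, so that $\EX[1/\mathcal L_{t+1}\mid\mathcal F_t] \leq 1/\mathcal L_t$ becomes $\EX[Z_{t+1}\mid\mathcal F_t] \leq Z_t$) gives $\EX[Z_{t+1}\mid\mathcal F_t] \leq Z_t \leq m$ for $t < \tau$, so $m - \EX[Z_{t+1}\mid\mathcal F_t] \geq 0$ and the absolute value may be dropped: $|m - \EX[Z_{t+1}\mid\mathcal F_t]| = m - \EX[Z_{t+1}\mid\mathcal F_t] \geq m - Z_t = Z_t'$. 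Chaining these inequalities yields the claim on $\{t < \sigma\}$, completing the argument.

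The main subtlety -- and the reason both stopping times appear -- is that $|m - \cdot|$ is convex but not monotone: the standard fact that a convex, \emph{non-increasing} function of a supermartingale is a submartingale applies only on the branch $z \leq m$. The barrier $T_m$ (together with the hypothesis $m \geq Z_0$, which ensures the process starts below the barrier) is precisely what keeps $Z_t \leq m$ so that $|m - Z_t| = m - Z_t$ sits in the decreasing branch; without it, an upward excursion of $Z$ past $m$ would push the process into the increasing branch and break the supermartingale-to-submartingale transfer. I expect no genuine computational obstacle beyond this bookkeeping.
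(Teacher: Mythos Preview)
Your proposal is correct and follows essentially the same approach as the paper: Jensen's inequality for the convex map $z\mapsto|m-z|$, then the supermartingale condition (from $t<\tau$) together with $Z_t\le m$ (from $t<T_m$) to drop the absolute value and compare with $m-Z_t$. You have simply been more careful than the paper about integrability and the stopped-process bookkeeping, which is fine.
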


\begin{center} \hyperlink{pf:zprime_submg}{\texttt{[Link to Proof]}} \end{center}

We define the maximum process over some process $(\theta_t)$ as $\theta_N^* = \max_{t\leq N} |\Theta_t|$. The next result bounds the expected maximum of the deviation process $(Z_t)$.

\begin{proposition}\label{result:zprime_exp_max}
Suppose $m\geq Z_0$. Denote $E:= \EX[ Z_{\tau\wedge T_m} - m | Z_{\tau \wedge T_m} > m]$. Suppose any one of the following conditions holds:
\begin{itemize}
\item $\frac{1}{\kappa r} > m$ and $E > \frac{1}{\kappa r} - m$
\item $\frac{1}{\kappa r} = m$ and $E > 0$
\item $\frac{1}{\kappa r} < m$ and $E \geq 0$.
\end{itemize}
Then \hspace{3cm}
$\EX[ Z^{\prime *}_{\tau \wedge T_m}] \leq 2\left( m- \frac{1}{\kappa r}\right).$
\end{proposition}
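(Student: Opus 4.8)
The plan is to avoid the $L\log L$ form of Doob's inequality and instead combine a \emph{pathwise} bound on the running maximum with an optional-stopping estimate on the terminal value, which produces the factor $2$ exactly. The non-negative submartingale structure of $(Z^\prime_{t\wedge\tau\wedge T_m})$ from Lemma~\ref{result:zprime_submg} guarantees the maximal function is well-behaved, but the quantitative bound will come from the supermartingale property of $(Z_{t\wedge\tau})$. First I would record the two-sided confinement of the price strictly before the stopping boundary: by the uniform bound $Z_{t\wedge\tau}\ge\frac{1}{\kappa r}$ from Prop.~\ref{result:stable_range} and the definition of $T_m$ (so $Z_t\le m$ for every $t<T_m$), each index $t<\tau\wedge T_m$ satisfies $Z_t\in[\frac{1}{\kappa r},m]$ and hence $Z_t^\prime=m-Z_t\le m-\frac{1}{\kappa r}$. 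The only way the maximum can exceed this value is an upward breach of $m$ at the terminal index, which contributes $(Z_{\tau\wedge T_m}-m)^+$. Whenever $m\ge\frac{1}{\kappa r}$ (so the first term is non-negative) this gives the pathwise inequality
$$Z^{\prime *}_{\tau\wedge T_m}\ \le\ \Big(m-\tfrac{1}{\kappa r}\Big)+\big(Z_{\tau\wedge T_m}-m\big)^+.$$

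Next I would bound the expected breach term. By Prop.~\ref{result:stable_submg} the process $(Z_{t\wedge\tau})$ is a non-negative supermartingale; stopping it additionally at the stopping time $T_m$ preserves this, so $(Z_{t\wedge\tau\wedge T_m})$ is a non-negative supermartingale and optional stopping gives $\EX[Z_{\tau\wedge T_m}]\le Z_0\le m$. Writing $p:=\PR(Z_{\tau\wedge T_m}>m)$ and splitting over $\{Z_{\tau\wedge T_m}>m\}$ and its complement (on which the process has stopped at $\tau$ and therefore still obeys $Z_{\tau}\ge\frac{1}{\kappa r}$), and noting $Ep=\EX[(Z_{\tau\wedge T_m}-m)^+]$, I obtain
$$m\ \ge\ \EX[Z_{\tau\wedge T_m}]\ \ge\ mp+Ep+\tfrac{1}{\kappa r}(1-p).$$
Rearranging yields $Ep\le(1-p)\big(m-\frac{1}{\kappa r}\big)\le m-\frac{1}{\kappa r}$, using $1-p\le 1$ and $m\ge\frac{1}{\kappa r}$. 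Taking expectations in the pathwise inequality and substituting gives exactly $\EX[Z^{\prime *}_{\tau\wedge T_m}]\le 2\big(m-\frac{1}{\kappa r}\big)$.

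I would then reconcile the three hypotheses. Since $Z_0\ge\frac{1}{\kappa r}$ (Prop.~\ref{result:stable_range} at $t=0$) and $m\ge Z_0$, one always has $m\ge\frac{1}{\kappa r}$, so the substantive regime is $\frac{1}{\kappa r}<m$ with $E\ge0$ (automatic, as $E$ is a conditional mean of a strictly positive quantity); the common content of all three stated conditions is precisely $m+E>\frac{1}{\kappa r}$, which keeps the estimate non-degenerate. In the boundary case $\frac{1}{\kappa r}=m$ the constraints force $Z_0=m$, whence the inequality $Ep\le(1-p)\cdot 0$ combined with $E>0$ forces $p=0$ and $Z^{\prime *}=m-\frac{1}{\kappa r}=0$; the case $\frac{1}{\kappa r}>m$ cannot occur under the model constraints and is therefore vacuous.

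The main obstacle is the rigorous justification of optional stopping for $(Z_{t\wedge\tau\wedge T_m})$: the stopping time need not be bounded, and at a breach of $m$ the terminal value is not bounded above, so the bounded-stopping-time case does not apply directly. I would handle this using non-negativity together with the a.s.\ convergence of $(Z_{t\wedge\tau})$ guaranteed by Prop.~\ref{result:stable_submg}, applying Fatou's lemma to pass from the exact inequality $\EX[Z_{t\wedge\tau\wedge T_m}]\le Z_0$ (valid for every finite $t$) to $\EX[Z_{\tau\wedge T_m}]\le Z_0$ in the limit. Secondary points to verify are the treatment of indices where $\tau$ and $T_m$ coincide and the convention for $E$ when $p=0$, but these affect only the degenerate boundary cases.
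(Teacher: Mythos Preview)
Your proof is correct and follows essentially the same approach as the paper: both confine $Z^\prime$ pathwise to $[0,m-\frac{1}{\kappa r}]$ strictly before the stopping time, apply optional stopping to the supermartingale $(Z_{t\wedge\tau})$ to control $\EX[Z_{\tau\wedge T_m}]$, and split on the breach event $\{Z_{\tau\wedge T_m}>m\}$ to bound the expected overshoot by $m-\frac{1}{\kappa r}$. Your additive pathwise bound $Z^{\prime *}\le (m-\frac{1}{\kappa r})+(Z_{\tau\wedge T_m}-m)^+$ followed by the one-line estimate $Ep\le(1-p)(m-\frac{1}{\kappa r})$ is a slightly more direct packaging than the paper's route through the lower bound $p\ge E/(m-\frac{1}{\kappa r}+E)$ (with the opposite convention for $p$), but the two are algebraically equivalent; your observation that the case $\frac{1}{\kappa r}>m$ is vacuous under the standing bounds is also correct, and the paper justifies optional stopping via the global bound on $Z$ (Assumption~5) rather than Fatou, which is simpler here.
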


\begin{center} \hyperlink{pf:zprime_exp_max}{\texttt{[Link to Proof]}} \end{center}

The value $(m-\frac{1}{\kappa r})$ describes the range of the domain considered. Prior to $T_m$, we know that the price falls in this range. The nontrivial part is describing what happens at the stopping time as it \emph{exceeds} this range if the stop is triggered by $T_m$. The value $E$ is the expected deviation at the stopping time \emph{given} that $T_m$ triggers the stop. By definition, we have that $E>0$. Given reasonable $\kappa$, $r$, and $m$, the condition for Proposition~\ref{result:zprime_exp_max} is satisfied quite broadly. For instance, the concrete instance with $m=1$ is satisfied since $\frac{1}{\kappa r} < 1$ taking into account the above discussion on $\kappa$.

Notice that the analysis for the proof can lead to better bounds if we have more information about $E$ or $p := \PR(Z_{\tau\wedge T_m} \leq m)$, e.g., by incorporating information from other results above or from knowledge about the distributions of $(X_t)$, such as from historical data. Additionally, the analysis can be used to bound either $E$ or $p$ given bounds on the other.

We now state the first main results of the paper. Our next result applies Doob's inequality to bound the probability of large deviations in the stopped process.

\begin{theorem}\label{result:zprime_doob_ineq}
For $m\geq Z_0$ and $\epsilon > 0$,
$$\PR\left(\max_{n \leq \tau \wedge T_m} Z_n^\prime > \epsilon \right)
\leq 2\epsilon^{-1} \left(m-\frac{1}{\kappa r}\right).$$
\end{theorem}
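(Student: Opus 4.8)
The plan is to read the theorem as a direct consequence of Doob's maximal inequality applied to the non-negative submartingale supplied by Lemma~\ref{result:zprime_submg}, with the terminal expectation controlled by Prop.~\ref{result:zprime_exp_max}. First I would fix a finite horizon $N$ and consider the stopped process $M_n := Z^\prime_{n\wedge\tau\wedge T_m}$, which by Lemma~\ref{result:zprime_submg} is a non-negative submartingale. Doob's weak-type maximal inequality then gives
$$\PR\Big(\max_{n\leq N} M_n > \epsilon\Big) \leq \epsilon^{-1}\EX[M_N].$$
Since $(M_n)$ is a submartingale, $\EX[M_N]$ is non-decreasing in $N$, and $M_N$ is dominated pointwise by the running maximum $Z^{\prime *}_{\tau\wedge T_m} = \max_{n\leq\tau\wedge T_m}Z^\prime_n$, so $\EX[M_N] \leq \EX[Z^{\prime *}_{\tau\wedge T_m}] \leq 2\big(m-\tfrac{1}{\kappa r}\big)$ by Prop.~\ref{result:zprime_exp_max}. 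Substituting yields the claimed bound for every finite $N$.

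Next I would remove the finite-horizon restriction. The events $\{\max_{n\leq N} M_n > \epsilon\}$ increase to $\{\max_{n\leq\tau\wedge T_m} Z^\prime_n > \epsilon\}$ as $N\to\infty$, so continuity of measure from below transfers the uniform-in-$N$ bound to the full stopped horizon, giving exactly $\PR\big(\max_{n\leq\tau\wedge T_m} Z^\prime_n > \epsilon\big)\leq 2\epsilon^{-1}\big(m-\tfrac{1}{\kappa r}\big)$. Equivalently, one can bypass the finite-$N$ step entirely by applying Markov's inequality directly to the non-negative random variable $Z^{\prime *}_{\tau\wedge T_m}$ and invoking Prop.~\ref{result:zprime_exp_max}; either way, the factor $2$ and the domain width $\big(m-\tfrac{1}{\kappa r}\big)$ are inherited from that proposition rather than from the maximal inequality itself, which accounts for the overshoot of $Z^\prime$ past the barrier at the stopping time.

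I expect the only delicate point to be the passage to the random stopping horizon: Doob's inequality is stated at a deterministic time, so I must justify the monotone limit in $N$ (or, in the Markov route, verify that $Z^{\prime *}_{\tau\wedge T_m}$ is genuinely integrable with the stated bound, which follows from the boundedness and convergence established in Prop.~\ref{result:stable_submg}). This in turn means the theorem silently inherits the hypotheses of Prop.~\ref{result:zprime_exp_max} that make its maximal-expectation bound valid—the conditions relating $E$, $\kappa r$, and $m$—and I would state explicitly that these are in force. Everything else is a routine substitution, so the substance of the argument lives entirely in Lemma~\ref{result:zprime_submg} and Prop.~\ref{result:zprime_exp_max}, which this theorem packages into a large-deviation statement via a maximal inequality.
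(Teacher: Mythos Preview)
Your proposal is correct and follows essentially the same route as the paper: invoke Lemma~\ref{result:zprime_submg} to get a non-negative submartingale, apply Doob's maximal inequality, and bound the terminal expectation via $\EX[Z^\prime_{\tau\wedge T_m}] \leq \EX[Z^{\prime *}_{\tau\wedge T_m}]$ together with Prop.~\ref{result:zprime_exp_max}. Your added care about the finite-horizon approximation and the inherited hypotheses of Prop.~\ref{result:zprime_exp_max} only makes explicit what the paper's one-line proof leaves implicit.
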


\begin{center} \hyperlink{pf:zprime_doob_ineq}{\texttt{[Link to Proof]}} \end{center}

The result can be quite powerful. Consider the concrete case of $m=1$, in which case $Z_t^\prime$ describes the deviation from the peg, and take (arguably reasonable) $\kappa^{-1} = 0.999$ ($99.9\%$ chance $X_t$ won't drop below $c(L_t)$) and $r$ annualized as 1.5 (daily $r=1.0011$). Then the probability that the stablecoin deviates from the peg by more than 0.1 is $\PR(Z^{\prime *}_{\tau\wedge T_1} > 0.1) \leq 0.042$.


Our next result derives from a form of Burkholder's inequality that applies to non-negative submartingales. We define the quadratic variation of $(Z^\prime_t)$ by
$$[Z^\prime]_t := \sum_{k=1}^t (Z^\prime_k - Z^\prime_{k-1})^2.$$
The quadratic variation is a stochastic process that measures how spread out the underlying process is. Its expectation at time $t$ is related to the variance at that time, supposing variance is defined--in particular, they are equal if the underlying process is a martingale. The result bounds the probability of large quadratic variation in the stopped process. In essence, with high probability, the quadratic variation can't be \emph{too far} away from the expected maximum.

\begin{theorem}\label{result:zprime_qv_pr}
Suppose $m \geq Z_0$ and $\epsilon > 0$. Then
$$\PR\left( \sqrt{[Z^\prime]_{\tau\wedge T_m}} > \epsilon \right) \leq 6 \epsilon^{-1} \left( m - \frac{1}{\kappa r}\right).$$
\end{theorem}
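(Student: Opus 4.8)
The plan is to mirror the structure of Theorem~\ref{result:zprime_doob_ineq}, replacing Doob's maximal inequality with a square-function (Burkholder-type) inequality and then feeding in the same expected-maximum bound. Since $(Z^\prime_{t\wedge\tau\wedge T_m})$ is a non-negative submartingale by Lemma~\ref{result:zprime_submg}, and since the quadratic variation $[Z^\prime]_{\tau\wedge T_m}$ is defined as the realized square function $\sum_k (Z^\prime_k - Z^\prime_{k-1})^2$ up to the stopping time, the natural tool is the form of Burkholder's inequality valid for non-negative submartingales, which controls the expected square function by the expected maximal function:
$$\EX\Big[\sqrt{[Z^\prime]_{\tau\wedge T_m}}\Big] \le 3\, \EX\big[Z^{\prime *}_{\tau\wedge T_m}\big].$$

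First I would apply Markov's inequality to the non-negative random variable $\sqrt{[Z^\prime]_{\tau\wedge T_m}}$, giving $\PR(\sqrt{[Z^\prime]_{\tau\wedge T_m}} > \epsilon) \le \epsilon^{-1}\EX[\sqrt{[Z^\prime]_{\tau\wedge T_m}}]$. Then I would invoke the Burkholder inequality above to pass to the maximal process, and finally apply the bound $\EX[Z^{\prime *}_{\tau\wedge T_m}] \le 2(m - \frac{1}{\kappa r})$ from Prop~\ref{result:zprime_exp_max} (whose hypotheses the present statement inherits). Chaining these yields $\PR(\sqrt{[Z^\prime]_{\tau\wedge T_m}} > \epsilon) \le 3\cdot 2\,\epsilon^{-1}(m-\frac{1}{\kappa r}) = 6\epsilon^{-1}(m-\frac{1}{\kappa r})$, which is exactly the claimed bound; the constant $6 = 2\times 3$ factors transparently into the Burkholder constant and the expected-maximum constant.

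To make the constant $3$ explicit rather than merely quoted, I would use the Doob decomposition $Z^\prime_{t\wedge\tau\wedge T_m} = Z^\prime_0 + M_t + A_t$ with $M$ a martingale and $A$ predictable, non-decreasing, and null at $0$. By Minkowski's inequality for the $\ell^2$ norm of increments, $\sqrt{[Z^\prime]} \le \sqrt{[M]} + \sqrt{[A]}$, and since $A$ is non-decreasing its square function is dominated by its total increase, $\sqrt{[A]} \le A_\infty$. Davis's inequality bounds $\EX[\sqrt{[M]}]$ by a multiple of $\EX[M^*]$, while $\EX[A_\infty] = \EX[Z^\prime_{\tau\wedge T_m}] - Z^\prime_0 \le \EX[Z^{\prime *}_{\tau\wedge T_m}]$ follows from taking expectations in the decomposition. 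Assembling these pieces against $\EX[Z^{\prime *}_{\tau\wedge T_m}]$ recovers the constant.

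The main obstacle I anticipate is not the probabilistic chaining but the verification of hypotheses at the random horizon $\tau\wedge T_m$: one must confirm that the stopped submartingale is integrable enough for Davis's/Burkholder's inequality to apply, and that the realized overshoot of $Z^\prime$ at $T_m$ (where $Z$ may exceed $m$) is absorbed by the same moment control that underlies Prop~\ref{result:zprime_exp_max}. Within the barriers $Z^\prime$ is bounded by $m - \frac{1}{\kappa r}$, so integrability is delicate only at the terminal step, which is precisely where the conditions carried over from Prop~\ref{result:zprime_exp_max} are needed. Care is also required to ensure that the optional (realized) square function, rather than the predictable quadratic variation, is the object being bounded, so that the inequality matches the definition of $[Z^\prime]_t$ given in the statement.
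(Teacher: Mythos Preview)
Your overall strategy is right and lands on the same factorization $6=3\times 2$: a Burkholder-type square-function inequality for the nonnegative submartingale $(Z^\prime_{t\wedge\tau\wedge T_m})$ combined with the expected-maximum bound from Prop.~\ref{result:zprime_exp_max}. That is exactly the paper's architecture.

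Where you diverge from the paper is in which Burkholder inequality you invoke. The paper applies Theorem~3.1 of \cite{burkholder73} directly; that result is a \emph{weak-type} distribution-function inequality for nonnegative submartingales, of the form
\[
\PR\big(\sqrt{[Z^\prime]_{\tau\wedge T_m}}>\epsilon\big)\;\le\;\frac{3}{\epsilon}\,\sup_n \EX\big[Z^\prime_{n\wedge\tau\wedge T_m}\big],
\]
and then bounds $\sup_n \EX[Z^\prime_{n\wedge\tau\wedge T_m}]\le \EX[Z^{\prime *}_{\tau\wedge T_m}]$ before feeding in Prop.~\ref{result:zprime_exp_max}. No Markov step is needed because the probability bound is already built into Burkholder's theorem, and the constant $3$ is the constant of that theorem.

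You instead route through Markov plus a \emph{strong-type} $L^1$ inequality $\EX[\sqrt{[Z^\prime]}]\le 3\,\EX[Z^{\prime *}]$. This is a stronger statement than the paper uses, and your Doob-decomposition sketch does not actually recover the constant $3$: bounding $\EX[\sqrt{[M]}]$ by Davis and $\EX[M^*]\le \EX[Z^{\prime *}]+\EX[A_\infty]$ yields a constant of the form $2c_D+1$ with $c_D$ the Davis constant, not $3$. The paper's own remark following the theorem notes that for nonnegative submartingales the strong-type $L^p$ bound on the square function is only available for $1<p<\infty$, which is precisely why the weak-type route is taken for $p=1$. Your argument still produces a bound of the right shape with some absolute constant, but if you want the stated constant $6$, use the weak-type inequality as the paper does rather than Markov plus a strong $L^1$ bound.
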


\begin{center} \hyperlink{pf:zprime_qv_pr}{\texttt{[Link to Proof]}} \end{center}

This result is also quite powerful. Considering the same setting as above, we have $$\PR( \sqrt{[Z^\prime]_{\tau\wedge T_1}} > 0.1) \leq 0.127$$ in the stable domain.

Bounds on the expectation of quadratic variation can also be obtained using a more classical form of Burkholder's inequality, albeit with stronger assumptions. We develop this idea in the next remark.

\begin{remark}
There is an additional form of Burkholder's inequality that extends to non-negative submartingales. If we are additionally given a useful bound on $\EX\Big[ \big(Z_{\tau\wedge T_m}^\prime\big)^p \Big]$ for some $1<p<\infty$ (for instance, if we have some distribution assumptions on $(X_t)$), then we can apply Lemma~3.1 in \cite{burkholder73} to derive the following bound on quadratic variation expectations:
$$\EX\Big[ \big([Z^\prime]_{\tau\wedge T_m}\big)^p \Big]^{\frac{1}{p}} \leq \frac{9p^{\frac{1}{2}}}{1-p^{-1}} \EX\Big[ \big(Z_{\tau\wedge T_m}^\prime\big)^p \Big]^{\frac{1}{p}}.$$
A topic of ongoing research is obtaining the Best constants/bounds in Burkholder's inequality, which may be able to tighten the bound.
The classical two-sided Burkholder inequailty may not extend to non-negative submartinagales. In general, only the first half of the Burkholder inequality (bounding expectations about quadratic variation by the maximum) extends to this setting and only for $1<p<\infty$. This contrasts with Proposition~\ref{result:zprime_qv_pr}, where we can derive results about probability of large quadratic variation of non-negative submartingales for the $p=1$ case. From a practical point of view, this may be sufficient.
\end{remark}

Notice that with an effective bound on the expectation of quadratic variation (QV) of the entire stable process, we have by law of large numbers
$$\frac{QV}{n} \rightarrow 0 \text{ as } n\rightarrow \infty.$$
So the longer the process is stable, the smaller the variability.

As we've characterized this `stable' domain based on $\tau$ and $T_m$, an exit from this region corresponds to either a change in expectations ($\tau$) or a large deviation event ($T_m$).
In actual applications, we will know when these stopping times arrive (or will at least have good measures of it, when hard to directly observe).
These could be used by system stakeholders as indicators that the local regime is changing.
Statistical analysis on historical data could also  predict how likely we are to see such indicators in coming steps.



\subsection{`Unstable' domain}

We now characterize how the stablecoin can be interpreted as unstable outside of the barriers described above. The intuition here is that the speculator's position is nearer to $c(L_t)$ and $b(L_t)$, and so expected costs of liquidation increase and are more sensitive to the threshold proximity, in addition to being driven by the volatile process $(X_t)$.
The remaining results in this section characterize a deflationary regime that is connected with instability in terms of forward-looking variance of stablecoin prices and large deviations. In this regime, we observe deleveraging spirals, which resemble short squeezes, and are counterintuitive as they lead to stablecoin price appreciation during times of collateral shock and lead to faster collateral drawdown.

Our next result characterizes a deflationary regime defined by stopping times $S_1$ and $S_2$. In such a setting, an opposite behavior occurs compared to the stable region: $(Z_t)$ behaves as a submartingale, tending to increase in price. The submartingale nature of the stablecoin price underpins the short squeezes within \emph{deleveraging spirals}.

\begin{theorem}\label{result:unstable_submg}
Restarting the process at $S_1$, we have that $(\mathcal L_{t\wedge S_2})$ is a supermartingale and $(Z_{t\wedge S_2})$ is a submartingale.
\end{theorem}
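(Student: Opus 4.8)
The plan is to reduce both claims to a single region inequality: for every time $t$ with $S_1 \le t < S_2$, the definitions of the two stopping times force $\EX[\mathcal L_{t+1}\mid \mathcal F_t] < \mathcal L_t$. Indeed, $S_2$ is by construction the first time after $S_1$ at which $\EX[\mathcal L_{t+1}\mid\mathcal F_t] \ge \mathcal L_t$, so on the interval $[S_1, S_2)$ the reverse strict inequality must hold (including at $t = S_1$, which is where the deflationary event is first triggered). I would first record this observation, together with integrability: Assumption 5 bounds $\mathcal L_t \ge v > 0$ from below, and Assumption 4 (whose bound $u \ge c(L_t)$ translates to an upper bound on $\mathcal L_t$) bounds it from above, so both $\mathcal L_t$ and $Z_t = \mathcal D/\mathcal L_t \le \mathcal D/v$ are bounded, hence integrable, and both are $\mathcal F_t$-adapted by construction.

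For the supermartingale claim about $(\mathcal L_{t\wedge S_2})$, I would invoke the standard stopped-process identity
$$\EX\big[\mathcal L_{(t+1)\wedge S_2}\mid \mathcal F_t\big] - \mathcal L_{t\wedge S_2} = \Ind_{\{S_2 > t\}}\big(\EX[\mathcal L_{t+1}\mid\mathcal F_t] - \mathcal L_t\big),$$
which holds because $\{S_2 > t\} \in \mathcal F_t$ and $\mathcal L_{t\wedge S_2}$ is $\mathcal F_t$-measurable. Restarting from $S_1$, on the event $\{S_2 > t\}$ we have $S_1 \le t < S_2$, so the region inequality makes the bracketed term strictly negative; the right-hand side is therefore $\le 0$ and $(\mathcal L_{t\wedge S_2})$ is a supermartingale.

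For the submartingale claim about $(Z_{t\wedge S_2})$, the key step is to convert the supermartingale behavior of $\mathcal L$ into submartingale behavior of $Z = \mathcal D/\mathcal L$ by exploiting that $\phi(x) := \mathcal D/x$ is both convex and decreasing on $(0,\infty)$. Jensen's inequality gives $\EX[Z_{t+1}\mid\mathcal F_t] = \EX[\phi(\mathcal L_{t+1})\mid\mathcal F_t] \ge \phi(\EX[\mathcal L_{t+1}\mid\mathcal F_t])$, and since $\phi$ is decreasing while $\EX[\mathcal L_{t+1}\mid\mathcal F_t] < \mathcal L_t$ on $[S_1,S_2)$, the right-hand side is at least $\phi(\mathcal L_t) = Z_t$. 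Hence $\EX[Z_{t+1}\mid\mathcal F_t] \ge Z_t$ throughout the region, and feeding this into the same stopped-process identity (with $Z$ in place of $\mathcal L$) yields $\EX[Z_{(t+1)\wedge S_2}\mid\mathcal F_t] \ge Z_{t\wedge S_2}$, the submartingale property.

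I expect the only genuinely delicate point to be the $Z$-conversion: it is tempting but false to invoke ``a convex function of a supermartingale is a submartingale'' in general, and the argument succeeds precisely because $\phi$ is \emph{simultaneously} convex and monotone decreasing, so the Jensen step and the monotonicity step chain in the same direction. A secondary bookkeeping matter is formalizing ``restarting at $S_1$'': I would treat it via the shifted filtration $(\mathcal F_{S_1+t})$ applied to $(\mathcal L_{(S_1+t)\wedge S_2})$ and $(Z_{(S_1+t)\wedge S_2})$, the strong stopping-time structure guaranteeing that the region inequality is exactly what the stopped-process identity requires at each step.
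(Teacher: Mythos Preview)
Your proposal is correct and follows essentially the same route as the paper: the supermartingale claim for $(\mathcal L_{t\wedge S_2})$ is immediate from the defining condition of the interval $[S_1,S_2)$, and the submartingale claim for $(Z_{t\wedge S_2})$ is obtained by chaining Jensen's inequality for the convex map $x\mapsto \mathcal D/x$ with its monotone decrease applied to $\EX[\mathcal L_{t+1}\mid\mathcal F_t] < \mathcal L_t$. Your version is more explicit about integrability, the stopped-process identity, and the need for both convexity and monotonicity, but the core argument is identical.
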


\begin{center} \hyperlink{pf:unstable_submg}{\texttt{[Link to Proof]}} \end{center}


The previous result guarantees that the process, after crossing $S_1$, enters a deflationary regime in a precise sense. This deflationary regime can be triggered by the factors affecting $S_1$, such as any of the following: shocks to collateral levels, increased expectations around deleveraging costs, or depressed ETH expectations. Similarly to the results above, in real applications, these stopping times can be used by stablecoin stakeholders as indicators that the local regime is changing and to statistically estimate the probable lengths of such deleveraging spirals.

The intuition behind deleveraging spirals is illustrated in Figure~\ref{fig:delev_spiral_illustrate}. In an equilibrium, the stablecoin supply is matched to demand. As a first wave of speculator liquidations occur, whether voluntary deleveraging or automated by the protocol, collateral is used to repurchase the stablecoin to reduce the supply. In an imperfectly elastic market, this causes an imbalance in demand relative to supply, and an increase in stablecoin price is needed to reduce demand. This has an amplifying effect, however, in follow-on rounds of liquidations: more collateral is needed to reduce supply by the same amount because of the increased stablecoin price, and each round of liquidations continues to increase the stablecoin price.

\begin{figure}
	\centering
	\includegraphics[width=0.8\textwidth]{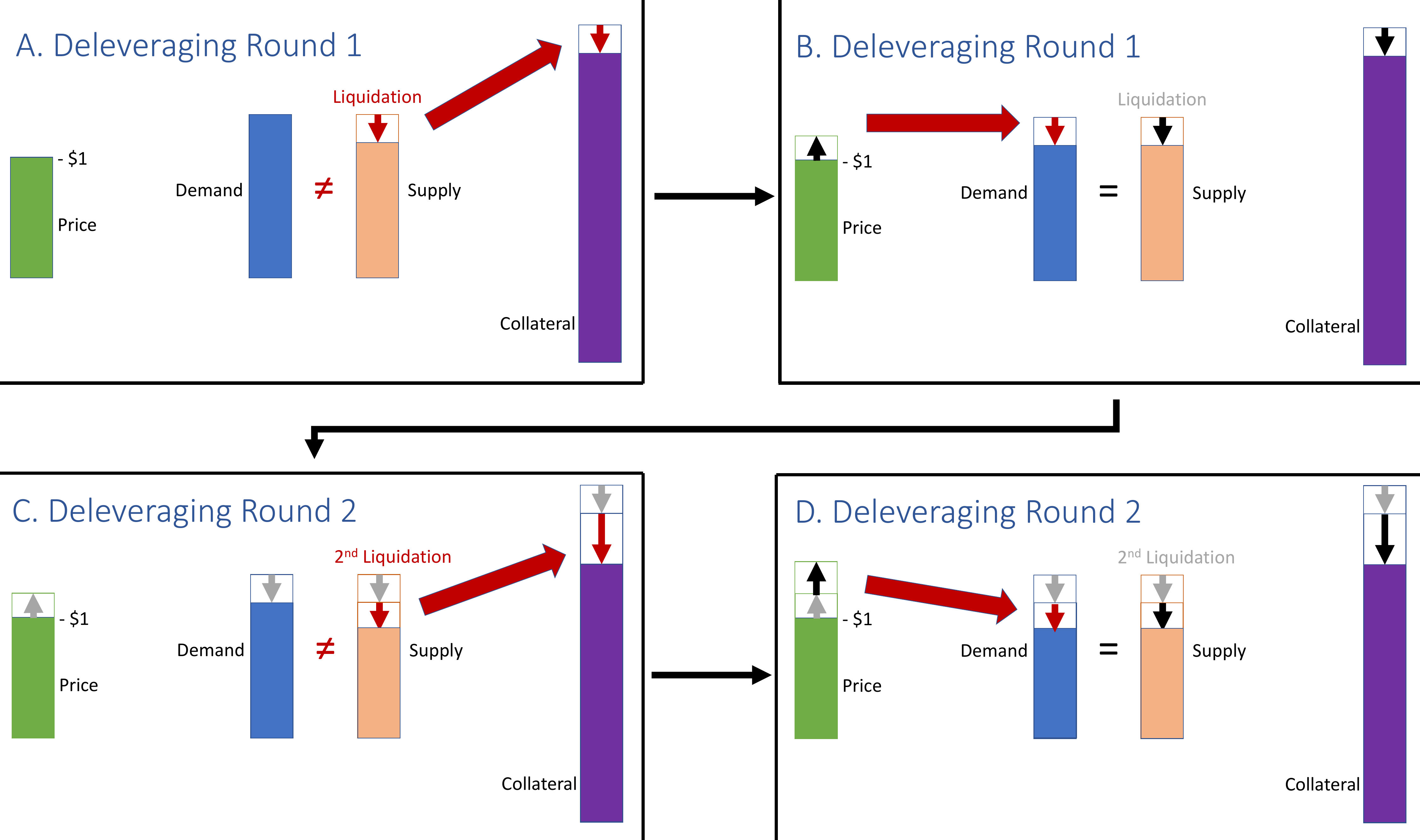}
	\caption{Illustration of deleveraging spirals. In liquidations, collateral is used to reduce supply. Stablecoin price rises in response to imbalance with demand. This has an amplifying effect in follow-on liquidations.}\label{fig:delev_spiral_illustrate}
\end{figure}

Black Thursday in March 2020 provides strong evidence of deleveraging spirals in the Dai stablecoin. ETH price crashed $\sim 50\%$ on 12 March 2020 (Figure~\ref{fig:eth_mar20}) This triggered a wave of liquidations in Dai, as well as other cryptocurrency systems. These liquidations led to a cornering effect from deleveraging spirals in the Dai market, as shown in Figure~\ref{fig:dai_mar20}.  Speculators faced premiums in excess of 10\% to deleverage during the crisis and lingering premiums $>2\%$ several weeks after. The cornering effect is also supported by lending rates on Dai, which reached high double digits during the crisis (Figure~\ref{fig:dai_lending_mar20}).
Maker was also affected by global mempool flooding on Ethereum during the crisis, which caused many Dai liquidation auctions to clear at near zero prices. This had the effect of amplifying the deleveraging effect on collateral and led to a \$4m shortfall in the system. See \cite{blocknative2020,maker_shortfall2020} for more details. Many market participants were surprised in this crisis that Dai traded at significant premiums despite the much riskier state of Maker in terms of collateral and liquidations, which our model explains as deleveraging spirals.

\begin{figure}
	\centering
	\begin{subfigure}[b]{0.4\textwidth}
		\includegraphics[width=\textwidth]{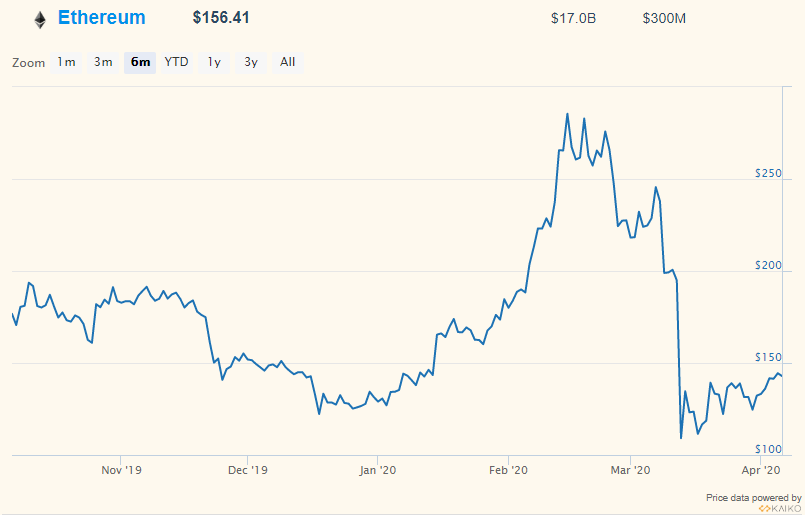}
		\caption{}\label{fig:eth_mar20}
	\end{subfigure}
	\begin{subfigure}[b]{0.4\textwidth}
		\includegraphics[width=\textwidth]{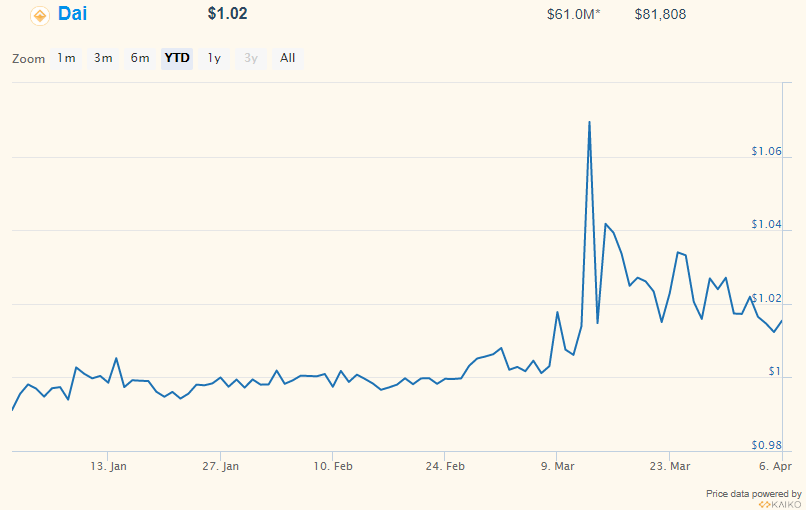}
		\caption{}\label{fig:dai_mar20}
	\end{subfigure}
	\begin{subfigure}[b]{0.5\textwidth}
		\includegraphics[width=\textwidth]{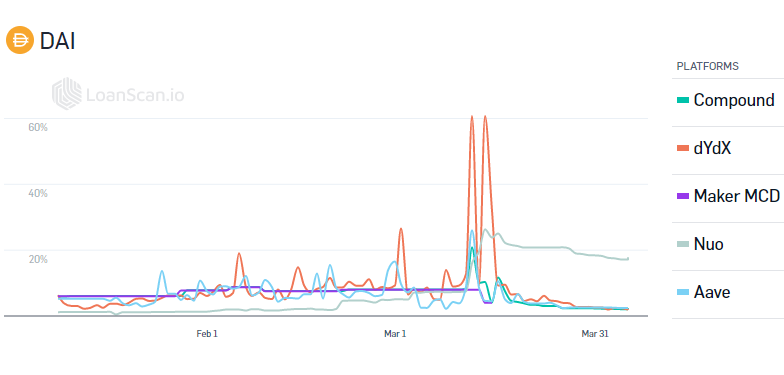}
		\caption{}\label{fig:dai_lending_mar20}
	\end{subfigure}
	\caption{Black Thursday in March 2020. (a) $\sim 50\%$ ETH price crash (OnChainFX). (b) Deleveraging effects on Dai price and volatility (OnChainFX). (c) Deleveraging effects on Dai lending rate (LoanScan)}\label{fig:mar20}
\end{figure}

\begin{remark}
(Interaction with cascading liquidations)
A different type of deleveraging spiral can occur in debt security models when the collateral asset price is endogenous to the model and can be depressed from the market impact of liquidations (e.g., fire sales). In this context, liquidations can cascade with a first round of liquidations triggering a follow-up rounds due to the impact on the collateral market. Conceptually, when this endogenous collateral effect is added to our model, the two deleveraging spiral types amplify each other. In particular, when the price of the stablecoin increases from the effects described above, more collateral must be liquidated to deleverage the same amount, and this greater collateral liquidation has a higher impact on the collateral asset market, which can trigger further liquidations cyclically in larger size than with the fire sale effect solely. We discuss how to endogenize collateral asset prices to the model in the Appendix.
\end{remark}

We now derive practical tools that will connect these regimes containing deleveraging spirals with instability in terms of forward-looking price variance of the stablecoin, and which do not require the detection of whether $S_1$ has occurred. This formalizes the high price variation observed in Dai during and after Black Thursday. We begin in the next remark by setting up a variance estimation idea based on Taylor approximation.

\begin{remark}\label{remark:var_approx}
(Estimating variances) Taylor approximations can be applied to estimate the variances of the stablecoin process. Consider $X_t = X_{t-1} R_t$ for return $R_t \geq 0$. For notational clarity, define\footnote{As in the case of $\psi$, $h$ could have a subscript $t$ (or equivalently other time $t$ inputs), but we relax notation as we only use in the context of time $t$.}
$$h(\rho, n) := \arg\max_{\mathcal{L}_t} \EX[Y_{t+1} | \mathcal F_t] = \mathcal L_t,$$
where $\rho,n$ are realizations of $R_t,\bar N_t$. Variance in stablecoin supply follows
$$\text{Var}(\mathcal{L}_t | \mathcal F_{t-1}) \approx h^\prime \left(\EX[R_t | \mathcal F_{t-1}], \bar N_t \right)^2 \text{Var}(R_t | \mathcal F_{t-1})$$
and the stablecoin price \textbf{variance approximation} is
\begin{equation}\label{eqn:var_approx}
\text{Var}(Z_t | \mathcal F_{t-1}) \approx \frac{\mathcal D h^\prime (\EX[R_t | \mathcal F_{t-1}], \bar N_t )^2}{\EX [\mathcal L_t | \mathcal F_{t-1}]^4} \text{Var}(R_t | \mathcal F_{t-1}).
\end{equation}
This is given informally, but could in principle be formalized using two steps of compounded Taylor approximation error. The approximation error is arguably moderate considering that our domain is bounded away from singularities (e.g., our lower bound results on $\mathcal L$).
\end{remark}

This variance approximation (Eq.~\ref{eqn:var_approx} in Remark~\ref{remark:var_approx}) is low in the stable domain and can be high in the unstable domain, as formalized in the following Theorem~\ref{result:var_approx}. We introduce a few more assumptions that we use only in deriving the remaining results in this section. All of these assumptions come down to assumed properties of the $R_t$ distribution.


\begin{assumption}
The post-decision collateral constraint at time $t$ is not binding in the speculator's maximization.
\end{assumption}

This first assumption means that the speculator's objective fully accounts for the post-decision collateral constraint (i.e., by maximizing the objective, the speculator by extension also satisfies the constraint). This is reasonable unless expected returns are excessively high.

\begin{assumption}
Returns $R_{t-1}$ and $R_t$ are independent.
\end{assumption}

\begin{assumption}
$\psi$ is twice continuously differentiable.
\end{assumption}

This last assumption restricts the density $g_t$. We now present the result, which applies the implicit function theorem to derive the derivatives of $h$, which describe the sensitivity of $h$ to price and collateral level.

\begin{theorem}\label{result:var_approx}
Under the above assumptions, the following hold:
\begin{enumerate}
\item $\frac{\partial}{\partial \rho} h(\rho,n)$ $\frac{\partial}{\partial n} h(\rho,n)$ exist;
\item $\frac{\partial}{\partial \rho} h(\rho,n) \geq 0$ and is increasing in $-\rho$ by order of $\frac{1}{\rho}$ for $\rho\geq \frac{b_{t-1}}{X_{t-1}}$, $\mathcal L_t > 8$;
\item $\frac{\partial}{\partial n} h(\rho,n) \geq 0$ and is increasing in $-n$ by order of $\frac{1}{n}$ for $n \geq \frac{b_{t-1}}{X_t}$, $\mathcal L_t > 8$;
\item $\exists \varepsilon$ with $0< \varepsilon < 1$, s.t. $\frac{\partial}{\partial \rho} h(\rho,n) > 1$ if $\rho <\varepsilon$ , $L_t > \frac{27}{46}\alpha\mathcal D$, and $c_t>2$.
\end{enumerate}
As a result, the variance approximation in Eq.~\ref{eqn:var_approx} increases by order of $\frac{1}{R_t^2}$ in $-R_t$ and $\frac{1}{\bar N_t^2}$ in $-\bar N_t$.
\end{theorem}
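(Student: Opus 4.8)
The plan is to obtain $h$ as the implicit solution of the first-order condition and read everything off from the implicit function theorem together with the scale invariance of Prop.~\ref{result:rescalings}. Write $\Phi(\mathcal L_t,\rho,n) := \frac{\partial \psi}{\partial \mathcal L_t}$. By Prop.~\ref{result:exy_concave} (with Assumption 9) $\psi$ is strictly concave, so its maximizer $\mathcal L_t = h(\rho,n)$ is the unique root of $\Phi=0$ and $\Phi_{\mathcal L_t} = \psi_{\mathcal L_t\mathcal L_t}<0$; Assumption 14 makes $\Phi$ continuously differentiable. The implicit function theorem then gives Part 1 and the formulas $\partial_\rho h = -\psi_{\mathcal L_t\rho}/\psi_{\mathcal L_t\mathcal L_t}$ and $\partial_n h = -\psi_{\mathcal L_t n}/\psi_{\mathcal L_t\mathcal L_t}$. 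Since the common denominator is strictly negative, both the signs and the orders of growth in Parts 2--4 are controlled by the two mixed partials.

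The key simplification is that, after changing the integration variable to the collateral value $\xi := \bar N_t X_{t+1}$ (equivalently to $z=R_{t+1}$), $\psi$ depends on $(\rho,n)$ only through the product $s:=n\rho$; this is precisely the second rescaling in Prop.~\ref{result:rescalings}. Hence $h(\rho,n)=\tilde h(s)$, so that $\partial_\rho h = n\,\tilde h'(s)$ and $\partial_n h = \rho\,\tilde h'(s)$. Parts 2 and 3 are therefore the same statement read in the two variables, and it suffices to analyze $\tilde h'(s) = -\Phi_s/\psi_{\mathcal L_t\mathcal L_t}$. In the $\xi$ variable the limits $nc_t=\tfrac12(\sqrt{\alpha^2\mathcal D^2+4\alpha\mathcal D\mathcal L_t+\mathcal L_t^2}-\alpha\mathcal D+\mathcal L_t)$ and $nb_t=\tfrac32\mathcal L_t$ and the integrand $\phi(\xi):=3-\frac{\alpha\mathcal D\xi}{2(\xi-\mathcal L_t)^2}$ are free of $s$, and all $s$-dependence sits in the scale-family density $G_s(\xi)=\frac{1}{X_{t-1}s}g_t\!\big(\tfrac{\xi}{X_{t-1}s}\big)$.

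The scale identity $\partial_s G_s = -\frac1s\partial_\xi(\xi G_s)$ factors a $\frac1s$ out of $\Phi_s$: integrating by parts, $s\,\Phi_s$ is a bounded combination of the boundary densities $\xi G_s$ at $nc_t,nb_t$ and the integral $\int_{nc_t}^{nb_t}\phi'(\xi)\,\xi G_s\,d\xi$. Consequently $\tilde h'(s)=\frac1s\cdot\frac{s\Phi_s}{|\psi_{\mathcal L_t\mathcal L_t}|}=\Theta(1/s)$, which yields $\partial_\rho h=\Theta(1/\rho)$ and $\partial_n h=\Theta(1/n)$ as claimed. Nonnegativity ($\tilde h'\geq0$) reduces to $s\Phi_s\geq0$, which I would obtain from the boundary evaluations --- note $\phi(nb_t)=3-3\alpha\mathcal D/\mathcal L_t$ and, by Assumption 8, $\phi(nc_t)=3-\frac{\alpha\mathcal D\,nc_t}{2(nc_t-\mathcal L_t)^2}\geq1$ --- together with the liquidation-effect sign condition (the first condition of Prop.~\ref{result:lt_ub}); Assumption 13 is used to treat $\EX[R_{t+1}\mid\mathcal F_t]$ as constant in $\rho$, so the $\Delta_t$-term of $\psi$ contributes nothing to the mixed partial.

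For Part 4 I would bound $|\psi_{\mathcal L_t\mathcal L_t}|$ from above via Assumption 8 (the regime $L_t>\tfrac{27}{46}\alpha\mathcal D$) and $s\Phi_s$ from below --- the extra hypothesis $c_t>2$ keeps the lower-limit boundary contribution controlled --- so that the coefficient $s\Phi_s/|\psi_{\mathcal L_t\mathcal L_t}|$ is bounded below by a positive constant; then $\partial_\rho h=(\text{coeff})/\rho>1$ once $\rho<\varepsilon$. Finally, substituting $\partial_\rho h=\Theta(1/R_t)$ and $\partial_n h=\Theta(1/\bar N_t)$ into Eq.~\ref{eqn:var_approx} and squaring gives the stated $\Theta(1/R_t^2)$ and $\Theta(1/\bar N_t^2)$ growth of the variance approximation. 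I expect the main obstacle to be the bookkeeping of the Leibniz boundary terms at the moving limits $c_t/X_t$ and $b_t/X_t$ (the liquidation integrand vanishes at $b_t$ but leaves a genuine term at $c_t$) and, more seriously, establishing the sign of $s\Phi_s$ and the uniform positive lower bound on the coefficient needed for the strict inequality in Part 4.
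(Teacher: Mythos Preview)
Your proposal is correct and takes a genuinely different route from the paper. Both arguments begin with the implicit function theorem to obtain $\partial_\rho h=-\psi_{\mathcal L\rho}/\psi_{\mathcal L\mathcal L}$ and $\partial_n h=-\psi_{\mathcal L n}/\psi_{\mathcal L\mathcal L}$, but from there they diverge. The paper computes $\psi_{\mathcal L\rho}$ and $\psi_{\mathcal L\mathcal L}$ explicitly via the Leibniz rule, verifies the signs of each summand using the boundary evaluations $3-\tfrac{\alpha\mathcal D Nb}{2(Nb-\mathcal L)^2}<0$ and (via Lemma~\ref{prop:elem_ineq}, which is where the constant $\tfrac{27}{46}$ appears) $\tfrac{\alpha\mathcal D Nc}{2(Nc-\mathcal L)^2}\leq 2$, and then argues term by term that each numerator summand grows by a factor $1/\rho$ faster than its counterpart in the denominator; it then repeats the entire computation for $n$ with the remark that it proceeds ``in essentially the same way.'' Your substitution $\xi=nX_{t-1}\rho z$ is a real simplification: it makes the limits $nc_t,nb_t$ and the integrand independent of $(\rho,n)$, pushes all dependence into the scale family $G_s$ with $s=n\rho$, and the identity $\partial_s G_s=-\tfrac1s\,\partial_\xi(\xi G_s)$ extracts the $1/s$ factor cleanly while unifying Parts~2 and~3 into a single statement. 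What this buys you is symmetry and less repetition; what it costs is that the ``bounded combination'' you need for $s\Phi_s$ and the comparison with $|\psi_{\mathcal L\mathcal L}|$ (which also depends on $s$ through $G_s$ and through $\mathcal L=h(s)$) still require exactly the boundary-term bookkeeping and integrand comparisons the paper does explicitly---your integration-by-parts terms at $nc_t$ and $nb_t$ are the same objects the paper manipulates directly. For Part~4 the paper's argument is similar in spirit to yours: it shows each numerator term dominates the corresponding denominator term for small $\rho$, leaving only the $\rho$-constant first term of $\psi_{\mathcal L\mathcal L}$, and then chooses $\varepsilon$ so that the ratio exceeds~$1$.
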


\begin{center} \hyperlink{pf:var_approx}{\texttt{[Link to Proof]}} \end{center}

Theorem~\ref{result:var_approx} shows that the variance approximation in Eq.~\ref{eqn:var_approx} in Remark~\ref{remark:var_approx} increases by order of $\frac{1}{R_t^2}$ during an ETH return shock (result 2). Recall that $R_t$ is multiplicative return, and so the effect is large for a significant shock $R_t<1$. Similarly, settings with lower collateralization in the initial conditions have higher variance approximation by order of $\frac{1}{\bar N_t^2}$ (result 3). Such differences in initial conditions of collateral could result from, for example, different realizations of liquidations or the speculator abandoning its collateral position (and so extracting any excess collateral it can). Result 4 shows that there are cases where the $h^\prime$ factor in the variance approximation is $> 1$, meaning that the variance of $R_t$, the inherently volatile process, will carry through directly to $Z_t$, the `stable' process.

Note that the extra conditions on the scale of $\mathcal L_t$ and $c_t$ in Theorem~\ref{result:var_approx} results 2-4 may seem strange at first sight. Since the $(Z_t)$ process is scale-invariant, as proven in Proposition~\ref{result:rescalings}, the results about $Z_t$ variance hold more generally. In particular, recall that a term of $\sim \frac{1}{\mathcal L_t}$ shows up in the variance approximation in Remark~\ref{remark:var_approx}, which will cancel out the conditions on scale.


Up to this point, we have only been able to say things about variance estimations. We will now show that the `stable' and `unstable' regimes are well-interpreted in the following sense: given different initial conditions of the same process, the forward-looking stablecoin price variances are indeed distinct. If we start in the unstable regime, we will always have higher variance than if we start in the stable regime. The next result formalizes this.

\begin{theorem}\label{result:unstable_var}
In addition to the previous assumptions, suppose $X_t \geq b(L_{t-1}) + \epsilon$ for some $\epsilon>0$ (the pre-decision collateral constraint is exceeded by $\epsilon$, which restricts the ranges of both $X_t$ and $\bar N_{t-1}$).
Consider two possible states $s$ and $u$ of the stablecoin at time $t$ that differ only in collateral amounts $\bar N_{t-1}^s > N_{t-1}^u$ and evolve driven by the common price process $(X_t)$. Then the forward-looking price variances satisfy
$$\text{Var}(Z_t^s | \mathcal F_{t-1}) < \text{Var}(Z_t^u | \mathcal F_{t-1}).$$
\end{theorem}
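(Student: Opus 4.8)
The plan is to reduce the comparison to the variance approximation of Remark~\ref{remark:var_approx} and then to show that this approximation is strictly decreasing in the collateral level, so that the lower-collateral (unstable) state $u$ inherits the larger variance. First I would use the new hypothesis $X_t \geq b(L_{t-1}) + \epsilon$ to place both states in the no-liquidation regime at time $t$: since $b(L_{t-1}) = \tfrac{\beta L_{t-1}}{\bar N_{t-1}}$ is larger for the smaller collateral $\bar N_{t-1}^u$, the condition applied to state $u$ already guarantees the pre-decision constraint holds (strictly) in both states, so no liquidation fires and $\bar N_t = N_{t-1}$ is $\mathcal F_{t-1}$-measurable in each. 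Combined with the non-binding-constraint assumption, the supply in each state is the smooth maximizer $\mathcal L_t = h(R_t, \bar N_t)$, and because $\bar N_t$ and the return law are fixed given $\mathcal F_{t-1}$, all conditional randomness in $Z_t = \mathcal D/\mathcal L_t$ enters through $R_t$ alone, which is exactly the regime in which Eq.~\ref{eqn:var_approx} is valid.

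Next I would write the forward-looking variance as $V(n) := \frac{\mathcal D\,(\partial_\rho h(\EX[R_t|\mathcal F_{t-1}], n))^2}{\EX[\mathcal L_t|\mathcal F_{t-1}]^4}\,\mathrm{Var}(R_t|\mathcal F_{t-1})$ with $n = \bar N_t$, and show $V(n^s) < V(n^u)$ for $n^s > n^u$ by proving $V$ is strictly decreasing in $n$. The denominator $\EX[h(R_t,n)|\mathcal F_{t-1}]^4$ is strictly increasing in $n$, since $\partial_n h > 0$ (the sign and the implicit-function expression for $\partial_n h$ coming from result 3 of Theorem~\ref{result:var_approx} together with the strict concavity of $\psi$). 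For the numerator I would invoke the scale invariance of Prop.~\ref{result:rescalings}, which forces $h$ to depend on $(\bar N_t, X_t)$ only through the collateral value and hence lets me write $h(\rho,n) = \tilde h(n\rho)$; then $\partial_\rho h(\bar R, n) = n\,\tilde h'(n\bar R)$, and the order-$\tfrac{1}{n}$ growth of $\partial_n h = \rho\,\tilde h'(n\rho)$ established in Theorem~\ref{result:var_approx} controls $\tilde h'$ well enough that the numerator cannot grow fast enough to offset the denominator. Equivalently, scale invariance gives $Z_t^s(\rho) = Z_t^u(\lambda\rho)$ with $\lambda = n^s/n^u > 1$, and the decay orders of Theorem~\ref{result:var_approx} yield $\lambda\,|\partial_\rho Z_t^u(\lambda\bar R)| < |\partial_\rho Z_t^u(\bar R)|$, which is precisely the statement that the $\lambda$-compressed state has the smaller leading-order variance.

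Assembling these, the concluding order statement of Theorem~\ref{result:var_approx} --- that Eq.~\ref{eqn:var_approx} increases by order $\tfrac{1}{\bar N_t^2}$ in $-\bar N_t$ --- delivers $V(n^s) < V(n^u)$, i.e. $\mathrm{Var}(Z_t^s|\mathcal F_{t-1}) < \mathrm{Var}(Z_t^u|\mathcal F_{t-1})$. I expect the main obstacle to be the numerator step: result 3 of Theorem~\ref{result:var_approx} bounds $\partial_n h$, but the variance depends on $\partial_\rho h$ and on how it varies with $n$, i.e. on the cross partial $\partial_n\partial_\rho h$. Turning the stated asymptotic orders into a genuine strict monotonicity of $V$ (rather than a leading-order heuristic) is the delicate point, and the cleanest way around it is the scale-invariance collapse $h(\rho,n)=\tilde h(n\rho)$, which reduces both partials to a single function $\tilde h$ and makes the $\tfrac{1}{n^2}$ order transparent; keeping the inequality strict then follows from the strict concavity of $\psi$ guaranteed by the earlier assumptions.
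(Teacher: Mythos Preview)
Your argument establishes the desired inequality only for the Taylor \emph{approximation} of the variance in Eq.~\ref{eqn:var_approx}, not for the actual conditional variance. Remark~\ref{remark:var_approx} is explicitly informal, and a strict inequality between first-order approximations does not imply a strict inequality between the exact quantities; the step ``$\lambda\,|\partial_\rho Z_t^u(\lambda\bar R)| < |\partial_\rho Z_t^u(\bar R)|$ is precisely the statement that the $\lambda$-compressed state has the smaller leading-order variance'' is exactly where the argument stays heuristic. Your scale-invariance collapse $h(\rho,n)=\tilde h(n\rho)$ is a nice observation and is indeed consistent with Prop.~\ref{result:rescalings}, but it still only compares derivatives at a point, and you never control the error term of the Taylor step, so the final display remains unproven.

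The paper closes this gap by a different route that avoids the approximation entirely. It works with the exact map $f(x,n)=1/h(x,n)$, so that $Z_t=\mathcal D\,f(X_t,n)$, and verifies directly (using the implicit-function computations already carried out in Theorem~\ref{result:var_approx}) that $f$ is convex, strictly decreasing, with $\partial f/\partial n$ strictly increasing both in $n$ and in $x$. These monotonicity/convexity properties are precisely the hypotheses of a variance-comparison inequality for convex transformations of random variables (Theorem~3.1 in \cite{see08}), which yields $\mathrm{Var}\big(f(X_t,n^s)\mid\mathcal F_{t-1}\big)<\mathrm{Var}\big(f(X_t,n^u)\mid\mathcal F_{t-1}\big)$ exactly; multiplying by $\mathcal D^2$ gives the theorem. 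Your preliminary work (placing both states in the no-liquidation regime via $X_t\geq b(L_{t-1})+\epsilon$, and identifying $\mathcal L_t=h$ as the smooth maximizer) is correct and is also implicitly used in the paper's proof, but the decisive step is this external variance inequality, not the delta-method heuristic.
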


\begin{center} \hyperlink{pf:unstable_var}{\texttt{[Link to Proof]}} \end{center}

Special care should be given to the treatment of $Z_t$ under the condition $X_t \leq c(L_{t-1})$, as the STBL price may no longer be well-defined without $\zeta>0$ as no collateral remains. In a real system, this is equivalent to the event that all speculators are wiped out. The reason for our condition on $X_t$ in the above result is partly to keep things well-defined and partly because there can be a non-smooth point in $h$ at $X_t = b(L_{t-1})$.

Similar variance difference results can be derived for varying initial conditions of $X_{t-1}$ and $\mathcal L_{t-1}$ as opposed to $\bar N_{t-1}$. In some sense, these are all similar as they change the initial collateralization level, though there will be some difference in price effect.


These analytical results describe regimes in which the stablecoin can be interpreted as stable and unstable. As we have discussed, they can be adapted into data-driven risk tools, for instance to estimate probabilities of peg deviations and to infer about how likely regimes are to change in the near future.

While these results apply over limited steps ahead--e.g., forward-looking variance is derived for the next time period--they \emph{do} point in the right direction that stability domains are related to traditional measures in finance. Naturally, it would be good to have results describing further periods into the future. In principle, these could be estimated, although the process in this section is already complex. The fact that we are able to relate these regimes analytically to forward-looking variance is already a step ahead, and a valuable new result in its own right. We conjecture that it could work similarly over multi-steps, though in less tractable ways.

\section{Stability in `Perfect' Settings}\label{sec:perfect_stability}

In the previous section, we considered the given model of a single speculator facing imperfectly elastic demand for STBL. We now consider idealized settings, in which STBL demand is perfectly elastic and/or unlimited speculator supply exists. In these idealized settings, we demonstrate that stablecoin can be interpreted as well-stabilized.

\subsection{Perfectly elastic demand}
Under perfectly elastic demand, STBL demand is time-dependent $\mathcal D_t$, which adapts in each time period to match STBL supply. This results in $Z_t = 1$. In this case, the speculator's issue and repurchase price is always \$1 and $\$\alpha$ in a liquidation. The problem simplifies to evaluating

$$\begin{aligned}
\EX[Y_{t+1} | \mathcal F_t]
&= \Delta_t\EX[R_{t+1} | \mathcal F_t]
+ \int_{\frac{c_t}{X_t}}^\infty (\bar N_t X_t z - \mathcal L_t)g(z)dz \\
&\hspace{0.5cm}
+ (1-\alpha)\int_{\frac{c_t}{X_t}}^{\frac{b_t}{X_t}} \frac{\beta L_t - \bar N_t X_{t}z}{\beta-1}  g(z)dz,
\end{aligned}$$
where the liquidation effect is now $\ell_{t+1} (1-\alpha)$ where $\ell_{t+1} = \frac{\beta L_t - \bar N_t X_{t+1}}{\beta-1}$.

In this setting, we have
$\frac{\partial \psi}{\partial \mathcal L_t} = \EX[R_{t+1}|\mathcal F_t] - \PR(A_t\cup B_t) - \frac{\beta}{\beta - 1}(\alpha-1) \PR(B_t)$.
Recalling that $\PR(A_t)$ and $\PR(B_t)$ are functions of $\mathcal L_t$ and supposing a non-binding collateral constraint, the speculator chooses $\mathcal L_t$ such that
$$\EX[R_{t+1} |\mathcal F_t] = \PR(A_t \cup B_t) + \frac{\beta}{\beta - 1}(\alpha-1) \PR(B_t).$$
Noting that $\EX[R_{t+1}] \geq 1$, $\PR(A_t \cup B_t)$ is decreasing in $\mathcal L_t$ but generally $\sim 1$, and $\PR(B_t)$ is increasing in $\mathcal L_t$, this is interpretable as the speculator balancing expected return against $\frac{\beta}{\beta-1}\times$  the expected (constant) liquidation cost in deciding whether to issue a new unit of STBL.

In this setting, the STBL price is identically \$1 and the speculator only faces the risk of leveraged ETH declines subject to a fixed liquidation fee. Liquidations generally work well to keep the system over-collateralized, and the only real risk to STBL holders is from extreme single period declines in ETH price.


\subsection{Unlimited speculator capital supply}
Suppose there is an infinite depth of speculator's capital ready to enter the STBL market given what they see as a profitable opportunity subject to STBL demand $\mathcal D$. 
The speculator in such a market would choose to deposit collateral and issue new STBL at time $t$ if $\frac{\mathcal D \mathcal L_{t-1}}{\mathcal L_t^2} \EX[R_{t+1} | \mathcal F_t]  - \gamma \geq 0$, where $\gamma$ represents the representative speculator's expected liability and liquidation cost after entering the market. Arguably, $\gamma \sim 1$ as, in an infinite depth market, the  speculator can start from a position of low leverage.

The speculator's profitability (for the marginal  STBL issue) will be 0, which yields equality in the above condition, and therefore,
$$\mathcal L_t = \sqrt{\gamma \mathcal D \mathcal L_{t-1} \EX[R_{t+1} | \mathcal F_t]}.$$
Notice the similarity with the upper bound in Proposition ~\ref{result:lt_ub}. 

Further using that $(X_t)$ is a submartingale, in which case $\EX[R_{t+1}|\mathcal F_t] \geq 1$, we find the  STBL price is constrained to a small range of $Z_0 \geq Z_t \geq \frac{1}{\gamma r}$. This resembles the perfectly elastic demand case. In this case speculators are able to liquidate positions without influencing STBL price, while in the infinite depth case because the speculator is always willing to issue new STBL to offset a liquidation.

\subsection{No stable region if $(X_t)$ is not a submartingale}
The mechanisms that make the idealized settings well-stabilized break down when the ETH price process $(X_t)$ is not a submartingale. This stresses how fragile the stablecoin market is to negative expectations in the primary ETH market, even under these idealized settings. In the unlimited speculator case,  speculators no longer enter the market if expectations are negative, and so we don't achieve the supply bound developed above. Instead, we return to the main setting of the paper, which can be interpreted as unstable under negative expectations as it leads to deleveraging effects. In the perfectly elastic demand setting, the STBL supply goes to zero as the speculator chooses not to participate.

\section{Discussion}\label{sec:discussion}

This paper presents a new stochastic model of non-custodial over-collateralized stablecoins,  where the collateral has value exogenous to the stablecoin system and the stablecoin has an endogenous market price. These stablecoins bear a resemblance to a non-custodial form of the current monetary system of commercial bank money but give rise to new risks such as those experienced on Black Thursday. These stablecoins stand in contrast to unbacked or endogenously backed stablecoins, such as Terra UST, which are better understood using tools of insolvency and currency peg models, as well as custodial stablecoins such as Tether, which can resemble the underlying structures of narrow banks or money market funds.

In our model, we formally characterize domains that can be interpreted as stable and unstable for the stablecoin. By bounding the probability of large deviations and the quadratic variation of the price process, we prove that the stablecoin behaves in a stable way when restricted to a certain region. In contrast, price variance is shown to be distinctly greater in a separate region. This is triggered by large deviations, collapsed expectations, and liquidity problems from deleveraging. We also characterize a deflationary deleveraging spiral as a submartingale, which can exacerbate liquidity problems in a crisis. These deleveraging spirals resemble short squeezes, and are counterintuitive as they lead to stablecoin price appreciation during times of shock, whereas we might otherwise expect prices to depreciate given the riskier state of the system. Further, this appreciation is detrimental: it leads to faster collateral drawdown, and potentially shortfalls, as more collateral is required to fulfill liquidations and is accompanied by higher price variance.

An observation from the model is that the speculator chooses a collateral level \emph{above} the required collateral factor. This is because the expected liquidation cost is greater than the \$1 face value. The speculator will desire to increase the collateralization during times when the expected liquidation cost is higher, which can occur after a shock to collateral value or if the speculator expects the collateral to be more volatile. This generally explains the high level of over-collateralization seen in Dai, which typically ranges $2.5-5\times$ although the collateral factor is $1.5\times$.

The presence of deleveraging effects poses fundamental trade-offs in decentralized design. One way to bring the stablecoin closer to the `perfect' stability cases is to increase elasticity of demand. This relies on the presence of good uncorrelated alternatives to the stablecoin. As all non-custodial stablecoins likely face similar deleveraging risks, greater elasticity relies on custodial stablecoins or greater exchangeability to fiat currencies. Another way to bring the stablecoin closer `perfect' stability is to increase the supply of new speculators. As there will not be unlimited supply of speculators with positive ETH expectations (especially during an extended bear market), this relies on having another uncorrelated collateral asset. As all decentralized assets are very correlated, this again largely relies on including custodial collateral assets, like Maker's recent addition of USDC.\footnote{Recall that custodial assets face their own risks, however, which may not be uncorrelated in extreme crises.  Custodial stablecoins  are subject to counterparty risk,  systematic
risks, bank run risks, asset seizure risk, and effects from negative interest rates. The treasury secretary J. Yellen referred to the materialization of these  risks in her annual testimony in front of the Senate Banking Committee, on May 10th 2022:  “A stablecoin known as TerraUSD experienced a run and declined in value,” Yellen said. “I think that this simply illustrates that this is a rapidly growing product and there are rapidly growing risks.”} While these measures strengthen the stability results, it's at the expense of greater centralization and moves the system away from being `non-custodial'.

We suggest a way to improve the design of Dai's savings pool toward damping deleveraging effects without greater centralization through incentivizing exchangeability of Dai during deleveraging events. In its current state, the Maker system charges fees to speculators, part of which it passes on to Dai holders as an interest rate if the holder locks the Dai into a savings pool. With modified mechanics, this savings pool can provide a buffer to deleveraging effects. For instance, if we allow Dai in the savings pool to be bought out at a reasonable premium to face value by a speculator who uses it to deleverage, then deleveraging effects are bounded by the premium amount up to the size of the savings buffer. The Dai holders who participate in this savings pool are then compensated for providing a repurchase option to the speculator. The Dai holder could elect to have the repurchase fulfilled in the collateral asset, or something else, like a custodial stablecoin. In this way, this mechanism can provide some of the benefits of the `perfect' stability settings while enabling Dai holders to choose how decentralized they want to be. A Dai holder who does not require high decentralization would elect to receive the compensation from the savings pool whereas a Dai holder who requires higher decentralization would choose not to use the savings pool. Our model can be extended to consider such mechanisms.

Since the release of our paper, mechanisms resembling this, which try to boost liquidity around liquidations to quell deleveraging spirals, have been adopted by projects such as  \cite{liquity}. Empirically, these mechanisms have the effect of smoothing deleveraging effects over a longer time period, lowering the effect of shocks but not entirely removing the short squeeze effect (see Figure~\ref{fig:liquity-stability-pool}).
Maker has chosen to go a different direction by maintaining direct exchangeability with the custodial USDC \cite{maker_psm}, which has allowed Dai to maintain a close peg through subsequent crises at the expense of heavy reliance on custodial stablecoins. The stablecoin Rai has chosen a third path of instituting negative rates on stablecoin holders during crises \cite{rai} via a PID controller, which is effectively charging stablecoin holders insurance premiums when demand for stablecoins outweights demand for leverage, thus lowering demand to help attain peg.

\begin{figure}
	\centering
	\includegraphics[width=0.5\textwidth]{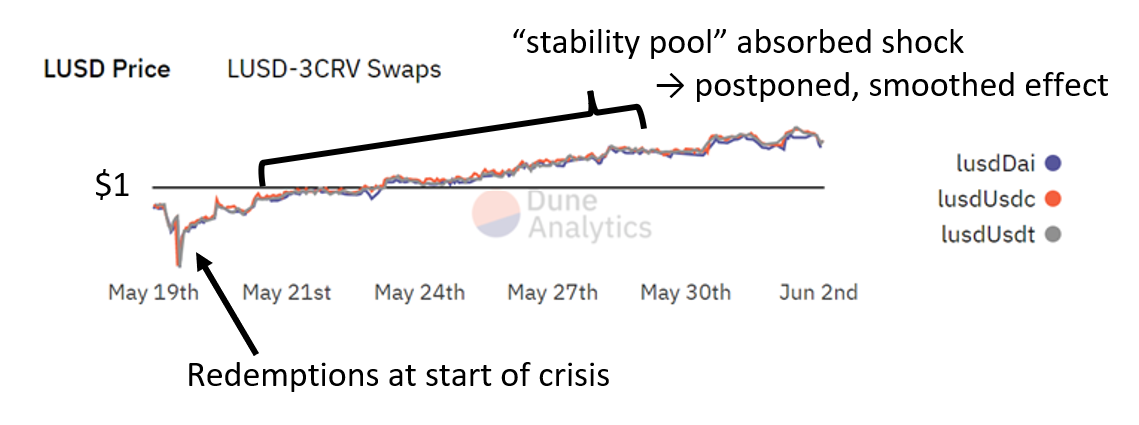}
	\caption{Effect of Liquity's stability pool on LUSD price in Curve's on-chain market in the May 2021 crisis. Deleveraging effect is delayed and smoother compared to Dai's price effect on Black Thursday (cf. Figure~\ref{fig:dai_mar20}).}\label{fig:liquity-stability-pool}
\end{figure}

Our model and results can also apply more broadly to synthetic and cross-chain assets and over-collateralized lending protocols that allow borrowing of illiquid and/or inelastic assets-- whenever the mechanism is based on leveraged positions and leads to an endogenous price of the created or borrowed asset. 
We have characterized the risk that such structures feature intertwining of collateral liquidation spirals and short squeezes of the created asset.
Synthetic assets generally use a similar mechanism just with a different target peg. Cross-chain assets that port an asset from a blockchain without smart contract capability (e.g., Bitcoin) to a blockchain with smart contracts (e.g., Ethereum) also tend to rely on a similar mechanism. In non-custodial constructions such as \cite{zamyatin19} and \cite{tbtc}, vault operators are required to lock ETH collateral in addition to the deliverable BTC asset. They bear a leveraged ETH/BTC exchange rate risk and face similar deleveraging risk. In particular, to reduce exposure, they need to repurchase the version of the cross-chain asset on Ethereum.

Several generalizations of analytical results are left for future research.  Here we considered  collateral prices exogenous, but it would be interesting to model market impact effects of large collateral liquidations and also enable modeling of stablecoins like Synthetix sUSD that have \emph{endogenous} collateral (see \cite{klagesmundt2020stablecoins}). One possible way to endogenize collateral prices is via an inverse demand function.
We expect that the general methods used in this paper can be applied to partial equilibrium settings such as this. Naturally, this would necessitate conditions on the inverse demand function that ensure that the expected returns as a function of  the issuance remains concave.




We have specified the speculator's decision-making in terms of a sequence of one-period optimization problems. Alternatively, the speculator could strategically coordinate the sequence of decisions further into the future and develop long-term strategies. This could be formulated by using an exit time for the speculator, when they can cash our their position by selling to someone else at par.  If this terminal time is deterministic, the problem can be formulated as a dynamic program, in which the terminal decision results from the one-period optimization, intermediate decisions solve a Bellman equation conditioned on the information revealed up to that point, and random returns are independent.  For instance, \cite{biais18} sets up a supermodular game in a setting where agents exit at a random exponential time.
 in t
The model could be extended to  include multiple speculators. Speculators have in reality a finite depth and moreover, they maintain positions with different leverage points and ETH expectations. This can lead to a sequential schedule of liquidation points at a given time throughout the system, which will be reflected in a speculator's expected liquidation costs. A given speculator will take into account price effects from the potential liquidations of other speculators' positions in addition to their own, see \cite{mincawissel} for leveraging-deleveraging games in the traditional banking system. Here, the speculator's value depends on liquidation costs  and on the supply limit imposed by the finite market depth.  Incorporating strategic aspects is left for future research.

%

\section{Data availability statement}
The contribution of this paper is theoretical. Where examples have been provided to support theoretical findings, price data is publicly available (by Kaiko - Digital Assets Data Provider and LoanScan platform).

\bibliographystyle{apalike}
\bibliography{references}

\newpage
\appendix

\section{Proofs}\label{sec:proofs}

In the proofs, we often use the following elementary result
\begin{lemma}\label{prop:elem_ineq}
For $\alpha,\mathcal D, L \geq 0$,
$$\alpha\mathcal D + L \leq \sqrt{\alpha^2\mathcal D^2 + 4\alpha\mathcal D L +L^2} \leq \min\Big(2\alpha \mathcal D + L, \alpha \mathcal D + L + \sqrt{2\alpha \mathcal D L}\Big).$$
\end{lemma}

\begin{proof}
Define $\varepsilon := \sqrt{\alpha^2\mathcal D^2 + 4\alpha\mathcal D L +L^2}$. We have $\varepsilon \leq 2\alpha \mathcal D + L$ as long as $\alpha \mathcal D \geq L(\sqrt{3}-2)$, which is true since $\alpha,\mathcal D,L\geq 0$. Next, notice that $\varepsilon = \sqrt{(\alpha\mathcal D + L)^2 + 2\alpha\mathcal D L}$. Thus $\varepsilon > \alpha\mathcal D + L$ since $2\alpha\mathcal DL \geq 0$. Lastly, by concavity, $\varepsilon \leq \alpha\mathcal D + L + \sqrt{2\alpha \mathcal D L}$.
\end{proof}

\noindent\rule{\textwidth}{1pt}
\paragraph{Proposition~\ref{result:exy_concave}} \hypertarget{pf:exy_concave}{}
\begin{proof}
Consider $X_{t+1} = X_t R_{t+1}$. For notational simplicity, drop subscripts as follows: $\bar N_t \mapsto N$, $X_{t} \mapsto X$, $\mathcal L_t \mapsto \mathcal L$, $\Delta = \mathcal L_t - \mathcal L_{t-1}$, $c(L_t) \mapsto c$, $b(L_t) \mapsto b$, $g_t \mapsto g$, $R_{t+1} \mapsto R$. Define $\psi := \EX[Y_{t+1} | \mathcal F_t]$. Then
$$\psi(\mathcal L)
= \frac{\Delta \cdot \mathcal D}{\mathcal L}\EX[R | \mathcal F_t]
+ \int_{c/X}^\infty (NXz - \mathcal L)g(z)dz
+ \int_{c/X}^{b/X} \left(3\mathcal L - \frac{\alpha\mathcal D \mathcal L}{2NXz - \mathcal L} - 2NXz\right)g(z)dz.$$
Recall that the integrand factor $\left(3\mathcal L - \frac{\alpha\mathcal D \mathcal L}{2NXz - \mathcal L} - 2NXz\right)$ evaluated at $Xz=c$ is $\mathcal L - Nc$ (the liquidation zeros out the speculator's collateral position), and evaluated at $Xz=b$ is 0 (on the threshold of liquidation).

We obtain
$$\begin{aligned}
\frac{\partial \psi}{\partial \mathcal L}
&= \frac{\mathcal D \mathcal L_{t-1}}{\mathcal L^2} \EX[R | \mathcal F_t]
- \left(NX \frac{c}{X} - \mathcal L\right) g\left(\frac{c}{X}\right) \frac{\partial c}{\partial \mathcal L} \frac{1}{X} - \int_{\frac{c}{X}}^\infty g(z) dz \\
&\hspace{0.5cm}- \left(\mathcal L - NX\frac{c}{X}\right) g\left(\frac{c}{X}\right) \frac{\partial c}{\partial \mathcal L} \frac{1}{X} + \int_{\frac{c}{X}}^{\frac{b}{X}} \left(3-\frac{\alpha \mathcal D N X z}{2(NXz - \mathcal L)^2}\right) g(z) dz \\
&= \frac{\mathcal D \mathcal L_{t-1}}{\mathcal L^2} \EX[R | \mathcal F_t]
- \int_{\frac{c}{X}}^\infty g(z) dz
+ \int_{\frac{c}{X}}^{\frac{b}{X}} \left(3-\frac{\alpha \mathcal D N X z}{2(NXz - \mathcal L)^2}\right) g(z) dz 
\end{aligned}$$

$$\begin{aligned}
\frac{\partial^2 \psi}{\partial \mathcal L^2}
&= -\frac{2\mathcal D \mathcal L_{t-1}}{\mathcal L^3}  \EX[R | \mathcal F_t]
+ g\left( \frac{b}{X}\right) \frac{\partial b}{\partial L}\frac{1}{X} \left(3-\frac{\alpha \mathcal D N b}{2(Nb - \mathcal L)^2}\right) \\
&\hspace{0.5cm} - g\left( \frac{c}{X}\right) \frac{\partial c}{\partial L}\frac{1}{X} \left(2-\frac{\alpha \mathcal D N c}{2(Nc - \mathcal L)^2}\right)
- \int_{\frac{c}{X}}^{\frac{b}{X}} \frac{\alpha \mathcal D N X z}{(NXz-\mathcal L)^3}g(z)dz.
\end{aligned}$$

Notice that $\frac{\partial b}{\partial L} > 0$, $\frac{\partial c}{\partial L} > 0$, $g \geq 0$, and
$$3-\frac{\alpha \mathcal D N b}{2(Nb - \mathcal L)^2} = 3- \frac{\alpha \mathcal D \beta \mathcal L}{2(\mathcal L (\beta-1))^2} = 3 - \frac{3\alpha \mathcal D}{\mathcal L} < 0$$
by assumption that liquidation repurchase price always $\geq 1$. Additionally, the remaining integral is always positive as the integrand is positive between the limits  and $g \geq 0$. Finally, $\EX[R|\mathcal F_t] \geq 0$ since $(X_t)$ is a submartingale. Thus under the given conditions, $\frac{\partial^2\psi}{\partial\mathcal L^2} \leq 0$ as all terms are $\leq 0$.

Further supposing that either $\EX[R|\mathcal F_t] > 0$ or $\PR\Big(c(L) < XR < b(L)\Big) = \int_{c/X}^{b/X} g(z) dz > 0$, then $\frac{\partial^2 \psi}{\partial \mathcal L^2} < 0$.
\end{proof}

Notice that the $\frac{1}{2}$ in the bound is related to the choice $\beta=\frac{3}{2}$.

\noindent\rule{\textwidth}{1pt}
\paragraph{Proposition~\ref{result:rescalings}} \hypertarget{pf:rescalings}{}
\begin{proof}
Easily verifiable by substitution, noting that factors of $\gamma$ cancel in the integral limits.
\end{proof}

\noindent\rule{\textwidth}{1pt}
\paragraph{Proposition~\ref{result:lt_lb}} \hypertarget{pf:lt_lb}{}
\begin{proof}
The speculator can at most buy back using all its ETH. At time $t$, this amount is the solution $\Delta_t$ to the following
$$\frac{\Delta_t \mathcal D}{L_{t-1}+\Delta_t} + N_{t-1} X_t - L_{t-1} - \Delta_t = 0,$$
supposing there is no liquidation at time $t$. It is straightforward to verify the solution, giving the lower bound: 
$$\Delta_t \geq \frac{1}{2} \Big( -\sqrt{\mathcal D^2 - 4\mathcal D \mathcal L_{t-1} + 2\mathcal D N_{t-1} X_t + N_{t-1}^2 X_t^2} + \mathcal D - 2\mathcal L_{t-1} + N_{t-1} X_t\Big).$$
Note that if the speculator is not solvable at time $t$, then there is no real solution.
\end{proof}

\noindent\rule{\textwidth}{1pt}
\paragraph{Proposition~\ref{result:lt_ub}} \hypertarget{pf:lt_ub}{}
\begin{proof}
As above, consider $X_{t+1} = X_t R_{t+1}$. For notational simplicity, we drop subscripts as follows: $\bar N_t \mapsto N$, $X_{t} \mapsto X$, $\mathcal L_t \mapsto \mathcal L$, $\Delta = \mathcal L_t - \mathcal L_{t-1}$, $c(L_t) \mapsto c$, $b(L_t) \mapsto b$, $g_t \mapsto g$, $R_{t+1} \mapsto R$, $\PR(A_t |\mathcal F_t) \mapsto \PR(A)$, $\PR(B_t |\mathcal F_t) \mapsto \PR(B)$.

Suppose the first condition is true. We have
$$\begin{aligned}
\frac{\partial \psi}{\partial \mathcal L}
&= \frac{\mathcal D \mathcal L_{t-1}}{\mathcal L^2} \EX[R | \mathcal F_t]
- \int_{\frac{c}{X}}^\infty g(z) dz
+ \int_{\frac{c}{X}}^{\frac{b}{X}} \left(3-\frac{\alpha \mathcal D N X z}{2(NXz - \mathcal L)^2}\right) g(z) dz\\
&\leq \frac{\mathcal D \mathcal L_{t-1}}{\mathcal L^2} \EX[R | \mathcal F_t] - \PR(A \cup B) \\
&\leq \frac{\mathcal D \mathcal L_{t-1}}{\mathcal L^2} \EX[R | \mathcal F_t] - \kappa^{-1}.
\end{aligned}$$
Notice this is monotonic decreasing in $\mathcal L$ over the domain, so the critical point will be a bound for the optimal value of $\mathcal L^*$. Setting equal to 0, we have
$$\mathcal L^* \leq \sqrt{ \kappa \mathcal D \mathcal L_{t-1} \EX[R|\mathcal F_t]}.$$

Now suppose the second condition is true instead. We have
$$\begin{aligned}
\frac{\partial \psi}{\partial \mathcal L}
&= \frac{\mathcal D \mathcal L_{t-1}}{\mathcal L^2} \EX[R | \mathcal F_t]
- \int_{\frac{b}{X}}^\infty g(z) dz
+ 2\int_{\frac{c}{X}}^{\frac{b}{X}} g(z) dz
- \int_{\frac{c}{X}}^{\frac{b}{X}} \frac{\alpha \mathcal D N X z}{2(NXz - \mathcal L)^2} g(z) dz\\
&\leq \frac{\mathcal D \mathcal L_{t-1}}{\mathcal L^2} \EX[R | \mathcal F_t]
- \Big( \PR(A) - 2\PR(B) \Big) \\
&\leq \frac{\mathcal D \mathcal L_{t-1}}{\mathcal L^2} \EX[R | \mathcal F_t] - \kappa^{-1}.
\end{aligned}$$
which delivers the desired result as above.
\end{proof}

\noindent\rule{\textwidth}{1pt}
\paragraph{Proposition~\ref{result:stable_range}} \hypertarget{pf:stable_range}{}
\begin{proof}
By assuming $T_{Z_0} > \tau$, we have $Z_0 \geq Z_{t\wedge \tau}$. Applying Proposition~\ref{result:lt_ub} to $Z_t = \frac{\mathcal D}{\mathcal L_t}$ provides $Z_{t\wedge \tau} \geq \sqrt{\frac{\mathcal D}{\kappa \mathcal L_{t\wedge \tau -1} r}}$. Notice that the upper bound on $\mathcal L_t$ and the lower bound on $Z_t$ can be written respectively as increasing and decreasing sequences in $t$ starting from initial state as follows:
$$\overline{\mathcal L}_t = (\kappa \mathcal D r)^{\frac{2^t-1}{2^t}} \mathcal L_0^{\frac{1}{2^t}}.$$
$$\underline{Z}_t = \frac{\mathcal{D}}{(\kappa \mathcal D r)^{\frac{2^t-1}{2^t}} \mathcal L_0^{\frac{1}{2^t}}}.$$
These have limits $\overline{\mathcal L}_\infty = \kappa \mathcal D r$ and $\underline{Z}_\infty = \frac{1}{\kappa r}$ that also bound $\mathcal L_t$ and $Z_t$ respectively.
\end{proof}

\noindent\rule{\textwidth}{1pt}
\paragraph{Proposition~\ref{result:stable_submg}} \hypertarget{pf:stable_submg}{}
\begin{proof}
For $t-1 < \tau$,
$$\frac{\mathcal D}{\EX[\mathcal L_t | \mathcal F_{t-1}]}
\leq \EX\left[ \frac{\mathcal D}{\mathcal L_t} | \mathcal F_{t-1}\right]
\leq \frac{\mathcal D}{\mathcal L_{t-1}}$$
by Jensen's inequality and the condition for $\tau>t-1$. Thus we have
$$\EX[\mathcal L_{t\wedge\tau} | \mathcal F_{t-1}] \geq \mathcal L_{t \wedge \tau -1}$$
and $(\mathcal L_{t\wedge \tau})$ is a submartingale. $(Z_{t\wedge\tau})$ is a supermartingale by condition of $\tau$.

Applying Proposition~\ref{result:stable_range}, $\mathcal L_{t\wedge \tau}$ is bounded above and $Z_{t\wedge \tau}$ is bounded below. Thus they converge almost surely by Doob's martingale convergence theorem.
\end{proof}

\noindent\rule{\textwidth}{1pt}
\paragraph{Proposition~\ref{result:stable_exp_bound}} \hypertarget{pf:stable_exp_bound}{}
\begin{proof}
The first inequality follows from Proposition~\ref{result:stable_range} and supermartingale properties.

Since $Z_{t\wedge \tau}$ is supermartingale, we have $Z_{t-1} \geq \EX[Z_t | \mathcal F_{t-1}]$. Assume $(\EX[R_{t+1} | \mathcal F_t])$ is non-decreasing for $t<\tau$. Then subject to the stopping time $\tau$,
$$\begin{aligned}
\EX[Z_t | \mathcal F_{t-1}]
& \geq \EX\left[ \sqrt{
\frac{\mathcal D}{\kappa \mathcal L_{t-1} \EX[R_{t+1} | \mathcal F_t]}
} | \mathcal F_{t-1} \right]
& \hspace{1cm} \text{(Apply Proposition~\ref{result:lt_ub})} \\
&\geq \sqrt{
\frac{\mathcal D}{\kappa \mathcal L_{t-1} \EX\Big[ \EX[R_{t+1} | \mathcal F_t] | \mathcal F_{t-1}\Big]}
}
& \hspace{1cm} \text{(Jensen's inequality)} \\
&= \sqrt{
\frac{\mathcal D}{\kappa \mathcal L_{t-1} \EX[R_{t+1} | \mathcal F_{t-1}]}
}
& \hspace{1cm} \text{(Tower property)} \\
&\geq \sqrt{
\frac{\mathcal D}{\kappa \mathcal L_{t-1} \EX[R_t | \mathcal F_{t-1}]}
}
\end{aligned}$$
since $\EX[R_{t+1}|\mathcal F_t] \geq \EX[R_t | \mathcal F_{t-1}]$.
\end{proof}

\noindent\rule{\textwidth}{1pt}
\paragraph{Lemma~\ref{result:zprime_submg}} \hypertarget{pf:zprime_submg}{}
\begin{proof}
For $t-1 < \tau\wedge T_m$,
$$\begin{aligned}
\EX\left[ | m-Z_t | |\mathcal F_{t-1}\right]
&\geq | \EX[ m-Z_t | \mathcal F_{t-1}]| \\
&\geq | m-Z_{t-1}|,
\end{aligned}$$
by Jensen's inequality and the condition for $t-1 < T_m$ that $m-Z_{t-1} \geq 0$. Thus $\left( Z^\prime_{t\wedge \tau\wedge T_m}\right)$ is a non-negative submartingale.
\end{proof}

\noindent\rule{\textwidth}{1pt}
\paragraph{Proposition~\ref{result:zprime_exp_max}} \hypertarget{pf:zprime_exp_max}{}
\begin{proof}
Note for $t< \tau\wedge T_m$, have  $Z^*_t \leq m$, and so $Z^{\prime *}_{\tau\wedge T_m -1} \leq m - \frac{1}{\kappa r}$. Thus $Z^{\prime *}_{\tau\wedge T_m} \leq \max\Big( m-\frac{1}{\kappa r}, Z^\prime_{\tau\wedge T_m} \Big)$.

Consider time $t=\tau\wedge T_m$ and note that optional stopping applies since $Z$ is bounded. Denote $W:= m-Z_t$, $E:= \EX[-W | Z_t > m]$, and $p:= \PR(Z_t \leq m)$. From optional stopping, we recall that $m \geq \EX[Z_t] \geq \frac{1}{\kappa r}$, and so $0 \leq \EX[W] \leq m-\frac{1}{\kappa r}$. Then
$$\begin{aligned}
\EX[W] &= \EX[ W \Ind_{Z_t \leq m}] - \EX[ - W \Ind_{Z_t > m}] \\
&\leq p\left( m- \frac{1}{\kappa r}\right) - (1-p) E.
\end{aligned}$$
Combining with $0\leq \EX[W]$, we have $0 \leq p(m - \frac{1}{\kappa r}) - (1-p)E$, which gives
$$p \geq \frac{E}{m-\frac{1}{\kappa r} + E}.$$

Then noting that $(1-p)E \leq E(1- \frac{E}{m-\frac{1}{\kappa r} + E})$, $p \leq 1$, and $\EX[Z_t^\prime] = \EX[W \Ind_{Z_t \leq m}] + \EX[-W \Ind_{Z_t>m}]$,
we have
$$\begin{aligned}
\EX[ Z_t^{\prime *}]
&\leq p \EX[ Z_{t-1}^{\prime *}] + (1-p) E \\
&\leq m - \frac{1}{\kappa r} + E\left( 1 - \frac{E}{m-\frac{1}{\kappa r} + E}\right) \\
&= m-\frac{1}{\kappa r} + \frac{E(m - \frac{1}{\kappa r})}{m - \frac{1}{\kappa r} + E}.
\end{aligned}$$
Notice further that given either of the following conditions
\begin{itemize}
\item $\frac{1}{\kappa r} > m$ and $E > \frac{1}{\kappa r} - m$
\item $\frac{1}{\kappa r} = m$ and $E > 0$
\item $\frac{1}{\kappa r} < m$ ad $E \geq 0$,
\end{itemize}
then
$$0 \leq (1-p)E \leq \frac{E(m-\frac{1}{\kappa r})}{m - \frac{1}{\kappa r} + E} \leq m - \frac{1}{\kappa r}.$$
Thus, recalling we used $t=\tau\wedge T_m$, we get the following result
$$\EX[ Z_{\tau \wedge T_m}^{\prime *}] \leq 2\left( m - \frac{1}{\kappa r}\right).$$
\end{proof}

\noindent\rule{\textwidth}{1pt}
\paragraph{Theorem~\ref{result:zprime_doob_ineq}} \hypertarget{pf:zprime_doob_ineq}{}
\begin{proof}
Given Lemma~\ref{result:zprime_submg} and Proposition~\ref{result:zprime_exp_max} and noting $\EX[Z^\prime_{\tau \wedge T_m}] \leq \EX[ Z^{\prime *}_{\tau \wedge T_m}]$, apply Doob's maximal inequality.
\end{proof}

\noindent\rule{\textwidth}{1pt}
\paragraph{Theorem~\ref{result:zprime_qv_pr}} \hypertarget{pf:zprime_qv_pr}{}
\begin{proof}
Apply Theorem~3.1 in \cite{burkholder73}, noting that $\sup_n \EX[ Z^\prime_{n\wedge \tau \wedge T_m}] \leq \EX[ Z^{\prime *}_{\tau \wedge T_m}]$ by Jensen's inequality.
\end{proof}

\noindent\rule{\textwidth}{1pt}
\paragraph{Theorem~\ref{result:unstable_submg}} \hypertarget{pf:unstable_submg}{}
\begin{proof}
For $S_1 \leq t < S_2$, we have
$$\begin{aligned}
\EX\left[\frac{\mathcal D}{\mathcal L_t} | \mathcal F_{t-1}\right]
&\geq \frac{\mathcal D}{\EX[\mathcal L_t |\mathcal F_{t-1}]} \\
&\geq \frac{\mathcal D}{\mathcal L_{t-1}}
\end{aligned}$$
by Jensen's inequality and the $S_1$ condition $\EX[\mathcal L_t |\mathcal F_{t-1} ] \leq \mathcal L_{t-1}$. Thus $(Z_{S_1\vee t \wedge S_2})$ is a submartingale (though note that it can be a submartingale for more general stopping times than this).

$L$ started at $S_1$ and stopped $S_2$ is a supermartingale (by definition).
\end{proof}

\noindent\rule{\textwidth}{1pt}
\paragraph{Theorem~\ref{result:var_approx}} \hypertarget{pf:var_approx}{}
\begin{proof}
As above, consider $X_{t+1} = X_t R_{t+1}$. For notational simplicity, we drop subscripts as follows: $\bar N_t \mapsto N$, $X_{t-1} \mapsto X$ (notice this is different from previous usage), $\mathcal L_t \mapsto \mathcal L$, $\Delta = \mathcal L_t - \mathcal L_{t-1}$, $c(L_t) \mapsto c$, $b(L_t) \mapsto b$, and $g_t \mapsto g$.

Let $\rho$ be (deterministic) variable representing the outcome of $R_t$, such that now we have the outcome $X_t = X\rho$. And define $h(\rho) = \arg\max_L \psi(\rho,L) = \EX[Y_{t+1} | \mathcal F_t]$. By first order condition, $\frac{\partial}{\partial L} \psi(\rho,h(\rho)) = 0$. The assumptions on $\psi$ provide unique maximum and fulfill conditions of the implicit function theorem, which gives us $\frac{\partial h}{\partial \rho}(\rho)$ exists and
$$\frac{\partial h}{\partial \rho}(\rho) = - \frac{\frac{\partial^2}{\partial \rho \partial L} \psi(\rho,h(\rho))}{\frac{\partial^2}{\partial L^2} \psi(\rho,h(\rho))}.$$

Calculating derivatives using the Leibniz integral rule (recalling $c,b$ are functions of $L$),
$$\begin{aligned}
\frac{\partial^2 \psi}{\partial \rho \partial L}
&= g\left( \frac{c}{X\rho}\right) \frac{c}{X\rho^2} \left(4-\frac{\alpha \mathcal D N c}{2(Nc - \mathcal L)^2}\right)
- g\left( \frac{b}{X\rho}\right) \frac{b}{X\rho^2} \left(3-\frac{\alpha \mathcal D N b}{2(Nb - \mathcal L)^2}\right) \\
&\hspace{0.5cm}+ \int_{\frac{c}{X\rho}}^{\frac{b}{X\rho}} \frac{\alpha \mathcal D N X z (NX\rho z + \mathcal L)}{2(NX\rho z-\mathcal L)^3}g(z)dz.
\end{aligned}$$

$$\begin{aligned}
\frac{\partial^2 \psi}{\partial L^2}
&= -\frac{2\mathcal D \mathcal L_{t-1} \EX[R_{t+1}]}{\mathcal L^3}
+ g\left( \frac{b}{X \rho}\right) \frac{\partial b}{\partial L}\frac{1}{X \rho} \left(3-\frac{\alpha \mathcal D N b}{2(Nb - \mathcal L)^2}\right) \\
&\hspace{0.5cm} - g\left( \frac{c}{X\rho}\right) \frac{\partial c}{\partial L}\frac{1}{X\rho} \left(2-\frac{\alpha \mathcal D N c}{2(Nc - \mathcal L)^2}\right)
- \int_{\frac{c}{X\rho}}^{\frac{b}{X\rho}} \frac{\alpha \mathcal D N X \rho z}{(NX\rho z-\mathcal L)^3}g(z)dz.
\end{aligned}$$

Notice that (and continuing with $\beta=3/2$)
$$3-\frac{\alpha \mathcal D N b}{2(Nb - \mathcal L)^2} = 3- \frac{\alpha \mathcal D \beta \mathcal L}{2(\mathcal L (\beta-1))^2} = 3 - \frac{3\alpha \mathcal D}{\mathcal L} < 0,$$
by assumption that liquidation repurchase price always $\geq 1$. And
$$\begin{aligned}
\frac{\alpha \mathcal D N c}{2(Nc-\mathcal L)^2}
&\leq \frac{\frac{1}{2} \alpha \mathcal D(2\alpha \mathcal D + L - \alpha \mathcal D + L)}{-2\alpha \mathcal D(2\alpha \mathcal D +L) + 2L(\alpha \mathcal D + L) + 2\alpha^2\mathcal D^2 + 2\alpha \mathcal D L + 2L^2} \\
&= \frac{\alpha \mathcal D(\alpha \mathcal D + 2L)}{4(\alpha \mathcal D + L)(2L-\alpha \mathcal D)} \\
&= \frac{\alpha\mathcal D}{12(\alpha\mathcal D + L)} + \frac{\alpha \mathcal D}{3(2L-\alpha\mathcal D)} \\
&\leq \frac{1}{12} + \frac{\alpha\mathcal D}{3(2L-\alpha\mathcal D)}.
\end{aligned}$$
This is $\leq2$ when $L \geq \frac{27}{46}\alpha\mathcal D$. Thus under this condition
$$4-\frac{\alpha \mathcal D N c}{2(Nc-\mathcal L)^2} > 2- \frac{\alpha \mathcal D N c}{2(Nc-\mathcal L)^2} \geq 0.$$

Note that all terms of $\frac{\partial^2 \psi}{\partial \rho \partial L}$ are non-negative and all terms of  $\frac{\partial^2 \psi}{\partial L^2}$ are non-positive. Given $\rho \geq b/X$, we have $g\left(\frac{c}{X\rho}\right)$ and $g\left(\frac{b}{X\rho}\right)$ are increasing in $1/\rho$. Note also that $\frac{\partial b}{\partial L}$, $\frac{\partial c}{\partial L}$, and $\frac{2\mathcal D \mathcal L_{t-1} \EX[R_{t+1}]}{\mathcal L^3}$ are constant in $\rho$. Lastly, the numerator and denominator integrals can be rewritten respectively as
$$\frac{1}{\rho} \int_{c}^{b} \frac{\alpha \mathcal D N z(Nz + \mathcal L)}{2(Nz-\mathcal L)^3}g\left(\frac{z}{X\rho}\right)dz
\text{ and } \int_{c}^{b} \frac{\alpha \mathcal D N z}{(Nz-\mathcal L)^3}g\left(\frac{z}{X\rho}\right)dz$$
and $\frac{\alpha \mathcal D N z(Nz + \mathcal L)}{2(Nz-\mathcal L)^3} \geq \frac{\alpha \mathcal D N z}{(Nz-\mathcal L)^3}$ given $Nz+\mathcal L \geq Nc + \mathcal L > 2$, for which $\mathcal L > 8$ is sufficient. And so the terms in the numerator of $|h^\prime(\rho)|$ are growing by a factor $1/\rho$ faster than the terms in the denominator as $\rho$ decreases, proving (2).

Next, note that under the condition $0<\rho<1$,
$$\frac{b}{X\rho^2} = \frac{\beta L}{NX\rho^2} = \frac{db}{dL} \frac{L}{X\rho^2} \geq \frac{db}{dL} \frac{1}{X\rho}$$
$$\frac{c}{X\rho^2} \geq \frac{dc}{dL} \frac{1}{X\rho^2} \geq \frac{dc}{dL} \frac{1}{X\rho}.$$
The last relation uses the fact that $\frac{dc}{dL} \leq \frac{2\alpha \mathcal D + L}{2(\alpha\mathcal D + L)} + 1 < 2$, and so $c>\frac{dc}{dL}$ under the problem setup.

Next note that for $\rho \leq \frac{\mathcal L}{8}$ and $c \leq X\rho z \leq b$, we have
$$\frac{\alpha \mathcal D N Xz(NX\rho z+\mathcal L)}{2(NX\rho z-\mathcal L)^3} \geq \frac{\alpha DN X\rho z}{(NX\rho z-\mathcal L)^3}.$$
This is because the expression (1) simplifies to $NX\rho z + \mathcal L \geq 2\rho$, (2) to be true over the whole range of $z$, we need $Nc + \mathcal L \geq 2\rho$, and (3) $\rho\leq \frac{\mathcal L}{8}$ is sufficient for this. Thus
$$\int_{\frac{c}{X\rho}}^{\frac{b}{X\rho}} \frac{\alpha \mathcal D N X z (NX\rho z + \mathcal L)}{2(NX\rho z-\mathcal L)^3}g(z)dz
\geq \int_{\frac{c}{X\rho}}^{\frac{b}{X\rho}} \frac{\alpha \mathcal D N X \rho z}{(NX\rho z-\mathcal L)^3}g(z)dz$$
under these conditions.

Then note that all terms in the numerator of $h^\prime(\rho)$ are greater than and grow faster in $1/\rho$ than the comparable terms in the denominator. This leaves the first term in the numerator, which is constant in $\rho$. To get (3), then note that $\varepsilon$ can be chosen such that for $\rho =\varepsilon$, the numerator and denominator are equal.

We can derive the results for $\frac{\partial h}{\partial n}$ in essentially the same way. Alter the above dropping of subscripts with $X_t \mapsto X$, let $n$ be a variable representing the realization of $\bar N_t$, and consider $h$ as a function of $n$. Note the following relevant derivatives.

$$\frac{\partial b}{\partial n} = -\frac{\beta L}{n^2} = - \frac{b}{n}$$

$$\frac{\partial c}{\partial n} = -\frac{1}{2n^2} \Big( \sqrt{\alpha^2 \mathcal D^2 + 4\alpha\mathcal D L + L^2} -\alpha\mathcal D + L\Big) = - \frac{c}{n}$$

$$\begin{aligned}
\frac{\partial^2 \psi}{\partial n \partial L}
&= g\left( \frac{c}{X}\right) \frac{c}{n} \left(2-\frac{\alpha \mathcal D n c}{2(nc - \mathcal L)^2}\right)
- g\left( \frac{b}{X}\right) \frac{b}{n} \left(3-\frac{\alpha \mathcal D n b}{2(nb - \mathcal L)^2}\right) \\
&\hspace{0.5cm}+ \int_{\frac{c}{X}}^{\frac{b}{X}} \frac{\alpha \mathcal D n X z (nX z + \mathcal L)}{2(nX z-\mathcal L)^3}g(z)dz.
\end{aligned}$$
And translating the following to the new notation
$$\begin{aligned}
\frac{\partial^2 \psi}{\partial L^2}
&= -\frac{2\mathcal D \mathcal L_{t-1} \EX[R_{t+1}]}{\mathcal L^3}
+ g\left( \frac{b}{X }\right) \frac{\partial b}{\partial L}\frac{1}{X} \left(3-\frac{\alpha \mathcal D n b}{2(nb - \mathcal L)^2}\right) \\
&\hspace{0.5cm} - g\left( \frac{c}{X}\right) \frac{\partial c}{\partial L}\frac{1}{X} \left(2-\frac{\alpha \mathcal D n c}{2(nc - \mathcal L)^2}\right)
- \int_{\frac{c}{X}}^{\frac{b}{X}} \frac{\alpha \mathcal D n X z}{(nX z-\mathcal L)^3}g(z)dz.
\end{aligned}$$
And by applying implicit function theorem, we get
$$\frac{\partial h}{\partial n}(n) = - \frac{\frac{\partial^2}{\partial n \partial L} \psi(n,h(n))}{\frac{\partial^2}{\partial L^2} \psi(n,h(n))}.$$
From here we can proceed with the same analysis using factors of $\frac{1}{n}$ instead of $\frac{1}{\rho}$.
\end{proof}

\noindent\rule{\textwidth}{1pt}
\paragraph{Theorem~\ref{result:unstable_var}} \hypertarget{pf:unstable_var}{}
\begin{proof}
For notational simplicity, drop subscripts $X_t \mapsto X$, $\bar N_{t-1} \mapsto N$, $\mathcal L_{t-1} \mapsto \mathcal L$. And consider $x$ a realization of $X$ as variable in $h$. Define the function $f(X,n) = \frac{1}{h(X,n)}$ where $n$ represents the realization of $N$. With probability 1, the following are true:

\begin{itemize}
\item $h$ is concave in $x$ and $n$ because $h^\prime$ is decreasing, as shown in the previous result.

\item $f$ is differentiable (wrt $n$ and $x$) over domain using chain rule and implicit function theorem.

\item $f$ is convex:  it's the composition of $1/x$ and $h$, and since $1/x$ is convex and non-increasing and $h$ is concave, so is $f$ (see \cite{boyd09} 3.2.4).

\item $f$ is (strictly) decreasing (in $n$ and $x$) since $h$ is increasing.

\item By assumption, we've restricted $NX$. The derivative of $f$ at the minimum value exists and is bounded.

\item $f$ is non-negative since $h$ is non-negative.

\item $\frac{\partial f}{\partial n}$ is (strictly) increasing in $n$. We have
$$f^\prime (x,n) = -\frac{1}{h(x,n)^2} h^\prime(x,n),$$
where $h^\prime(x,n)$ is derived in the previous proof using the implicit function theorem. $h$ is increasing in $n$ and $h^\prime$ is non-negative and decreasing in $n$. Thus $\frac{h^\prime}{h^2}$ is decreasing in $n$, and so $-\frac{h^\prime}{h^2}$ is increasing.

\item $\frac{\partial h}{\partial n}$ is increasing in $x$. This can be seen using the formulation at the end of the proof for the previous result as terms in $\frac{\partial^2\psi}{\partial L^2}$ grow slower in $x$ (in magnitude) than terms in $\frac{\partial^2\psi}{\partial n \partial L}$. In particular, the first term of $\frac{\partial^2\psi}{\partial L^2}$ is decreasing in magnitude since $L$ is increasing in $x$. And the integral in $\frac{\partial^2 \psi}{\partial n \partial L}$ increases faster in $x$ than the integral in $\frac{\partial^2\psi}{\partial L^2}$, as can be seen by comparing the integrand numerators (a factor of $x^2$ in $\frac{\partial^2 \psi}{\partial n \partial L}$ vs. a factor of $x$ in $\frac{\partial^2\psi}{\partial L^2}$).

\item $\frac{\partial f}{\partial n}$ is (strictly) increasing in $x$ This is because $h$ is increasing in $x$ and $\frac{\partial h}{\partial n}$ is non-negative and increasing in $x$ (previous bullet).
\end{itemize}

Note additionally that, from the system setup assumptions, all of the functions are appropriately bounded.

Thus we can apply Theorem~3.1 in \cite{see08} to get
$$\text{Var}\Big( f(X,N^s) | \mathcal F_{t-1} \Big) < \text{Var}\Big( f(X,N^u) | \mathcal F_{t-1} \Big).$$
Note that the variances exist because $h = \mathcal L_t$ is bounded, as shown in previous results. The variances of $Z_t^s$ and $Z_t^u$ are then obtained by multiplying the above inequality by $\mathcal D^2$.
\end{proof}

\end{document}